\documentclass[]{lmcs}
\pdfoutput=1

\usepackage{lastpage}
\lmcsdoi{20}{1}{20}
\lmcsheading{}{\pageref{LastPage}}{}{}%
{Jul.~28,~2023}{Mar.~07,~2024}{}

\usepackage[utf8]{inputenc}
\usepackage{amssymb}
\usepackage{amsfonts}
\usepackage{mathtools}
\usepackage{multicol}
\usepackage{hyperref}
\usepackage{graphicx}
\usepackage{amsmath}
\usepackage{stmaryrd}
\usepackage{amsthm}

\usepackage{tikz}
\usepackage{pgfplots}
\usetikzlibrary{decorations.markings}
\usetikzlibrary{matrix,backgrounds,folding}
\usetikzlibrary{shapes.geometric, shapes.misc}
\pgfdeclarelayer{edgelayer}
\pgfdeclarelayer{nodelayer}
\pgfsetlayers{background,edgelayer,nodelayer,main}
\tikzset{every path/.style={draw=black!80, line width=0.6pt}}
\tikzstyle{every picture}=[baseline=-0.25em]
\tikzstyle{none}=[inner sep=0mm]
\tikzstyle{black dot}=[inner sep=0.7mm,minimum width=0pt,minimum height=0pt,fill=black,draw=black,shape=circle]
\tikzstyle{dot}=[black dot]
\tikzstyle{white dot}=[dot,fill=white]
\tikzstyle{box}=[rectangle,fill=white,draw=black, font=\scriptsize, inner sep=2pt]
\tikzstyle{arrow}=[decoration={markings,mark=at position 1 with
    {\arrow[scale=1.2,>=stealth]{>}}},postaction={decorate}]
\tikzstyle{box-no-outline}=[rectangle, draw=white, fill=white, inner sep=2pt]
\tikzstyle{polynomial}=[regular polygon, regular polygon sides=3, shape border rotate= 180, fill=white,draw=black,inner sep = -10pt, rounded corners, rounded corners=10pt, minimum width=25pt, font=\scriptsize]

\pgfdeclarelayer{edgelayer}
\pgfdeclarelayer{nodelayer}
\pgfsetlayers{background,edgelayer,nodelayer,main}
\tikzstyle{none}=[inner sep=0mm]
\tikzstyle{every loop}=[]

\usepackage{etoolbox}


\def\fig{}

\newcommand{\eq}[2][~~]{
#1
\underset{\substack{#2}}{=}
#1
}

\newcommand{\rewrite}[1]{
\xrightarrow[\substack{#1}]{}
}
\newcommand{\interp}[1]{\left\llbracket #1 \right\rrbracket}
\newcommand{\abs}[1]{\left\lvert #1 \right\rvert}
\newcommand{\bra}[1]{\ensuremath{\left\langle \textstyle #1 \right|}}
\newcommand{\ket}[1]{\ensuremath{\left| \textstyle #1 \right\rangle}}
\newcommand{\ketbra}[2]{\ket{#1}\!\!\bra{#2}}

\newcommand{\cat}[1]{\mathbf{#1}}
\newcommand{\atan}[1]{\operatorname{arctan}\left(#1\right)}
\newcommand{\Var}{\operatorname{Var}}

\newlength{\delimstretch}

\newcommand{\ascend}[1]{
\setlength{\delimstretch}{\heightof{\ensuremath{#1}}}
\pgfmathsetlength{\delimstretch}{max(\delimstretch,1.5ex)}
\resizebox{!}{1.1\delimstretch}{$\raisebox{0.18ex}{$\upharpoonleft$}\hspace*{-0.94ex}\lfloor$}\ensuremath{#1}\resizebox{!}{1.1\delimstretch}{$\rfloor\hspace*{-0.94ex}\raisebox{0.18ex}{$\upharpoonright$}$}
}

\newcommand{\descend}[1]{
\setlength{\delimstretch}{\heightof{\ensuremath{#1}}}
\pgfmathsetlength{\delimstretch}{max(\delimstretch,1.5ex)}
\resizebox{!}{1.1\delimstretch}{$\raisebox{-0.18ex}{$\downharpoonleft$}\hspace*{-0.94ex}\lceil$}\ensuremath{#1}\resizebox{!}{1.1\delimstretch}{$\rceil\hspace*{-0.94ex}\raisebox{-0.18ex}{$\downharpoonright$}$}
}

\allowdisplaybreaks

\begin{document}

\bibliographystyle{alphaurl}

\title[\textbf{SOP} for Dyadic Fragments of Quantum Computation]
{Rewriting and Completeness of Sum-Over-Paths in Dyadic Fragments of Quantum Computing\rsuper*}
\titlecomment{{\lsuper*}Extended version of the CSL'23 paper \cite{Vilmart2023completeness}.}
\thanks{The author acknowledges support from the PEPR integrated project EPiQ ANR-22-PETQ-0007 part of Plan France 2030, the ANR projects TaQC ANR-22-CE47-0012 and HQI ANR-22-PNCQ-0002, as well as the European project HPCQS.}

\author[R.~Vilmart]{Renaud Vilmart\lmcsorcid{0000-0002-8828-4671}}

\address{Université Paris-Saclay, CNRS, ENS Paris-Saclay, Inria, LMF, 91190, Gif-sur-Yvette, France}

\email{renaud.vilmart@inria.fr}

\keywords{Quantum Computation, Verification, Sum-Over-Paths, Rewrite Strategy, Toffoli-Hadamard, Completeness}

\maketitle

\begin{abstract}
The ``Sum-Over-Paths'' formalism is a way to symbolically manipulate linear maps that describe quantum systems, and is a tool that is used in formal verification of such systems.

We give here a new set of rewrite rules for the formalism, and show that it is complete for ``Toffoli-Hadamard'', the simplest approximately universal fragment of quantum mechanics. We show that the rewriting is terminating, but not confluent (which is expected from the universality of the fragment). We do so using the connection between Sum-over-Paths and graphical language ZH-calculus, and also show how the axiomatisation translates into the latter. We provide generalisations of the presented rewrite rules, that can prove useful when trying to reduce terms in practice, and we show how to graphically make sense of these new rules.

We show how to enrich the rewrite system to reach completeness for the dyadic fragments of quantum computation, used in particular in the Quantum Fourier Transform, and obtained by adding phase gates with dyadic multiples of $\pi$ to the Toffoli-Hadamard gate-set.

Finally, we show how to perform sums and concatenation of arbitrary terms, something which is not native in a system designed for analysing gate-based quantum computation, but necessary when considering Hamiltonian-based quantum computation.
\end{abstract}

\section{Introduction}

Sum-Over-Paths (SOP) is a formalism used to represent and manipulate quantum processes in a symbolic way, introduced in 2018 by Amy \cite{SOP}. Its first important feature is its capacity to translate from most common (gate-based) descriptions of quantum processes in polynomial time and space. The formalism hence provides a intermediary view between usual (matrix) semantics and these usual process descriptions. Its second crucial feature is that it comes equipped with a rewrite system that simplifies the term, without altering its semantics.

Despite its links \cite{MSc.Lemonnier,LvdWK} with graphical languages such as the ZH-calculus \cite{ZH} -- which will be used in the following --, it provides a different view on the quantum processes, representing them as weighted sums of Dirac kets and bras (a very familiar notation in quantum mechanics).

The formalism has seen several applications, the first of which being verification. Verification is a crucial aspect of computations in the quantum realm, where physical constraints (like no-cloning, or the fundamental probabilistic nature of quantum) make it impossible to do debugging the way we do on classical algorithms. More specifically, the SOP formalism was introduced as a solution to circuit equivalence: To check the equivalence between two circuits $\mathcal C_1$ and $\mathcal C_2$, the system represents $\mathcal C_2^\dagger \circ\mathcal C_1$ as an SOP term (where $\mathcal C_2^\dagger$ can be seen as the inverse of $\mathcal C_2$, easy to describe from it). It then tries to reduce it to the identity. If successful, this proves $\mathcal{C}_1$ and $\mathcal C_2$ represent the same unitary. Otherwise, the system searches for a witness that the term at hand does not represent the identity. As such, the system has been used in several different projects (e.g.~\cite{beaudrap2020fast,Kissinger2020reducing}) to check precisely for circuit equivalence. It was later extended to account for families of morphisms and used within the Qbricks environment \cite{chareton2020deductive,chareton2021survey} together with automated solvers to verify algorithms and routines such as quantum phase estimation, Grover's search and Shor's algorithm.

Amongst other applications of the Sum-Over-Paths, we may cite noiseless simulation of quantum processes, where the rewrite strategy is used to reduce the number of variables in the term, effectively decreasing the number of summands when expanding the term to actually compute its semantics. It is for instance one of the simulators implemented in the supercomputer Atos QLM \cite{Haidar2022qlm}.

While the initial suggestion for Sum-Over-Paths focussed on the Clifford+T fragment -- a universal fragment of quantum computing, i.e.~a restriction still capable of approximating with arbitrary precision any quantum process --, it also provided some interesting result for the Clifford fragment. It is known that the latter is not universal \cite{clifford-not-universal}, and actually efficiently simulable with a classical computer, so it is a good test for the relevance of a formalism to check how it handles them. And indeed, it was shown \cite{SOP} a ``weak'' form of confluence of the rewrite system in the Clifford fragment. More precisely, in this fragment, $\mathcal C_2^\dagger \circ\mathcal C_1$ reduces (in polynomial time) to the identity if and only if $\mathcal C_2$ and $\mathcal C_1$ represent the same unitary operator.

However, SOP terms may represent more than unitary operator, but actually any linear map. With those, it is still possible to define the above restrictions, and the rewrite system was extended in \cite{SOP-Clifford} to get confluence for the -- not necessarily unitary -- Clifford fragment. When moving to a universal fragment -- like Clifford+T -- it is expected that we cannot provide a rewrite system with all the good properties of the Clifford case: either reduction is not polynomial, or there is no confluence, or we need an infinite number of rewrites, ... The reason for this is that if we could provide such a system, deciding circuit equivalence would become polynomial, while we know that it is QMA-complete -- a quantum variant of NP-complete -- \cite{Bookatz2014qma,Janzing2005identity}. A weaker property than that of confluence we can ask for is completeness: the question here is to decide whether two equivalent terms can be turned into one another, \emph{with the assumption that rewrites can be used in both directions} (in that case, we rather speak of an equational theory, or axiomatisation, than a rewrite system).

\medskip\noindent
\textbf{Contributions.~}
In this paper, we address the problem of completeness first for arguably the simplest universal fragment of quantum computing, which is \emph{Toffoli-Hadamard}. We provide a fairly simple rewrite system that we show complete for the fragment, and also exhibit two important drawbacks: the non-confluence of the rewrite strategy and the potential explosion of the size of the morphisms during the rewrite. We then show how the rewrite strategy can be tweaked to reach completeness for every dyadic fragment -- where we allow phase gates with phase a multiple of $\frac\pi{2^k}$ for some $k$ --, a restriction that encompasses Clifford, Clifford+T and Toffoli-Hadamard, and is crucially used in the Quantum Fourier Transform, a central block for algorithms such as Shor's and Quantum Phase Estimation. This paper extends \cite{Vilmart2023completeness} which was presented at the Computer Science and Logic 2023 (CSL'23). Here we slightly simplify the rewrite system, provide further potential simplification rules, of which we give a graphical treatment, and finally show how to perform two non-trivial operations on SOP terms, namely their sum and their concatenation, both very useful when building arbitrary maps or when considering Hamiltonian-based computation, and which remain in the dyadic fragment at hand. Since the presentation of \cite{Vilmart2023completeness}, a complete rewrite system was given to Sum-Over-Paths with arbitrary phases and complex entries \cite{amy2023complete}, although in the ``unbalanced'' framework, which is noticeably different.

\medskip\noindent
\textbf{Structure of the paper.~}
We first review the Sum-Over-Paths formalism in \autoref{sec:SOP}, providing a rewrite strategy for Toffoli-Hadamard, as well as generalised rules. We then present the ZH-calculus in \autoref{sec:ZH} and the links between the two in \autoref{sec:SOP<->ZH}, and we show how to interpret the new rewrite rules of SOP graphically. We then show the completeness result for the Toffoli-Hadamard fragment in \autoref{sec:TH-completeness}. The extension to the dyadic fragments is then handled in \autoref{sec:completeness}. Finally, in \autoref{sec:sum-concat}, we show a way to ``control'' arbitrary terms allowing for their sum and their concatenation.

\section{Sums-Over-Paths}
\label{sec:SOP}

\subsection{The Morphisms}

Sums-Over-Paths \cite{SOP} are a way to symbolically describe linear maps of dimensions powers of $2$ over the complex numbers. These linear maps form a $\dagger$-compact monoidal category \cite{mac2013categories,selinger2010survey} denoted $\cat{Qubit}$ where the objects are natural numbers (this makes the category a PROP \cite{Lack-PROP,PhD.Zanasi}), where morphisms from $n$ to $m$ are linear maps $\mathbb C^{2^n}\to \mathbb C^{2^m}$, and where $(\cdot\circ\cdot)$ (resp.~$(\cdot\otimes\cdot)$) is the usual composition (resp.~tensor product) of linear maps. The category is endowed with a \emph{symmetric braiding} $\sigma_{n,m}:n+m\to m+n$, as well as a \emph{compact structure} $(\eta_n:0\to2n,\epsilon_n:2n\to0)$. Furthermore, there exists an inductive contravariant endofunctor $(\cdot)^\dagger$, that behaves properly with the symmetric braiding and the compact structure. For more information on these structures, see \cite{selinger2010survey}.

The formalism of SOP relies heavily on the Dirac notation for quantum states and operators of $\cat{Qubit}$. The two canonical states of a single qubit are denoted $\ket0$ and $\ket1$. They form a basis of $\mathbb C^2$, and can be viewed as vectors $\ket0=\begin{pmatrix}1&0\end{pmatrix}^\intercal$ and $\ket1=\begin{pmatrix}0&1\end{pmatrix}^\intercal$. A 1-qubit state is then merely a normalised linear combination of these two elements. Using $(\cdot\otimes\cdot)$, they can be used to build the basis states of larger systems, e.g.~$\ket{010}:=\ket0\otimes\ket1\otimes\ket0$ is a basis state of a 3-qubit system. Again, the state of an arbitrary $n$-qubit system is a normalised linear combination of the $2^n$ basis states. We will use extensively the following notation $\bra x$ to represent the dagger (transpose conjugate) of $\ket x$. The identity on a qubit can then be expressed in Dirac notation as $\mathbb I:=\ketbra00+\ketbra11$, where $\ketbra xy := \ket x \circ \bra y$.

We give in the following the definition of Sum-Over-Paths of \cite{SOP-Clifford}, which differs from \cite{SOP} in the way the input qubits are treated, by making them more symmetric with the outputs. This makes some concepts, like the $\dagger$ or the compact structure, more natural.

\begin{defi}[$\cat{SOP}$]
We define $\cat{SOP}$ as the collection of objects $\mathbb N$ and morphisms between them that are tuples $f:n\to m := (s, \vec y, P, \vec O, \vec I)$, which we write:
\[ s\sum\limits_{\vec y\in V^k} e^{2i\pi P(\vec y)}\ketbra{\vec O(\vec y)}{\vec I(\vec y)}\]
where $s\in\mathbb{R}$, $\vec y\in V^k$ with $V$ a set of variables, $P\in \mathbb R[X_1,\ldots,X_{k}]/(X_i^2-X_i)$ is called the \emph{phase polynomial} of $f$\footnote{The quotient in the phase polynomial means that we consider each occurrence of the square of a variable to be equal to the variable itself $X_i^2-X_i=0$, since they will be evaluated over $\{0,1\}$. We can further constrain the polynomial by taking it modulo $1$, but only when considered as an element of a group, once all the products have been evaluated, as otherwise all phase polynomials would be evaluated to $0$ as $P = P\times1 = P\times 0 = 0$.}, $\vec O \in \left(\mathbb F_2[X_1,\ldots,X_{k}]\right)^m$ and $\vec I \in \left(\mathbb F_2[X_1,\ldots,X_{k}]\right)^n$ -- $\mathbb F_2$ being the binary field, whose only two elements are its additive and multiplicative identities, denoted $0$ and $1$.

\noindent Compositions are obtained as\footnote{To avoid further clutter, we may not specify the variables of polynomials, e.g.~$P_g$ actually stands for $P_g(\vec y_g)$, $\vec O_g$ for $\vec O_g(\vec y_g)$ etc...}:
\begin{itemize}
\item $f\circ g := \frac{s_fs_g}{2^m}\sum\limits_{\substack{\vec y_f,\vec y_g\\\vec y\in V^m}} e^{2i\pi \left(P_g+P_f+\frac{\widehat{\vec O_g}\cdot \vec y+\widehat{\vec I_f}\cdot \vec y}{2}\right)}\ketbra{\vec O_f}{\vec I_g}$ where $m=\abs{\vec I_f}=\abs{\vec O_g}$
\item $f\otimes g := s_fs_g\sum\limits_{\substack{\vec y_f,\vec y_g}} e^{2i\pi (P_g+P_f)}\ketbra{\vec O_f\vec O_g}{\vec I_f\vec I_g}$
\end{itemize}

\noindent We distinguish particular morphisms:
\begin{itemize}
\item Identity morphisms $id_n : \sum\limits_{\vec y\in V^n}\ketbra{\vec y}{\vec y}$
\item Symmetric braidings $\sigma_{n,m}= \sum\limits_{\vec y_1,\vec y_2}\ketbra{\vec y_2,\vec y_1}{\vec y_1,\vec y_2}$
\item Morphisms for compact structure $\eta_n= \sum\limits_{\vec y} \ketbra{\vec y,\vec y}{}$ and $\epsilon_n= \sum\limits_{\vec y} \ketbra{}{\vec y,\vec y}$
\end{itemize}

\noindent We also distinguish two functors that have $\cat{SOP}$ as a domain:
\begin{itemize}
\item The $\dagger$-functor is given by: $f^\dagger :=  s\sum\limits_{\vec y} e^{-2i\pi P}\ketbra{\vec I}{\vec O}$
\item The functor $\interp{\cdot}:\cat{SOP}\to\cat{Qubit}$ is defined as: $\interp{f}:= s\sum\limits_{\vec y\in \{0,1\}^k} e^{2i\pi P(\vec y)}\ketbra{\vec O(\vec y)}{\vec I(\vec y)}$
\end{itemize}
\end{defi}

The $\dagger$-functor is particularly important to characterise maps that are unitary -- the pure transformations that are allowed by quantum mechanics: $f$ is called unitary if $\interp{f^\dagger\circ f}=id$.

\begin{rem}
In the sequential composition $(\cdot\circ\cdot)$, the term $\widehat{\vec O_g}\cdot\vec y$ is to be understood as
\[\widehat{\vec O_g}\cdot\vec y = \widehat{O_g}_1\cdot y_1 + ... + \widehat{O_g}_m\cdot y_m\]
where $\vec O_g = ({O_g}_1,...,{O_g}_m)$ and $\vec y = (y_1,...,y_m)$; and where the map $\widehat{(\cdot)}:\mathbb F_2[X_1,\ldots,X_{k}]\to \mathbb R[X_1,\ldots,X_{k}]/(X_i^2-X_i)$, which translates a boolean polynomial into a phase polynomial, is inductively defined as:
\[ \widehat{Q_1Q_2}=\widehat{Q_1}\widehat{Q_2}\qquad\qquad \widehat{Q_1\oplus Q_2}=\widehat{Q_1}+\widehat{Q_2}-2\widehat{Q_1}\widehat{Q_2} \qquad\qquad \widehat{y_i}=y_i \qquad\qquad \widehat{\alpha}=\alpha\]
This translation will also be used in most of the upcoming rewrite rules of the system.
\end{rem}

\begin{exa}
\label{ex:H-Tof}
The Hadamard and Toffoli gates (which justify the name of the first fragment we will consider in the following), can be represented in this formalism as:
\[H:=\frac1{\sqrt2}\sum_{y_0,y_1}e^{2i\pi\frac{y_0y_1}2}\ketbra{y_1}{y_0}\qquad\operatorname{Tof}:=\sum_{y_0,y_1,y_2}\ketbra{y_0,y_1,y_2\oplus y_0y_1}{y_0,y_1,y_2}\]
It can be checked that both operators are unitary. To illustrate the compositions of $\cat{SOP}$ terms, we can focus on the following quantum circuit:
\[
\input{./figures/tof-H-compo-example.tikz}
\]
which is a graphical representation of $(H\otimes id\otimes H)\circ \operatorname{Tof}$ (spatial composition represents the tensor product, while sequential composition represents the usual composition). The term $H\otimes id\otimes H$ can be built as follows (renaming the variables so as to avoid collisions):
\begin{align*}
H\otimes id\otimes H& = \left({\textstyle\frac1{\sqrt2}}\!\sum_{y_0,y_1}e^{2i\pi\frac{y_0y_1}2}\ketbra{y_1}{y_0}\right)\!\!\otimes\!\!
\left(\sum_{y_2}e^{2i\pi\times 0}\ketbra{y_2}{y_2}\right)\!\!\otimes\!\!
\left({\textstyle\frac1{\sqrt2}}\!\sum_{y_3,y_4}e^{2i\pi\frac{y_3y_4}2}\ketbra{y_4}{y_3}\right)\\
&= \frac12 \sum_{y_0,...,y_4}e^{2i\pi(\frac{y_0y_1}2+\frac{y_3y_4}2)}\ketbra{y_1,y_2,y_4}{y_0,y_2,y_3}
\end{align*}
The $\cat{SOP}$ term $t=(H\otimes id\otimes H)\circ \operatorname{Tof}$ is then computed as:
\begin{align*}
t
&= \left(\frac12 \sum_{y_0,...,y_4}e^{2i\pi(\frac{y_0y_1}2+\frac{y_3y_4}2)}\ketbra{y_1,y_2,y_4}{y_0,y_2,y_3}\right)\circ \left(\sum_{y_5,y_6,y_7}\ketbra{y_5,y_6,y_7\oplus y_5y_6}{y_5,y_6,y_7}\right)\\
&= \frac1{2^4}\sum_{y_0,...,y_{10}}e^{2i\pi(\frac{y_0y_1}2+\frac{y_3y_4}2+y_8\frac{y_0+y_5}2+y_9\frac{y_2+y_6}2+y_{10}\frac{y_3+\widehat{y_7\oplus y_5y_6}}2)}\ketbra{y_1,y_2,y_4}{y_5,y_6,y_7}\\
&= \frac1{2^4}\sum_{y_0,...,y_{10}}e^{2i\pi(\frac{y_0y_1}2+\frac{y_3y_4}2+y_8\frac{y_0+y_5}2+y_9\frac{y_2+y_6}2+y_{10}\frac{y_3+y_7+ y_5y_6}2)}\ketbra{y_1,y_2,y_4}{y_5,y_6,y_7}
\end{align*}
where the last equality is obtained thanks to $y_{10}\frac{\widehat{y_7\oplus y_5y_6}}2=y_{10}(\frac{y_7+y_5y_6}2-y_5y_6y_7)=y_{10}\frac{y_7+y_5y_6}2$ when taken modulo $1$.
\end{exa}

As is customary, and as was already done in the previous example, we consider equality of the SOP morphisms up to $\alpha$-conversion, i.e.~renaming of the variables.
Notice that the definition of the composition $(\cdot\circ\cdot)$ gets somewhat involved. This is to cope with the way we deal with the inputs, which can be any boolean polynomial. The additional terms $\frac{\vec O_g\cdot \vec y+\vec I_f\cdot \vec y}{2}$ enforce that $O_{gi}=I_{fi}$ for all $0\leq i <m$. Indeed, when summing over the variable $y_i$, we get $(1+e^{i\pi(O_{gi}+I_{fi})})$ -- which is non-null only when $O_{gi}=I_{fi}$ -- as a factor of the whole morphism. This presentation has the advantage of keeping the size of the morphism polynomial with the size of the quantum circuit -- or ZH-diagram, see below -- it can be built from, no matter what gate set is used. A downside, however, is that the above does not directly constitute a category, as for instance $id\circ id\neq id$. However, it suffices to quotient the formalism with rewrite rules to turn it into a proper category \cite{SOP-Clifford}, hence justifying the use of the term ``functor'' for the last two maps.

\subsection{A Rewrite System}

We hence give in \autoref{fig:rewrite-TH} a set of rewrite rules denoted $\rewrite{\text{TH}}$ that induces an equational theory $\underset{\textnormal{TH}}{\sim}$ (the symmetric and transitive closure of $\rewrite{\text{TH}}$).

\begin{figure}[!ht]

\[\sum_{\vec y} e^{2i\pi P}\ketbra{\vec O}{\vec I}
\rewrite{y_0\notin \Var(P,\vec O,\vec I)} 2\sum_{\vec y\setminus\{y_0\}} e^{2i\pi P}\ketbra{\vec O}{\vec I}\tag{Elim}\]
\[t = \sum e^{2i\pi\left(\frac{y_0}{2}(y_i\widehat Q + \widehat{Q'} +1) + R\right)}\ket{\vec O}\!\!\bra{\vec I}
\rewrite{y_0\notin\Var(Q,Q',R,\vec O,\vec I)\\y_i\notin\Var(Q,Q')\\QQ'=Q'} t[y_i\leftarrow 1\oplus Q']\tag{HHgen}\]
\[t=\sum_{\vec y} e^{2i\pi\left(P\right)}|\cdots,\overset{O_i}{\overbrace{y_0{\oplus} O_i'}},\cdots\rangle \!\!\bra{\vec I}
\rewrite{O_i'\neq 0\\y_0\notin \Var(O_1,\ldots,O_{i-1},O_i')}
t[y_0{\leftarrow} O_i]\tag{ket}\]
\vspace*{-0.5em}
\[t=\sum_{\vec y} e^{2i\pi\left(P\right)}\ket{\vec O}\!\!\langle \cdots,\overset{I_i}{\overbrace{y_0{\oplus} I_i'}},\cdots|
\rewrite{I_i'\neq 0\\y_0\notin \Var(\vec O,I_1,\ldots,I_{i-1},I_i')}
t[y_0{\leftarrow} I_i]\tag{bra}\]
\[s\sum_{\vec y} e^{2i\pi\left(\frac{y_0}{2} + R\right)}\ketbra{\vec O}{\vec I}
\rewrite{R\neq 0 \text{ or } \vec O,\vec I\neq \vec 0\\y_0\notin\Var(R,\vec O,\vec I)} \sum_{y_0} e^{2i\pi\left(\frac{y_0}{2}\right)}\ketbra{0,\cdots,0}{0,\cdots,0}\tag{Z}\]
\caption[]{Rewrite system $\rewrite{\text{TH}}$}
\label{fig:rewrite-TH}
\end{figure}

We need in the conditions of all the rules the function $\Var$, that, given a set or list of polynomials, gives the set of all variables used in them. 
We call \emph{internal variable} a variable that is present in the morphism $t$ but not in its inputs/outputs, i.e.~a variable $y_0$ such that $y_0\in\Var(t)\setminus\Var(\vec O,\vec I)$. It is worth noting that searching for an occurrence of, and applying any of these rules \emph{once}, can be done in polynomial time.

The rules (ket) and (bra) correspond to changes of variables that are necessary to get a unique normal form in the Clifford case \cite{SOP-Clifford}, and the rule (Elim) simply gets rid of a variable that is used nowhere in the term and simply contributes to a global phase (since that variable is supposed to range over two values, it contributes to a multiplicative scalar $2$).

The cornerstone rule of \cite{SOP} and \cite{SOP-Clifford} denoted (HH) was given as:
\[t = \sum e^{2i\pi\left(\frac{y_0}{2}(y_i + \widehat{Q}) + R\right)}\ketbra{\vec O}{\vec I}
\rewrite{y_0\notin\Var(Q,R,\vec O,\vec I)\\y_i\notin\Var(Q)} t[y_i\leftarrow Q]\tag{HH}\]
It has been generalised into (HHgen) here ((HH) is the special case where $Q=1$). It is important to notice that the rule (HHgen) requires that $QQ'=Q$. Rule (HH) is so often used that we may, in the following, distinguish it from (HHgen).

\begin{exa}
Recall the term:
\[t = \frac1{2^4}\sum_{y_0,...,y_{10}}e^{2i\pi(\frac{y_0y_1}2+\frac{y_3y_4}2+y_8\frac{y_0+y_5}2+y_9\frac{y_2+y_6}2+y_{10}\frac{y_3+y_7+ y_5y_6}2)}\ketbra{y_1,y_2,y_4}{y_5,y_6,y_7}\]
we got from \autoref{ex:H-Tof}. Its internal variables are $y_0,y_3,y_8,y_9, y_{10}$. It is often the case that the variables created by the sequential composition ($y_8,y_9$ and $y_{10}$ here), can be used right away to rewrite the term using (HH). For instance, the term $\frac{y_8}2(y_0+y_5)$ where $y_8$ appears nowhere else in the morphism, allows for an application of the rule, that will replace $y_0$ by $y_5$ (or vice-versa). We get:
\begin{align*}
t\rewrite{\text{HH}}\frac1{2^4}\sum_{y_1,...,y_{10}}e^{2i\pi(\frac{y_5y_1}2+\frac{y_3y_4}2+y_9\frac{y_2+y_6}2+y_{10}\frac{y_3+y_7+ y_5y_6}2)}\ketbra{y_1,y_2,y_4}{y_5,y_6,y_7}
\end{align*}
The rule (HH) can always be followed by (Elim): here $y_8$ remains in the variables, but isn't used anywhere anymore. $t$ can then be further reduced to:
\[\frac1{2^3}\sum_{\substack{y_i\\i\in\{1,2,3,4,5,6,7,9,10\}}}e^{2i\pi(\frac{y_5y_1}2+\frac{y_3y_4}2+y_9\frac{y_2+y_6}2+y_{10}\frac{y_3+y_7+ y_5y_6}2)}\ketbra{y_1,y_2,y_4}{y_5,y_6,y_7}\]
Proceeding similarly for $y_9$ and $y_{10}$, we get:
\begin{align*}
t&\rewrite{}^*\frac1{2^2}\sum_{\substack{y_i\\i\in\{1,3,4,5,6,7,10\}}}e^{2i\pi(\frac{y_5y_1}2+\frac{y_3y_4}2+y_{10}\frac{y_3+y_7+ y_5y_6}2)}\ketbra{y_1,y_6,y_4}{y_5,y_6,y_7}\\
&\rewrite{}^*\frac1{2}\sum_{\substack{y_i\\i\in\{1,4,5,6,7\}}}e^{2i\pi(\frac{y_5y_1}2+\frac{y_4y_7}2+ \frac{y_4y_5y_6}2)}\ketbra{y_1,y_6,y_4}{y_5,y_6,y_7}
\end{align*}
The term cannot be reduced further with the rewrite system $\rewrite{\text{TH}}$.
\end{exa}
In the following, we assume that (Elim) is always applied after (HH), without necessarily mentioning it.

The rules (HH), (HHgen) and (Z) all stem from a particular observation: In the morphism $t = \sum e^{2i\pi\left(\frac{y_0}{2}\widehat Q + R\right)}\ketbra{\vec O}{\vec I}$ where $y_0$ is internal and not in $R$, if $Q$ is evaluated to $1$, then the whole morphism is interpreted as null. This is exactly what (Z) captures -- and the conditions on $R$, $\vec O$ and $I$ are simply here to avoid applying the rule indefinitely.

The rule (HH) deals with a case where the polynomial $Q$ can be forced to $0$, whilst the rule (HHgen) more generally deals with a case where one of the variables in the polynomial $Q$ is forced to get a precise value due to the form of the polynomial.

The rule (HH) is of particular interest. It was introduced in \cite{SOP} and gives enough power to the formalism to become a $\dagger$-compact PROP \cite{SOP-Clifford}. We can extend this result here thanks to:
\begin{prop}
\label{prop:TH-local}
\[\forall t_1,t_2 \in \cat{SOP},~~t_1\underset{\textnormal{TH}}{\sim}t_2 \implies \begin{cases}
A\circ t_1 \circ B \underset{\textnormal{TH}}{\sim} A\circ t_2 \circ B & \text{for all $A$, $B$ composable}\\
A\otimes t_1 \otimes B \underset{\textnormal{TH}}{\sim} A\otimes t_2 \otimes B & \text{for all $A$, $B$}
\end{cases}\]
\end{prop}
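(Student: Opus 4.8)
The relation $\underset{\textnormal{TH}}{\sim}$ is the symmetric--transitive closure of $\rewrite{\text{TH}}$, and each of the two displayed implications is stable under symmetry, under transitivity, and under the unary operations $A\circ(\cdot)$, $(\cdot)\circ B$, $A\otimes(\cdot)$ and $(\cdot)\otimes A$. It therefore suffices to prove, for a \emph{single} application $t_1\rewrite{\text{TH}}t_2$ of \emph{one} rule of \autoref{fig:rewrite-TH}, the four ``one-sided'' statements
\begin{gather*}
A\circ t_1\underset{\textnormal{TH}}{\sim}A\circ t_2,\qquad t_1\circ B\underset{\textnormal{TH}}{\sim}t_2\circ B,\\
A\otimes t_1\underset{\textnormal{TH}}{\sim}A\otimes t_2,\qquad t_1\otimes A\underset{\textnormal{TH}}{\sim}t_2\otimes A,
\end{gather*}
and then to chain them (reading $A\circ t_1\circ B$ as $(A\circ t_1)\circ B$). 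Along the way I use one elementary lemma, proved by structural induction on Boolean polynomials: the arithmetisation $\widehat{(\cdot)}$ commutes with Boolean substitution, $\widehat{Q[y\leftarrow B]}=(\widehat{Q})[y\leftarrow\widehat{B}]$. This is exactly what makes the substitutions on the right-hand sides of the rules compatible with the defining formulas for $\circ$ and $\otimes$.

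The tensor cases are immediate: in $A\otimes t_1$ (and symmetrically in $t_1\otimes A$) the variables of $A$ are fresh with respect to those of $t_1$, the phase polynomial is the plain sum of the two, and the kets and bras are juxtaposed. Hence a variable internal to $t_1$ stays internal to $A\otimes t_1$, and the pattern and every side condition of whichever rule rewrote $t_1$ into $t_2$ --- each of which constrains only a subpattern of the phase polynomial together with $\vec{O}$ and $\vec{I}$ --- survives verbatim; so the same rule applies to $A\otimes t_1$ and yields $A\otimes t_2$ (the equality of the two right-hand sides being where the arithmetisation lemma enters, for the rules that substitute). The only case needing a further word is (Z): there $A\otimes t_2$ is of the form $A\otimes(\text{absorbing term})$, which still displays an isolated internal variable and therefore rewrites by (Z) --- possibly after discarding now-unused variables by (Elim) --- to the canonical absorbing term of the correct type, which is also what (Z) produces from $A\otimes t_1$.

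For the composition cases I would argue rule by rule, using that $f\circ g$ only replaces the middle wires by fresh summation variables, keeps $\vec{O}_f$ and $\vec{I}_g$ as the new outputs and inputs, and merely \emph{adds} to the phase the terms $P_f$, $P_g$ and $\frac{\widehat{\vec{O}_g}\cdot\vec{y}+\widehat{\vec{I}_f}\cdot\vec{y}}{2}$ --- so that a variable of $t_1$ can occur in the composite only where it already occurred in $t_1$, possibly now as a coefficient of a middle-wire term rather than inside a ket or bra. For (Elim), the erased variable is, by hypothesis, absent from the phase and the I/O of $t_1$, and composition does not reintroduce it, so the rule applies directly. For (HHgen), which subsumes (HH), the side conditions force $y_0\notin\Var(Q,Q',R,\vec{O},\vec{I})$; hence $y_0$ does not occur in the added phase terms and remains absent from the new I/O (namely $\vec{O}_A$, $\vec{I}_{t_1}$ for $A\circ t_1$, resp.\ $\vec{O}_{t_1}$, $\vec{I}_B$ for $t_1\circ B$), so the pattern $\frac{y_0}{2}(\cdots)$ and all side conditions persist, the rule applies, and it yields exactly $A\circ t_2$ (resp.\ $t_2\circ B$) by the arithmetisation lemma. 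For (Z), the isolated internal variable stays isolated after composition, so (Z) applies and produces an absorbing term, while $A\circ t_2$ (resp.\ $t_2\circ B$), being $A\circ(\text{absorbing term})$, likewise reduces by (Z) and (Elim) to the absorbing term of the same type. Finally, for (ket) and (bra): when the relabelled wire is still an output/input of the composite --- i.e.\ (ket) against $t_1\circ B$, or (bra) against $A\circ t_1$ --- the rule applies verbatim; in the two remaining cases the relabelled wire has become a middle wire, but the fresh summation variable that the composition created for exactly that wire then exhibits an (HH)-pattern whose contracted variable is precisely the relabelled $y_0$, and applying (HH) followed by (Elim) to both $A\circ t_1$ and $A\circ t_2$ produces one and the same term.

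The part I expect to need the most care is this last one --- checking that the change of an internal variable implicitly performed by (ket)/(bra) can still be carried out after that wire has been absorbed into a composition --- together with the scalar and wire-type bookkeeping around the absorbing term of rule (Z); everything else is routine once the arithmetisation lemma is available.
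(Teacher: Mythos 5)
Your proposal is correct and follows essentially the same route as the paper: reduce to a single rewrite step, then argue rule by rule, applying (Elim)/(HHgen) directly to the composite (using that arithmetisation commutes with substitution, which the paper uses implicitly), and handling (ket)/(bra) via a common reduct obtained by (HH)+(Elim) on the freshly created middle-wire variable, and (Z) via a common reduct as well. The only differences are presentational — you factor through one-sided compositions and chain them, whereas the paper treats $A\circ t_1\circ B$ in one go — so no further comparison is needed.
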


\begin{proof}
\phantomsection\label{prf:TH-local}
The result is obvious for the tensor product $(.\otimes.)$. For the composition, we show that if $t_1\rewrite{\text{TH}}t_2$ in one step, then $A\circ t_1\circ B \underset{\textnormal{TH}}{\sim} A\circ t_2\circ B$. In other words, we have to show it for every rule in $\rewrite{\text{TH}}$:

\noindent$\bullet$ (Elim): Obvious.

\noindent$\bullet$ (HHgen):
\begin{align*}
A\circ t_1\circ B = \sum e^{2i\pi \left(P_A+P_B+\frac{y_0}2(y_i\widehat Q + \widehat{Q'} +1) + R + \frac{\vec O\cdot \vec x + \vec I_A\cdot\vec x + \vec I\cdot \vec x' + \vec O_B\cdot \vec x'}2\right)} \ketbra{\vec O_A}{\vec I_B}\\
\rewrite{\text{HHgen}}
\sum e^{2i\pi \left(P_A+P_B+\frac{y_0}2(\widehat Q +1) + R[y_i\leftarrow \widehat{1{\oplus}Q'}] + \frac{\vec O[y_i\leftarrow \widehat{1{\oplus}Q'}]\cdot \vec x + \vec I_A\cdot\vec x + \vec I[y_i\leftarrow \widehat{1{\oplus}Q'}]\cdot \vec x' + \vec O_B\cdot \vec x'}2\right)} \raisebox{-2ex}{\hspace*{-2em}$\ketbra{\vec O_A}{\vec I_B}$}\\
= A\circ t_1[y_i\leftarrow {1{\oplus}Q'}]\circ B
= A\circ t_2\circ B
\end{align*}

\noindent$\bullet$ (ket):
\begin{align*}
A\circ t_1\circ B =\hspace*{32em}\\
\sum e^{2i\pi \left(P_A+P_B+P+ \frac{(\widehat O_1 +\widehat I_{A1})x_1 +\ldots + (y_0+\widehat O_i'+\widehat I_{Ai})x_i+\ldots +(\widehat O_m +\widehat I_{Am})x_m + \vec I\cdot \vec x' + \vec O_B\cdot \vec x'}2\right)} \ketbra{\vec O_A}{\vec I_B}\\
\rewrite{\text{HH}}
2\sum e^{2i\pi \left(P_A+P_B+P[y_0\leftarrow \widehat{O_i'{\oplus}I_{Ai}}] + \frac{\vec O[y_0\leftarrow \widehat{O_i'{\oplus}I_{Ai}}]\cdot \vec x + \vec I_A\cdot\vec x + \vec I[y_0\leftarrow \widehat{O_i'{\oplus}I_{Ai}}]\cdot \vec x' + \vec O_B\cdot \vec x'}2\right)} \ketbra{\vec O_A}{\vec I_B}\\
\underset{\text{HH}}\longleftarrow
\sum e^{2i\pi \left(P_A+P_B+P[y_0\leftarrow \widehat{y_0{\oplus}O_i'}] + \frac{\vec O[y_0\leftarrow \widehat{y_0{\oplus}O_i'}]\cdot \vec x + \vec I_A\cdot\vec x + \vec I[y_0\leftarrow \widehat{y_0{\oplus}O_i'}]\cdot \vec x' + \vec O_B\cdot \vec x'}2\right)} \ketbra{\vec O_A}{\vec I_B}\\
=A\circ t_1[y_0\leftarrow {y_0{\oplus}O_i'}]\circ B = A\circ t_2\circ B
\end{align*}

\noindent$\bullet$ (bra): Similar to (ket).

\noindent$\bullet$ (Z):
\begin{align*}
A\circ t_1\circ B = \sum &e^{2i\pi \left(P_A+P_B+\frac{y_0}2 + R + \frac{\vec O\cdot \vec x + \vec I_A\cdot\vec x + \vec I\cdot \vec x' + \vec O_B\cdot \vec x'}2\right)} \ketbra{\vec O_A}{\vec I_B}
\rewrite{\text{Z}}
\sum e^{2i\pi \left(\frac{y_0}2\right)} \ketbra{\vec 0}{\vec 0}\\
&\underset{\text{Z}}\longleftarrow
\sum e^{2i\pi \left(P_A+P_B+\frac{y_0}2 + \frac{\vec 0\cdot \vec x + \vec I_A\cdot\vec x + \vec 0\cdot \vec x' + \vec O_B\cdot \vec x'}2\right)} \ketbra{\vec O_A}{\vec I_B}
= A\circ t_2\circ B \qedhere
\end{align*}
\end{proof}

Thanks to this Proposition, and since $\cat{SOP}/\underset{\textnormal{HH}}{\sim}$ is a $\dagger$-compact PROP by \cite{SOP-Clifford}, we get:
\begin{cor}
$\cat{SOP}/\underset{\textnormal{TH}}{\sim}$ is a $\dagger$-compact PROP.
\end{cor}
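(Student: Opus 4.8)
The plan is a transport-of-structure argument along the canonical quotient $\cat{SOP}/\underset{\textnormal{HH}}{\sim}\twoheadrightarrow\cat{SOP}/\underset{\textnormal{TH}}{\sim}$, exploiting that the former is already a $\dagger$-compact PROP by \cite{SOP-Clifford}. First I would observe that $\underset{\textnormal{HH}}{\sim}\ \subseteq\ \underset{\textnormal{TH}}{\sim}$: as noted in the text, (HH) is the instance of (HHgen) obtained for $Q=1$, so the equational theory generated by (HH) is contained in the one generated by $\rewrite{\text{TH}}$. Consequently, any equation holding between $\underset{\textnormal{HH}}{\sim}$-classes holds a fortiori between $\underset{\textnormal{TH}}{\sim}$-classes.

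Next I would check that the relevant pieces of structure descend to $\underset{\textnormal{TH}}{\sim}$-classes. That $\circ$ and $\otimes$ are well-defined on $\underset{\textnormal{TH}}{\sim}$-classes is exactly \autoref{prop:TH-local} (for $\circ$, instantiate $A$ or $B$ to an identity and chain the two resulting steps, using associativity, which already holds mod $\underset{\textnormal{HH}}{\sim}$). The $\dagger$-functor is not covered by that proposition, so it remains to see that $t_1\rewrite{\text{TH}}t_2$ implies $t_1^\dagger\underset{\textnormal{TH}}{\sim}t_2^\dagger$; since $(\cdot)^\dagger$ merely swaps $\vec O$ with $\vec I$ and negates the phase polynomial, one checks rule by rule that it sends (Elim) and (Z) to instances of themselves, exchanges (ket) with (bra), and sends (HHgen) to an instance of (HHgen) — here one uses that $-\frac{y_0}2(y_i\widehat Q+\widehat{Q'}+1)$ and $\frac{y_0}2(y_i\widehat Q+\widehat{Q'}+1)$ agree modulo $1$ once the variables are evaluated in $\{0,1\}$, so that the side-conditions and the substitution $y_i\leftarrow 1\oplus Q'$ are preserved.

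With these facts in hand, $\cat{SOP}/\underset{\textnormal{TH}}{\sim}$ carries well-defined $\circ$, $\otimes$, identities, braidings $\sigma_{n,m}$, compact maps $(\eta_n,\epsilon_n)$ and $(\cdot)^\dagger$, and every axiom of a $\dagger$-compact PROP — unitality and associativity of $\circ$, bifunctoriality of $\otimes$ with object monoid $(\mathbb N,+,0)$, naturality and the symmetry axioms for $\sigma$, the snake equations for $(\eta_n,\epsilon_n)$, and involutivity of $(\cdot)^\dagger$ together with its compatibility with $\circ$, $\otimes$, $\sigma$ and the compact structure — holds there because it already holds in $\cat{SOP}/\underset{\textnormal{HH}}{\sim}$ by \cite{SOP-Clifford}. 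Equivalently, the projection $\cat{SOP}/\underset{\textnormal{HH}}{\sim}\twoheadrightarrow\cat{SOP}/\underset{\textnormal{TH}}{\sim}$ is a full, identity-on-objects, strict $\dagger$-symmetric monoidal functor, and such a quotient of a $\dagger$-compact PROP is again one. The only step needing anything beyond invoking \autoref{prop:TH-local} is the compatibility of $(\cdot)^\dagger$ with $\underset{\textnormal{TH}}{\sim}$, and I expect that short rule-by-rule check — in particular the modulo-$1$ simplification in the (HHgen) case — to be the only mildly delicate point.
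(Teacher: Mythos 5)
Your proposal is correct and takes essentially the same route as the paper, whose proof of the corollary is exactly the transport-of-structure argument: \autoref{prop:TH-local} makes $\underset{\textnormal{TH}}{\sim}$ a congruence for $(\cdot\circ\cdot)$ and $(\cdot\otimes\cdot)$, and since $\underset{\textnormal{HH}}{\sim}\subseteq\underset{\textnormal{TH}}{\sim}$ ((HH) being the $Q=1$ instance of (HHgen)), all the $\dagger$-compact PROP axioms already valid in $\cat{SOP}/\underset{\textnormal{HH}}{\sim}$ \cite{SOP-Clifford} descend to the coarser quotient. Your additional rule-by-rule verification that $(\cdot)^\dagger$ respects $\underset{\textnormal{TH}}{\sim}$ — including the observation that the negated phase patterns coincide modulo $1$ on $\{0,1\}$-valuations — is a point the paper leaves implicit, and your treatment of it is sound.
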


The set of rules was obviously chosen so as to preserve the semantics:
\begin{prop}[Soundness]
\label{prop:TH-soundness}
For any two $\cat{SOP}$ morphisms $t_1$ and $t_2$, if $t_1\overset{\ast}{\rewrite{\textnormal{TH}}}t_2$, then $\interp{t_1}=\interp{t_2}$.
\end{prop}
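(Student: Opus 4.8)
The plan is to reduce the claim to a single rewrite step and then verify each of the five rules of $\rewrite{\textnormal{TH}}$ separately. Since $t_1\overset{\ast}{\rewrite{\textnormal{TH}}}t_2$ unfolds, by definition, into a finite chain of one-step rewrites and equality of linear maps is transitive, it suffices to prove that $t_1\rewrite{\textnormal{TH}}t_2$ in one step entails $\interp{t_1}=\interp{t_2}$; and since each rule is stated as a transformation of an entire morphism, no reasoning in context is needed here (that is the separate content of \autoref{prop:TH-local}). The one elementary computation used throughout is that, for an integer $n$, $\sum_{y_0\in\{0,1\}}e^{i\pi y_0 n}=1+(-1)^n$, which is $2$ when $n$ is even and $0$ when $n$ is odd. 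To invoke it I will repeatedly use that once all variables are evaluated over $\{0,1\}$ the arithmetisation $\widehat{Q}$ of a Boolean polynomial is $\{0,1\}$-valued and agrees with the Boolean function $Q$, and that $\widehat{Q_1Q_2}=\widehat{Q_1}\widehat{Q_2}$ while $\widehat{Q_1\oplus Q_2}\equiv\widehat{Q_1}+\widehat{Q_2}\pmod 2$.

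The rules (Elim), (ket), (bra) and (Z) are then routine. For (Elim) the summand is independent of $y_0$, so summing over $y_0\in\{0,1\}$ simply doubles it, matching the explicit factor $2$ on the right. For (ket) and (bra) the rewrite is a change of variable: for each fixed valuation of the variables other than $y_0$, the assignment $y_0\mapsto y_0\oplus O_i'$ is an involution of $\{0,1\}$ (this uses $y_0\notin\Var(O_i')$), so reindexing the sum over $\{0,1\}^k$ along this bijection leaves $\interp{t}$ unchanged; the hypotheses $O_i'\neq 0$ and $y_0\notin\Var(O_1,\dots,O_{i-1})$ only serve to keep the rule from looping and to make the resulting term well-formed, and are irrelevant to soundness. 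For (Z) both sides evaluate to the zero map: on the left $\sum_{y_0\in\{0,1\}}e^{i\pi y_0}=0$ factors out the whole sum, and on the right the same cancellation annihilates $\sum_{y_0}e^{i\pi y_0}\ketbra{0\cdots0}{0\cdots0}$.

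The substantive case --- and the step I expect to be the main obstacle --- is (HHgen) (the rule (HH) it generalises being handled by the same argument, or directly as a special case). Here the internal variable is $y_0$, and summing over it forces the bracketed integer $y_i\widehat{Q}+\widehat{Q'}+1$ to be even; rewriting this constraint in $\mathbb F_2$ via the congruences above turns it into the Boolean equation $y_iQ\oplus Q'=1$. A short case split on the Boolean value of $Q'$, using the side condition $QQ'=Q'$ (equivalently: $Q'=1$ implies $Q=1$), shows that this equation is solvable only on the valuations where $Q=1$, and that on those it forces $y_i=1\oplus Q'$ --- exactly the substitution the rule performs. Conversely, carrying out $y_i\leftarrow 1\oplus Q'$ on the right-hand side and using $QQ'=Q'$ once more (now in the form $\widehat{Q}\widehat{Q'}=\widehat{Q'}$, which cancels the cross term) collapses the bracket to $\widehat{Q}+1$, so summing over $y_0$ on the right contributes a factor $2$ precisely on the valuations with $Q=1$ and $0$ elsewhere; the two computations agree, giving $\interp{t_1}=\interp{t_2}$. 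The only place requiring care is the $\widehat{\cdot}$-bookkeeping --- that substituting a Boolean polynomial commutes with $\widehat{\cdot}$ up to the ideal $(X_i^2-X_i)$, and, for phases, modulo $1$ --- everything else being a direct calculation.
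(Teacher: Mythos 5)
Your proposal is correct and takes essentially the same route as the paper: soundness is reduced to a single rewrite step and checked rule by rule, the only substantive case being (HHgen), which the paper handles via the soundness of the stronger rule (HHgen') (\autoref{lem:hhgen'-sound}) by splitting the $y_i$-sum into $y_i=Q'$ and $y_i=1\oplus Q'$ and showing the first branch cancels --- the same parity/cancellation computation you carry out by summing over the internal variable $y_0$ first and analysing the resulting constraint with $QQ'=Q'$. The only cosmetic difference is that the paper cites \cite{SOP,SOP-Clifford} for the previously known rules (Elim), (ket), (bra), (Z) instead of re-verifying them as you do.
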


\begin{proof}
\phantomsection\label{prf:TH-soundness}
We mean to show that for any single step rewrite, the interpretation is preserved. As most of the rules were present in \cite{SOP} or \cite{SOP-Clifford}, this was proven for them. It remains to show the result for (HHgen). The soundness of a stronger version of (HHgen) is proven in upcoming \autoref{lem:hhgen'-sound}. The result is then a direct consequence of this.
\end{proof}

The addition of the rule (HHgen) allows for further reductions:
\begin{exa}
\label{ex:HHgen}
The following morphism:
\[\sum_{\vec y}e^{2i\pi(\frac{y_0y_1y_2}2+\frac{y_2}2+\frac{y_1y_2y_3}2+\frac{y_0y_1y_2y_3}2)}\ketbra{y_3}{y_0}\]
is irreducible using the rules of \cite{SOP} and \cite{SOP-Clifford}. However, here it can be reduced to:
\begin{align*}
\sum_{\vec y}e^{2i\pi(\frac{y_0y_1y_2}2+\frac{y_2}2+\frac{y_1y_2y_3}2+\frac{y_0y_1y_2y_3}2)}\ketbra{y_3}{y_0}
\overset{\text{(HHgen)}}{\rewrite{y_1\leftarrow 1}}
\sum_{y_0,y_2,y_3}e^{2i\pi(\frac{y_0y_2}2+\frac{y_2}2+\frac{y_2y_3}2+\frac{y_0y_2y_3}2)}\ketbra{y_3}{y_0}
\end{align*}
where the rewrite can be made clearer by writing the phase polynomial as $\frac{y_2}2(y_1(y_0+y_3+y_2y_3)+0+1)$ and checking that (obviously) $(y_0+y_3+y_2y_3)\times 0 = 0$.
\end{exa}

\subsection{Generalising Rules}

In \cite{Vilmart2023completeness} another rule was introduced to allow for the proof of completeness:
\[t = \sum e^{2i\pi\left(\frac{y_0}{2}\widehat Q + \frac{y_0'}{2}\widehat{Q'} + R\right)}\ketbra{\vec O}{\vec I}
\rewrite{y_0,y_0'\notin \Var(Q,Q',R,\vec O,\vec I)} 2t[y_0'\leftarrow y_0\oplus y_0Q]\tag{HHnl}\]
This rule was only used when proving that one of the axioms of $\operatorname{ZH}_{\operatorname{TH}}$ from \cite{phase-free-ZH} was provable with $\underset{\operatorname{TH}}\sim$. However, it was proven in \cite{backens2021ZHcompleteness} that this axiom could be replaced by two simpler ones and otherwise provable, hence making (HHnl) unnecessary for completeness, as will be shown in the following of the paper.

Completeness however assumes rules can be used both ways, while we ideally only want to reduce a term when analysing it. With this in mind, it is relevant to look for rules that allow for further reduction, and (HHnl) is one of them.

\begin{exa}
The following term, while irreducible with $\underset{\text{TH}}\longrightarrow$, can be reduced using (HHnl):
\begin{align*}
\sum_{y_0,y_1,y_2,y_3,y_4}e^{2i\pi(\frac{y_0y_1y_2}2+\frac{y_2}2+\frac{y_2y_3y_4}2)}&\ketbra{y_4}{y_0}\\
\overset{\text{(HHnl)}}{\rewrite{y_3\leftarrow y_1\oplus y_0y_1y_2}}
&~2\sum_{y_0,y_1,y_2,y_4}e^{2i\pi(\frac{y_0y_1y_2}2+\frac{y_2}2+\frac{y_1y_2y_4}2+\frac{y_0y_1y_2y_4}2)}\ketbra{y_4}{y_0}
\end{align*}
\end{exa}

As (HHnl) is not necessary for completeness, it should be possible to derive it from the rules of $\underset{\text{TH}}\longrightarrow$:

\begin{lem}
(HHnl) can be derived from $\underset{\text{TH}}\longrightarrow$.
\end{lem}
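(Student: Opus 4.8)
The plan is to derive (HHnl) not by producing an explicit $\rewrite{\text{TH}}$-rewrite sequence, but by appealing to the completeness of $\underset{\textnormal{TH}}{\sim}$ for the Toffoli--Hadamard fragment, the main result of \autoref{sec:TH-completeness}. This is legitimate precisely for the reason recalled just before the lemma: that completeness proof does not use (HHnl) at all, since the single $\operatorname{ZH}$ axiom for which (HHnl) had been introduced in \cite{Vilmart2023completeness} is, by \cite{backens2021ZHcompleteness}, provable from the remaining ones, so there is no circularity.

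First I would check that (HHnl) is sound, i.e.\ $\interp{t}=\interp{2\,t[y_0'\leftarrow y_0\oplus y_0Q]}$. In $t=\sum e^{2i\pi(\frac{y_0}{2}\widehat Q+\frac{y_0'}{2}\widehat{Q'}+R)}\ketbra{\vec O}{\vec I}$ the side condition forces $y_0$ and $y_0'$ to be internal and to occur only in the displayed monomials, so summing them out of $\interp{t}$ contributes the scalar factor $(1+e^{i\pi\widehat Q})(1+e^{i\pi\widehat{Q'}})$, equal to $4$ where $Q=Q'=0$ and to $0$ elsewhere. Since $y_0'$ occurs only as the factor $y_0'$ of $\frac{y_0'}{2}\widehat{Q'}$, and since $\widehat{y_0\oplus y_0Q}=y_0-y_0\widehat Q$ (from the inductive definition of $\widehat{(\cdot)}$ together with $y_0^2=y_0$), the phase of $t[y_0'\leftarrow y_0\oplus y_0Q]$ is $\frac{y_0}{2}(\widehat Q+\widehat{Q'}-\widehat Q\widehat{Q'})+R$; using $Q(Q'\oplus QQ')=0$ and $\widehat{Q'}^2=\widehat{Q'}$ in the quotient ring, one rewrites $\widehat Q+\widehat{Q'}-\widehat Q\widehat{Q'}=\widehat{Q\oplus Q'\oplus QQ'}$, so summing $y_0$ out now contributes $1+e^{i\pi\,\widehat{Q\oplus Q'\oplus QQ'}}$, equal to $2$ where $Q=Q'=0$ and to $0$ elsewhere; the prefactor $2$ then makes the two interpretations agree.

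Now $t$ is a Toffoli--Hadamard morphism, hence so is $2\,t[y_0'\leftarrow y_0\oplus y_0Q]$ (the substitution and multiplication by the integer scalar $2$ keep every phase coefficient half-integral), and the two have equal interpretation; by completeness of $\underset{\textnormal{TH}}{\sim}$ for that fragment they are therefore $\underset{\textnormal{TH}}{\sim}$-equivalent, i.e.\ every instance of (HHnl) is already an instance of $\underset{\textnormal{TH}}{\sim}$ — and when (HHnl) is used inside a dyadic fragment one invokes instead the completeness result of \autoref{sec:completeness}. A more self-contained alternative would be an explicit $\rewrite{\text{TH}}$-derivation: introduce a fresh internal variable by a reverse (HH) step, fuse $\frac{y_0}{2}\widehat Q$ and $\frac{y_0'}{2}\widehat{Q'}$ into $\frac{y_0}{2}\widehat{Q\oplus Q'\oplus QQ'}$ by a forward (HH)/(HHgen) step, then discard the leftover unused variables by (Elim) (which is exactly what supplies the scalar $2$ of (HHnl)). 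I expect the hard part of that route to be the bookkeeping — matching the precise side conditions of the rules and simplifying phase polynomials modulo integer-valued terms at each stage — which is why the completeness argument is the clean one.
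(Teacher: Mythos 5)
Your soundness check and your non-circularity observation are both fine, but the completeness-based argument does not prove the lemma in the generality in which (HHnl) is stated. The rule places no restriction on the residual phase polynomial $R$ or on the global scalar: an instance of (HHnl) is a general $\cat{SOP}$ morphism, which need not lie in $\cat{SOP}[\frac12]$ or in any dyadic fragment (e.g.\ $R$ may contain a monomial with coefficient $\frac13$, or the scalar may be $\frac13$), and then neither completeness theorem you invoke applies at all. Moreover, even for instances inside the dyadic fragments with $k\geq 2$, the completeness statement available for the full (non-primed) fragment is the one for $\underset{\textnormal{TH'}}\sim$, i.e.\ it includes the extra rule $(\sqrt2)$; when the scalar is an odd power of $\frac1{\sqrt2}$ (so that neither side lies in $\cat{SOP}[\frac1{2^{k+1}}]'$), your route only yields derivability from $\rewrite{\textnormal{TH}}$ \emph{plus} $(\sqrt2)$, which is weaker than what the lemma asserts. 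So as a proof of the lemma the completeness detour has a genuine scope gap, independently of the (correct) point that it is not circular.

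The derivation you relegate to a sketch at the end is in fact the paper's proof, and it is short and fully general: one exhibits the single peak term
\[\sum e^{2i\pi \left(\frac{y_0}2 + \frac{y_0y_1(\widehat Q+1)}2 + \frac{y_1y_0'}2 + \frac{y_0'(\widehat{Q'}+1)}2 + R \right)}\ketbra{\vec O}{\vec I},\]
which rewrites by (HHgen) with $y_1\leftarrow 1$ (the relevant side condition holds since the polynomial playing the role of $Q'$ in the rule is $0$) to the left-hand side $\sum e^{2i\pi(\frac{y_0}2\widehat Q+\frac{y_0'}2\widehat{Q'}+R)}\ketbra{\vec O}{\vec I}$ of (HHnl), and by (HH) with $y_1\leftarrow Q'\oplus 1$, followed by (Elim) on $y_0'$ (which supplies the factor $2$), to the right-hand side $2\sum e^{2i\pi(\frac{y_0}2(\widehat Q+\widehat{Q'}+\widehat{QQ'})+R)}\ketbra{\vec O}{\vec I}$. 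This works for arbitrary $R$ and arbitrary scalar, stays entirely inside $\underset{\textnormal{TH}}\sim$, and uses no completeness machinery; the bookkeeping you were worried about reduces to checking the side conditions of two rule applications on this one explicitly chosen term.
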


\begin{proof}
Consider the term:
\[\sum e^{2i\pi \left(\frac{y_0}2 + \frac{y_0y_1(\widehat Q+1)}2 + \frac{y_1y_0'}2 + \frac{y_0'(\widehat{Q'}+1)}2 + R \right)}\ketbra{\vec O}{\vec I}\]
We can rewrite it in two different ways, which give both sides of the (HHnl) rule:
\begin{gather*}
\sum e^{2i\pi \left(\frac{y_0}2 + \frac{y_0y_1(\widehat Q+1)}2 + \frac{y_1y_0'}2 + \frac{y_0'(\widehat{Q'}+1)}2 + R \right)}\ketbra{\vec O}{\vec I}\\
\begin{array}[b]{cc}
\phantom{\text{HHgen}(y_1\leftarrow1)}\Big\downarrow\text{HHgen}(y_1\leftarrow1)&\phantom{\text{HH}(y_1\leftarrow Q'\oplus1)}\Big\downarrow\text{HH}(y_1\leftarrow Q'\oplus1)\\[0.7em]
\sum e^{2i\pi \left(\frac{y_0}2\widehat Q+ \frac{y_0'}2\widehat{Q'} + R\right)}\ketbra{\vec O}{\vec I} &\qquad
2\sum e^{2i\pi \left(\frac{y_0}2(\widehat Q + \widehat{Q'} + \widehat{QQ'}) + R\right)}\ketbra{\vec O}{\vec I}
\end{array}\qedhere
\end{gather*}
\end{proof}

When rewriting SOP-morphisms for simplification or verification, it can be beneficial to not only reduce the number of variables -- which is what all rules but (ket/bra) do --, but also to keep the size of the phase polynomial as short as possible. In that respect, the rule (HHgen) itself can be generalised to:
\[t = \sum e^{2i\pi\left(\frac{y_0}{2}(y_i\widehat Q + \widehat{QQ'} +1) + R\right)}\ketbra{\vec O}{\vec I}
\rewrite{y_0\notin\Var(Q,Q',R,\vec O,\vec I)\\y_i\notin\Var(Q,Q')} t[y_i\leftarrow 1\oplus Q']\tag{HHgen'}\]
where the polynomial $Q'$ can here be smaller (in the number of terms) than the one in (HHgen). However, finding a ``minimal'' $Q'$ for this rule is a hard problem, as it requires the use of boolean Groebner bases \cite{Boolean-Grobner-Bases}, while instances (HHgen) can easily be found. (HHgen) can be seen as a particular case of (HHgen'), where $Q'\leftarrow QQ'$, as $Q\times QQ' = QQ'$.

The previous observation can be made into another rule, which again uses the fact that when a term has a phase polynomial of the form $\frac{y_0}2 \widehat Q + R$ with $y_0$ internal, then we can force $Q$ to be $0$. This means we can replace $R$ by a remainder in the euclidean division of $R$ by $Q$:

\[\sum e^{2i\pi\left(\frac{y_0}{2}\widehat{Q} + \widehat{SQ} + \widehat R\right)}\ketbra{\vec O}{\vec I}
\rewrite{y_0\notin\Var(Q,S,R,\vec O,\vec I)}
\sum e^{2i\pi\left(\frac{y_0}{2}\widehat{Q} +  \widehat R\right)}\ketbra{\vec O}{\vec I}\tag{Rem}\]

Notice that, in contrast with the (HH), (HHgen) and (HHgen'), this rule does not reduce the number of variables, but instead equates polynomials that are equivalent ``modulo $Q$''. In practice, this rule is also hard to use as it again uses Groebner bases to properly implement. The proximity between the last two aforementioned rules is not surprising once we realise that (Rem) can be deduced from (HHgen) and (HHgen'):
\begin{gather*}
\sum e^{2i\pi \left(\frac{y_0}2\left(y_i(1+\widehat Q)+(1+\widehat Q)\widehat S+1\right)+ \frac{\widehat{S}}2 +\frac{y_i}2 +\frac12 + R \right)}\ketbra{\vec O}{\vec I}\\
\begin{array}{cc}
\text{HHgen}(y_i\leftarrow (1\oplus Q)S\oplus 1)\Big\downarrow\qquad\qquad&\qquad\qquad\Big\downarrow\text{HHgen'}(y_i\leftarrow S\oplus 1)\\[0.7em]
\sum e^{2i\pi \left(\frac{y_0}2\widehat Q+ \widehat{SQ} + R\right)}\ketbra{\vec O}{\vec I} &\qquad
\sum e^{2i\pi \left(\frac{y_0}2\widehat Q + R\right)}\ketbra{\vec O}{\vec I}
\end{array}
\end{gather*}

We can show that (HHgen') is sound, which implies that (HHgen) -- as a particular case -- and both (HHnl) and (Rem) -- as a compositions of particular cases of (HHgen') -- are also sound:

\begin{lem}
\label{lem:hhgen'-sound}
(HHgen') is sound.
\end{lem}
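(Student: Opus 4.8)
The statement asserts soundness of (HHgen'), i.e.\ that the two sides of the rule have equal interpretation: for $t = \sum e^{2i\pi(\frac{y_0}{2}(y_i\widehat Q + \widehat{QQ'} + 1) + R)}\ketbra{\vec O}{\vec I}$ obeying the side conditions, $\interp{t} = \interp{t[y_i\leftarrow 1\oplus Q']}$. The plan is to expand both interpretations as sums over $\{0,1\}$-assignments and carry out the inner summation over the internal variable $y_0$ by hand; the resulting scalar factor will behave like a Kronecker delta that pins the variable $y_i$ to the value $1\oplus Q'$, matching the substitution on the right.

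First I would treat the left-hand side. Since $y_0\notin\Var(Q,Q',R,\vec O,\vec I)$, the sum over $y_0\in\{0,1\}$ in $\interp{t}$ factors out as the scalar $1+e^{i\pi(y_i\widehat Q + \widehat{QQ'}+1)}$. I would then simplify it using that under a $\{0,1\}$-assignment the Boolean-to-phase translations $\widehat Q,\widehat{Q'}$ take values in $\{0,1\}$, that $\widehat{QQ'}=\widehat Q\widehat{Q'}$, and that — because $y_i\notin\Var(Q,Q')$ — neither $\widehat Q$ nor $\widehat{Q'}$ depends on $y_i$; this rewrites the scalar as $1-e^{i\pi\widehat Q(y_i+\widehat{Q'})}$, which equals $2$ exactly when $\widehat Q=1$ and $y_i\neq\widehat{Q'}$, and $0$ otherwise. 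Summing next over $y_i\in\{0,1\}$, only $y_i=1-\widehat{Q'}=\widehat{1\oplus Q'}$ survives, and only where $\widehat Q=1$; so $\interp{t}$ collapses to the sum over the remaining variables of $2\,[\widehat Q=1]\,e^{2i\pi R}\ketbra{\vec O}{\vec I}$ with $R,\vec O,\vec I$ all evaluated at $y_i=\widehat{1\oplus Q'}$.

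Next I would compute the right-hand side. The substitution $y_i\leftarrow 1\oplus Q'$ replaces $y_i$ by $\widehat{1\oplus Q'}=1-\widehat{Q'}$ inside the phase polynomial and by $1\oplus Q'$ inside $\vec O,\vec I$; since $\interp{\cdot}$ evaluates over $\{0,1\}$, where $\widehat{\cdot}$ agrees with $\mathbb F_2$-evaluation, these two substitutions are compatible. Under it, $y_i\widehat Q + \widehat{QQ'}+1$ becomes $(1-\widehat{Q'})\widehat Q+\widehat Q\widehat{Q'}+1=\widehat Q+1$, so $t[y_i\leftarrow 1\oplus Q']$ has phase $\frac{y_0}{2}(\widehat Q+1)+R[y_i\leftarrow 1\oplus Q']$ with $y_0$ still internal. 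Factoring its $y_0$-sum gives the scalar $1+e^{i\pi(\widehat Q+1)}=1-e^{i\pi\widehat Q}$, again $2\,[\widehat Q=1]$, and what remains is exactly the left-hand side data at $y_i=\widehat{1\oplus Q'}$. Comparing the two computations yields $\interp{t}=\interp{t[y_i\leftarrow 1\oplus Q']}$.

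I expect no deep obstacle: the arithmetic is short and mirrors the mechanism behind (HH) and (Z), namely that $\frac{y_0}{2}\times(\text{odd})$ annihilates a summand while $\frac{y_0}{2}\times(\text{even})$ preserves it. The care needed is purely in the bookkeeping of the translation $\widehat{\cdot}$ — confirming $\widehat{QQ'}=\widehat Q\widehat{Q'}$ and $\widehat{1\oplus Q'}=1-\widehat{Q'}$ in the quotient ring, that these polynomials are $\{0,1\}$-valued on Boolean inputs, and that substituting a Boolean polynomial into $\vec O,\vec I$ corresponds under $\interp{\cdot}$ to substituting its $\widehat{\cdot}$-image into the phase polynomial — together with using the side conditions $y_0\notin\Var(Q,Q',R,\vec O,\vec I)$ and $y_i\notin\Var(Q,Q')$ exactly at the points where the $y_0$-sum is factored and the $y_i$-constraint is read off.
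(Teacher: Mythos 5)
Your proof is correct and follows essentially the same route as the paper's: expand $\interp{t}$ over Boolean assignments, sum out the internal variable $y_0$ to obtain the factor $1+e^{i\pi(y_i\widehat Q+\widehat{QQ'}+1)}$, and use the side conditions to see that only the branch $y_i=1\oplus Q'$ survives, which is exactly $\interp{t[y_i\leftarrow 1\oplus Q']}$. The only cosmetic difference is that you simplify the factor to an indicator $2[\widehat Q=1,\,y_i\neq\widehat{Q'}]$ and reduce both sides to a common expression, whereas the paper splits the $y_i$-sum into its two branches and identifies the surviving one directly with the right-hand side.
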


\begin{proof}
If $t = \sum_{\vec y\in V^k} e^{2i\pi\left(\frac{y_0}{2}(y_i\widehat Q + \widehat{QQ'} +1) + R\right)}\ketbra{\vec O}{\vec I}$, then:
\begin{align*}
\interp{t}
&= \sum_{\vec y\in \{0,1\}^k} e^{2i\pi\left(\frac{y_0}{2}(y_i\widehat Q + \widehat{QQ'} +1) + R\right)}\ketbra{\vec O}{\vec I}
= \sum_{\vec y\in \{0,1\}^{k-1}} (1+e^{i\pi (y_i\widehat Q + \widehat{QQ'} +1)})e^{2i\pi R}\ketbra{\vec O}{\vec I}\\
&= \sum_{\vec y\in \{0,1\}^{k-2}} (1+e^{i\pi (\widehat{Q'}\widehat Q + \widehat{QQ'} +1)})e^{2i\pi R[y_i\leftarrow \widehat Q']}(\ketbra{\vec O}{\vec I})[y_i\leftarrow Q']\\[-0.8em]
&\tag*{$\displaystyle + \sum_{\vec y\in \{0,1\}^{k-2}} (1+e^{i\pi (\widehat{1\oplus Q'}\widehat Q + \widehat{QQ'} +1)})e^{2i\pi R[y_i\leftarrow \widehat{1\oplus Q'}]}(\ketbra{\vec O}{\vec I})[y_i\leftarrow 1\oplus Q']$}\\
&= 0 +\sum_{\vec y\in \{0,1\}^{k-2}} (1+e^{i\pi (\widehat{1\oplus Q'}\widehat Q + \widehat{QQ'} +1)})e^{2i\pi R[y_i\leftarrow \widehat{1\oplus Q'}]}(\ketbra{\vec O}{\vec I})[y_i\leftarrow 1\oplus Q']\\
&=\interp{t[y_i\leftarrow 1\oplus Q']}\qedhere
\end{align*}
\end{proof}

While we have identified stronger rewrite rules than the ones in $\underset{\operatorname{TH}}\sim$, as the rest of the paper is mainly focussed on completeness, we stick to the rules of $\underset{\operatorname{TH}}\sim$, since, as we shall see later, they are enough for that particular problem.

\section{The $\cat{ZH}$-Calculus}
\label{sec:ZH}

As a foundation towards completeness of the Toffoli-Hadamard fragment of $\cat{SOP}$, we will use a similar result on another formalism: the graphical calculus ZH.

The graphical calculi ZX, ZW and ZH \cite{ZH,interacting,ghz-w} are calculi for quantum computing, with a tight link with the Sum-Over-Paths formalism \cite{MSc.Lemonnier,LvdWK,SOP-Clifford}, and whose completeness was proven in particular for the Toffoli-Hadamard fragment \cite{backens2021ZHcompleteness,zw,phase-free-ZH,zx-toffoli}. 

This fragment of quantum mechanics is approximately universal \cite{toffoli-simple,toffoli}, and it is arguably the simplest one with this property. This is the fragment we will be interested in, in most of the following of the paper; and the associated completeness result will be paramount in the development of the following.

We choose to present here the ZH-calculus, because of its proximity with both $\cat{SOP}$ and the Toffoli-Hadamard fragment. Notice however that there exist translations between all the aforementioned graphical calculi, so by composition, we can connect $\cat{SOP}$ to all of them.

\subsection{The Diagrams and their Interpretation}

$\cat{ZH}$ is a PROP whose morphisms -- read here from top to bottom -- are composed (sequentially $(\cdot\circ\cdot)$ or in parallel $(\cdot\otimes\cdot)$) from Z-spiders and H-spiders:
\begin{itemize}
\item $Z_m^n:n\to m::$
\input{./figures/Z-spider.tikz}
, called Z-spider
\item $H_m^n(r):n\to m::$
\input{./figures/H-spider.tikz}
, called H-spider, with a parameter $r\in\mathbb C$
\end{itemize}
When $r$ is not specified, the parameter in the H-spider is taken to be $-1$.

$\cat{ZH}$ is made a $\dagger$-compact PROP, which means it also has a symmetric structure $\sigma_{n,m}::
\input{./figures/sigma.tikz}
$, a compact structure $\left(\eta_n::
\input{./figures/eta.tikz}
,\epsilon_n::
\input{./figures/epsilon.tikz}
\right)$, and a $\dagger$-functor $(\cdot)^\dagger:\cat{ZH}^{\operatorname{op}}\to\cat{ZH}$. It is defined by: $(Z_m^n)^\dagger := Z_n^m$ and $(H_m^n(r))^\dagger := H_n^m(\overline{r})$ 
where $\overline r$ is the complex conjugate of $r$. 
For convenience, we define two additional spiders:\\

\input{./figures/X-spider.tikz}
 and 
\input{./figures/X-spider-neg.tikz}

The language comes with a way of interpreting the morphisms as morphisms of $\cat{Qubit}$. The standard interpretation $\interp{\cdot}:\cat{ZH}\to\cat{Qubit}$ is a $\dagger$-compact-PROP-functor, defined as:
\[\interp{
\input{./figures/Z-spider.tikz}
} = \ketbra{0^m}{0^n} + \ketbra{1^m}{1^n}
\qquad\qquad\interp{~
\begin{tikzpicture}
	\begin{pgfonlayer}{nodelayer}
		\node [style=none] (0) at (0, 0.25) {};
		\node [style=none] (1) at (0, -0.25) {};
	\end{pgfonlayer}
	\begin{pgfonlayer}{edgelayer}
		\draw (0.center) to (1.center);
	\end{pgfonlayer}
\end{tikzpicture}

~}=\ketbra00+\ketbra11\]
\[\interp{
\input{./figures/H-spider.tikz}
} = \sum_{j_k,i_k\in\{0,1\}}r^{j_1\ldots j_mi_1\ldots i_n}\ketbra{j_1,\ldots,j_m}{i_1,\ldots,i_n}\]
\[\interp{
\input{./figures/eta.tikz}
} = \sum_{i_k\in\{0,1\}}\ket{i_1,\ldots,i_n,i_1,\ldots,i_n}=\interp{
\input{./figures/epsilon.tikz}
}^\dagger\]
\[\interp{
\input{./figures/sigma.tikz}
} = \sum_{i_k,j_k\in\{0,1\}}\ketbra{j_1,\ldots,j_m,i_1,\ldots,i_n}{i_1,\ldots,i_n,j_1,\ldots,j_m}\]

Notice that we used the same symbol for two different functors: the two interpretations $\interp{\cdot}:\cat{SOP}\to\cat{Qubit}$ and $\interp{\cdot}:\cat{ZH}\to\cat{Qubit}$. It should be clear from the context which one is to be used.\\
The language is universal: $\forall f\in \cat{Qubit},~\exists D_f\in\cat{ZH},~~\interp{D_f} = f$. In other words, the interpretation $\interp{\cdot}$ is surjective.

\subsection{Equational Theory}

The language comes with an equational theory, which in particular gives the axioms for a $\dagger$-compact PROP. We will not present it here.

We can easily define a restriction of $\cat{ZH}$ that exactly captures the Toffoli-Hadamard fragment of quantum mechanics \cite{backens2021ZHcompleteness,phase-free-ZH}, as the language generated by:
\[\left\lbrace
\input{./figures/H-spider-minus-1.tikz}
,
\input{./figures/Z-spider.tikz}
,
\begin{tikzpicture}
	\begin{pgfonlayer}{nodelayer}
		\node [style=box] (0) at (0, 0) {$\frac1{\sqrt2}$};
	\end{pgfonlayer}
\end{tikzpicture}

\right\rbrace.\]
Notice that the two black spiders can still be defined if we also define $
\begin{tikzpicture}
	\begin{pgfonlayer}{nodelayer}
		\node [style=box] (0) at (0.125, 0) {$\frac1{\sqrt2^p}$};
	\end{pgfonlayer}
\end{tikzpicture}

:=

^{\otimes p}$. We denote this restriction by $\cat{ZH}_{\operatorname{TH}}$.

This restriction is provided with an equational theory, given in \autoref{fig:ZH-TH-rules}, that makes it complete\footnote{The axiomatisation provided here is based on that of \cite{backens2021ZHcompleteness} together with \cite[Thm.~8.6]{backens2021ZHcompleteness} which simplifies one of the rules. It simplifies the axiomatisation found in \cite{phase-free-ZH}, however the fragment considered is slightly different, as \cite{backens2021ZHcompleteness} does \emph{not} contain the scalar $\frac1{\sqrt2}$, but only $\frac12$. It is however very easy to extend the completeness to the language obtained by adjoining a generator for the $\frac1{\sqrt2}$ scalar, using the additional rule that states that $2*\frac{1}{\sqrt2}*\frac{1}{\sqrt2}=1$. This is what is done in the axiomatisation provided here. A proof that completeness does extend can be adapted from e.g.~\cite[Prop.~1]{JPV}.}.

\begin{figure}[!htb]

\input{./figures/ZH-rule-ZS1.tikz}
\hfill
\input{./figures/ZH-rule-ZS2.tikz}
\hfill
\input{./figures/ZH-rule-HS1.tikz}
\hfill
\input{./figures/ZH-rule-HS2.tikz}
\\[1.5em]
\phantom{.}\hfill
\input{./figures/ZH-rule-BA1.tikz}
\hfill
\input{./figures/ZH-rule-BA2.tikz}
\hfill
\input{./figures/ZH-rule-M.tikz}
\hfill\phantom{.}\\[1.5em]

\input{./figures/ZH-rule-CPH.tikz}
\hfill
\input{./figures/ZH-rule-IP.tikz}

\hfill
\input{./figures/ZH-rule-IV.tikz}
\hfill
\input{./figures/ZH-rule-Z.tikz}
\\[1.5em]
\phantom{.}\hfill
\input{./figures/X-spider.tikz}
\hfill
\input{./figures/X-spider-neg.tikz}
\hfill$
\input{./figures/H-scalar-1_sqrt2_p.tikz}
:=
\input{./figures/H-scalar-1_sqrt2.tikz}
^{\otimes p}$\hfill\phantom{.}
\caption[]{Set of rules $\operatorname{ZH}_{\operatorname{TH}}$ \cite{backens2021ZHcompleteness}.}
\label{fig:ZH-TH-rules}
\end{figure}

\begin{thmC}[\cite{backens2021ZHcompleteness} Completeness of $\cat{ZH}_{\operatorname{TH}}/\operatorname{ZH}_{\operatorname{TH}}$]
\label{thm:ZH-completeness}
\[\forall D_1,D_2\in \cat{ZH}_{\operatorname{TH}},~~\interp{D_1}=\interp{D_2}\iff \operatorname{ZH}_{\operatorname{TH}}\vdash D_1=D_2\]
\end{thmC}

\section{Translations between $\cat{SOP}$ and $\cat{ZH}$}
\label{sec:SOP<->ZH}

\subsection{From $\cat{SOP}$ to $\cat{ZH}$}

It is possible to translate $\cat{SOP}$ morphisms to ZH-diagrams using interpretation $[\cdot]^{\operatorname{ZH}}:\cat{SOP}\to\cat{ZH}$. 
A description of $[\cdot]^{\operatorname{ZH}}:\cat{SOP}\to\cat{ZH}$ was defined in \cite{MSc.Lemonnier,LvdWK} and in \cite{SOP-Clifford}. We choose the latter definition as it fits our definition of $\cat{SOP}$.
\[\left[s \sum_{\vec y} e^{2i\pi P}\ketbra{O_1,\ldots,O_m}{I_1,\ldots,I_n}\right]^{\operatorname{ZH}}:=
\input{./figures/ZH-NF.tikz}
\]
where the row of Z-spiders represents the variables $y_1,\ldots,y_k$.
Informally:
\begin{itemize}
\item each monomial $\alpha y_{i_1}...y_{i_s}$ in $P$ gives a single H-spider with parameter $e^{i\frac\alpha{2\pi}}$ and connected to the Z-spiders that represent $y_{i_1}$,...,$y_{i_s}$
\item each monomial $y_{i_1}...y_{i_s}$ in $O_i$ is represented by 
\input{./figures/and-n.tikz}
 where the inputs are connected to the Z-spiders that represent $y_{i_1}$,...,$y_{i_s}$. Notice that the only (non-zero) constant monomial is 
\input{./figures/constant-monomial.tikz}

\item these monomials are then added to form $O_i$ thanks to 
\input{./figures/xor-n.tikz}

\item the nodes $I_i$ are defined similarly, but upside-down
\end{itemize}
For more details, see \cite{SOP-Clifford}.

\begin{exa}
The $\cat{SOP}$ morphism:
\[\frac{1}{2\sqrt{2}}\sum\limits_{\vec y}e^{2i\pi\left(\frac14y_0+\frac{1}{2}y_4y_0+\frac{1}{8}y_5y_0y_1+\frac{3}{4}y_1y_2y_3+\frac{1}{2}y_0y_3\right)}\ketbra{0,1{\oplus} y_0{\oplus} y_4y_2,y_5}{y_4,y_5{\oplus} y_2{\oplus} y_3}\]
is mapped to 
\input{./figures/SOP-to-ZH-example.tikz}

\end{exa}
The boolean polynomials as defined above are given in their (unique) expanded form. These can easily be shown to be copied through the white node:
\begin{lem}
\label{lem:poly-copy}
\def\fig{polynomial-copy}
\begin{align*}
\input{./figures/\fig/\fig_00.tikz}
\eq{}\input{./figures/\fig/\fig_05.tikz}
\end{align*}
\end{lem}

\begin{proof}
\phantomsection\label{prf:poly-copy}
We prove the result where $Q$'s constant is $0$. The proof where $Q$'s constant is $1$ is very similar, and simply uses the fact that $
\begin{tikzpicture}
	\begin{pgfonlayer}{nodelayer}
		\node [style=none] (0) at (0, -0.375) {};
		\node [style=dot] (1) at (0, 0) {};
		\node [style=none, font={\scriptsize}] (2) at (-0.25, -0.025) {$\neg$};
		\node [style=none] (3) at (0, 0.375) {};
	\end{pgfonlayer}
	\begin{pgfonlayer}{edgelayer}
		\draw (1) to (0.center);
		\draw (3.center) to (1);
	\end{pgfonlayer}
\end{tikzpicture}

$ is copied through the white node \cite{backens2021ZHcompleteness}:
\def\fig{polynomial-copy}
\[
  \ensuremath{
  \begin{array}[b]{rl}
\input{./figures/\fig/\fig_00.tikz}
&\eq{}\input{./figures/\fig/\fig_01.tikz}
\eq{}\input{./figures/\fig/\fig_02.tikz}
\eq{}\input{./figures/\fig/\fig_03.tikz}\\
&\eq{}\input{./figures/\fig/\fig_04.tikz}
\eq{}\input{./figures/\fig/\fig_05.tikz}
\\[-\baselineskip]
\end{array}}
\qedhere
\]
\end{proof}

\noindent The above translation preserves the semantics:
\begin{propC}[\cite{SOP-Clifford}]
\label{prop:zh-preserves-semantics}
$\interp{[\cdot]^{\operatorname{ZH}}}=\interp{\cdot}$.
\end{propC}

\subsection{From $\cat{ZH}$ to $\cat{SOP}$}

Any $\cat{ZH}$-diagram can be understood as a $\cat{SOP}$-morphism. To do so, we use the PROP-functor $[\cdot]^{\operatorname{sop}}:\cat{ZH}\to\cat{SOP}$ defined as:
\[\left[~
\input{./figures/H-spider-phase.tikz}
~\right]^{\operatorname{sop}} := \sum e^{2i\pi \frac{\alpha}{2\pi}x_1\ldots x_ny_1\ldots y_m}\ketbra{y_1,\ldots, y_m}{x_1,\ldots,x_n}\]
\[\left[~
\begin{tikzpicture}
	\begin{pgfonlayer}{nodelayer}
		\node [style=box] (0) at (0, 0) {$s$};
	\end{pgfonlayer}
\end{tikzpicture}

~\right]^{\operatorname{sop}} := s\ketbra{}{} \quad\text{ for }s\in\mathbb R\]
\[\left[~
\input{./figures/Z-spider.tikz}
~\right]^{\operatorname{sop}} := \sum_y \ketbra{y,\ldots, y}{y,\ldots, y}\qquad\qquad
\left[~
\input{./figures/H-spider-0.tikz}
~\right]^{\operatorname{sop}} := \left[~
\input{./figures/H-spider-0-decomp.tikz}
~\right]^{\operatorname{sop}}\]
The functor furthermore maps the symmetric braiding (resp. the compact structure) of $\cat{ZH}$ to the symmetric braiding (resp. the compact structure) of $\cat{SOP}$.

This does not give a full description of $[\cdot]^{\operatorname{sop}}$, as we did not describe the interpretation of the H-spider for all parameters, but only for phases and $0$. However, any H-spider can be decomposed using the previous ones:

\begin{lem}
\label{lem:H-spider-decomp}
For any $r\in\mathbb C$ such that $\abs{r}\notin\{0,1\}$, there exist $s\in\mathbb C$, $\alpha,\beta\in\mathbb R$ such that:
\[\interp{
\input{./figures/H-spider.tikz}
} = \interp{
\input{./figures/H-spider-decomp.tikz}
}\]
\end{lem}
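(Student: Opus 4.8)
\emph{Proof plan.}
The plan is to verify the identity directly at the level of linear maps. Since the standard interpretation $\interp{\cdot}:\cat{ZH}\to\cat{Qubit}$ is a $\dagger$-compact-PROP functor, computing $\interp{\tikzfig{H-spider-decomp}}$ amounts to substituting the explicit entries of the two phase H-spiders (parameters $e^{i\alpha}$ and $e^{i\beta}$) and the scalar $s$ into the tensor contractions encoded by the diagram; so the whole statement reduces to making a finite system of scalar equations over $\mathbb C$ agree with the entries of $\interp{\tikzfig{H-spider}}$, i.e.\ with the tensor that is $1$ everywhere except at the all-ones position, where it equals $r$. A convenient first reduction is to peel off the modulus: connecting the legs of two H-spiders to common Z-spiders yields an H-spider whose parameter is the product of the two, so $\interp{\tikzfig{H-spider}}$ for parameter $r$ equals $\interp{\,\cdot\,}$ of the gadget built from $H(\abs r)$ and $H(e^{i\arg r})$, and it suffices to treat the real-parameter case $\abs r>0$; and by collapsing or duplicating legs through Z-spiders it further suffices to treat the one-input one-output case $H_1^1(\abs r)=\left(\begin{smallmatrix}1&1\\1&\abs r\end{smallmatrix}\right)$.

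Next I would read off from $\tikzfig{H-spider-decomp}$ the explicit scalar function $F$ for which $\interp{\tikzfig{H-spider-decomp}}=\left(\begin{smallmatrix}1&1\\1&F(s,e^{i\alpha},e^{i\beta})\end{smallmatrix}\right)$. The point here is that the ``skeleton'' of the right-hand diagram (its Z-spiders, together with whichever fixed H-spiders it contains) forces the three non-corner entries to be $1$ no matter what $s,\alpha,\beta$ are, so the only genuine constraint is the single equation $F(s,e^{i\alpha},e^{i\beta})=r$. It then remains to solve this equation: one chooses $\alpha$ (and, if needed, $\beta$) to tune the relevant modulus or real part, and $s$ to fix the remaining degree of freedom, after which one checks that the chosen parameters are admissible.

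The heart of the argument — and the only place where the hypothesis $\abs r\notin\{0,1\}$ is genuinely used — is precisely this last admissibility check: $\abs r\neq 0$ keeps us away from the separately-treated $0$-parameter H-spider (and, concretely, keeps a modulus appearing in a denominator nonzero), while $\abs r\neq 1$ is what guarantees that the angle(s) one is forced to pick are not the degenerate value at which a phase H-spider $H(e^{i\cdot})$ would cease to be a legitimate generator — equivalently, it keeps a quantity such as an $\arccos(\cdots)$ strictly inside $(-1,1)$ — so that the right-hand side really is a well-formed $\cat{ZH}$-diagram of the allowed shape. I expect this explicit inversion, together with the boundary bookkeeping that pins down the hypothesis, to be the only non-routine part; once $s,\alpha,\beta$ are in hand the equality of interpretations is immediate (and, if one wanted the equation as a syntactically derivable identity rather than merely a semantic one, it could be re-derived from the $\cat{ZH}$ rules, but that is not needed for the statement as phrased).
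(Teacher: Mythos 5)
Your overall strategy --- compute both interpretations and solve for $s,\alpha,\beta$ --- is the same as the paper's, but two of the concrete steps you lean on would fail. The reduction to the one-input one-output case ``by collapsing or duplicating legs through Z-spiders'' is not sound: a many-legged H-spider is not obtained from $H_1^1(r)$ by Z-spiders alone. For instance, post-composing $H_1^1(r)=\begin{pmatrix}1&1\\1&r\end{pmatrix}$ with the $1\to2$ Z-spider gives a matrix with two zero rows, whereas $H_2^1(r)$ is the all-ones matrix with a single entry $r$. The correct move --- and the paper's first step --- is rule (HS1), which rewrites the arbitrary H-spider into a fixed, parameter-independent skeleton in which the label $r$ sits on an arity-one H-box; since the right-hand diagram shares that skeleton, the lemma reduces to an equality of two one-qubit states. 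Likewise, your preliminary factorisation of $r$ into modulus and phase does not engage with the statement, which asserts equality with one specific diagram shape (two phase H-boxes and a scalar), not with whatever gadget is convenient to assemble.

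Second, the claim that the skeleton of the right-hand diagram forces the non-corner entries to be $1$ for all $s,\alpha,\beta$, leaving a single scalar equation $F=r$, is false: the diagram carries the global scalar $s$, so every entry scales with it. What actually has to be matched are the two states $\frac12\begin{pmatrix}1+r\\1-r\end{pmatrix}$ and $2se^{i\frac\alpha2}\begin{pmatrix}\cos\frac\alpha2\\-ie^{i\beta}\sin\frac\alpha2\end{pmatrix}$, i.e.\ two complex equations: the ratio $\frac{1-r}{1+r}=e^{i(\beta-\frac\pi2)}\tan\frac\alpha2$ determines $\alpha$ and $\beta$, and $s$ is then chosen to fix the normalisation. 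This also shows that your admissibility analysis misplaces the hypothesis: there is no $\arccos$ that must stay inside $(-1,1)$, and a phase H-box $H(e^{i\gamma})$ is a legitimate generator for \emph{every} $\gamma$. What $\abs r\neq1$ actually buys is $r\neq-1$, hence $1+r\neq0$, so the ratio is defined (and $\alpha\neq\pi\bmod 2\pi$, so $s$ is defined); beyond that, $\abs r\in\{0,1\}$ is excluded simply because phases and the $0$-parameter box are the cases $[\cdot]^{\operatorname{sop}}$ already handles directly. So the part you flag as ``the only non-routine part'' is aimed at the wrong constraint, and the structural claims that were supposed to make the rest routine do not hold for the diagram in question.
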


\begin{proof}
\phantomsection\label{prf:H-spider-decomp}
First, thanks to rule (HS1), we have $
\input{./figures/H-spider.tikz}
\eq{}
\input{./figures/H-spider-decomp-aux.tikz}
$. Then, we have:
\def\fig{H-spider-decomp-proof}
\begin{align*}
\interp{\input{./figures/\fig/\fig_00.tikz}}=\frac12\begin{pmatrix}1+r\\1-r\end{pmatrix}=\frac{1+r}2\begin{pmatrix}1\\\frac{1-r}{1+r}\end{pmatrix}
\end{align*}
and
\begin{align*}
\interp{\input{./figures/\fig/\fig_02.tikz}}
=2se^{i\frac\alpha2}\begin{pmatrix}\cos{\frac\alpha2}\\-ie^{i\beta}\sin{\frac\alpha2}\end{pmatrix}
=2se^{i\frac\alpha2}\cos{\frac\alpha2}\begin{pmatrix}1\\e^{i(\beta-\frac\pi2)}\tan{\frac\alpha2}\end{pmatrix}
\end{align*}
Hence, when $\abs{r}\notin\{0,1\}$, we have equality between the two with
$\alpha := 2\atan{\frac{1-r}{1+r}}$, $\beta = \arg\left(\frac{1-r}{1+r}\right)+\frac\pi2$ and $s:=\frac{1+r}{4e^{i\frac\alpha2}\cos{\frac\alpha2}}$ (since $r\neq1$, $\alpha$ is well defined and $\alpha\neq\pi\mod 2\pi$ so $s$ is also well-defined). From this, we get:
\[
\begin{array}[b]{c}
\interp{
\input{./figures/H-spider.tikz}
} = \interp{
\input{./figures/H-spider-decomp.tikz}
}
\\[-\baselineskip]
\end{array}
\qedhere
\]
\end{proof}

As a consequence, we extend the definition of $[\cdot]^{\operatorname{sop}}$ by:
\[\left[~
\input{./figures/H-spider.tikz}
~\right]^{\operatorname{sop}} :=\left[~
\input{./figures/H-spider-decomp.tikz}
~\right]^{\operatorname{sop}} \]

This interpretation of ZH-diagrams as $\cat{SOP}$-morphisms preserves the semantics:
\begin{propC}[\cite{SOP-Clifford}]
\label{prop:sop-preserves-semantics}
$\interp{[\cdot]^{\operatorname{sop}}} = \interp{\cdot}$. In other words, the following diagram commutes:
$$
\input{./figures/sop-interp-cd.tikz}
$$
\end{propC}

The composition of the two interpretations is the identity up to small rewrites:
\begin{propC}[\cite{SOP-Clifford}]
\label{prop:double-interp-is-identity}
$\left[[\cdot]^{\operatorname{ZH}}\right]^{\operatorname{sop}} \underset{\operatorname{TH}}\sim (\cdot)$
\end{propC}

\subsection{Restrictions of $\cat{SOP}$}

Recall that $\cat{ZH}_{\operatorname{TH}}$ exactly captures the Toffoli-Hadamard fragment of quantum mechanics. We can then use the two interpretations to define the Toffoli-Hadamard fragment of $\cat{SOP}$. We actually go a step beyond and define a family of fragments indexed by $n$:

\begin{defi}[$\cat{SOP}{[}\frac1{2^n}{]}$]
We define $\cat{SOP}[\frac1{2^n}]$ as the restriction of $\cat{SOP}$ to morphisms of the form: $\displaystyle t = \frac1{\sqrt2^p}\sum e^{2i\pi \frac P{2^n}}\ketbra{\vec O}{\vec I}$ where $p\in\mathbb Z$ and $P$ has integer coefficients.
\end{defi}

The Toffoli-Hadamard fragment is then the first such restriction ($n=1$):
\begin{prop}
\label{prop:sop1/2-TOF}
$\cat{SOP}[\frac12]$ captures exactly the Toffoli-Hadamard fragment of quantum mechanics.
\end{prop}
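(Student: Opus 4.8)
The plan is to sandwich $\cat{SOP}[\frac12]$ between the two semantics-preserving translations of \autoref{sec:SOP<->ZH} and the fact, recalled above from \cite{backens2021ZHcompleteness,phase-free-ZH}, that $\cat{ZH}_{\operatorname{TH}}$ captures exactly the Toffoli-Hadamard fragment, i.e.~that this fragment is exactly $\{\interp{D}\mid D\in\cat{ZH}_{\operatorname{TH}}\}$. Since $\cat{SOP}[\frac12]$ captures a fragment iff $\{\interp{t}\mid t\in\cat{SOP}[\frac12]\}$ equals that fragment, it suffices to prove $\{\interp{t}\mid t\in\cat{SOP}[\frac12]\}=\{\interp{D}\mid D\in\cat{ZH}_{\operatorname{TH}}\}$, which I would do by two inclusions, each reducing to the claim that one translation maps the generators of one (sub-)PROP into the other restriction.

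For the inclusion from right to left, I would use that $[\cdot]^{\operatorname{sop}}$ is a semantics-preserving PROP-functor (\autoref{prop:sop-preserves-semantics}) and show $[D]^{\operatorname{sop}}\in\cat{SOP}[\frac12]$ for every $D\in\cat{ZH}_{\operatorname{TH}}$; then $\interp{D}=\interp{[D]^{\operatorname{sop}}}$ gives the result. Here one checks, first, that $\cat{SOP}[\frac12]$ contains the structural morphisms ($id$, $\sigma$, $\eta$, $\epsilon$ all have scalar $1$ and phase $0$) and is closed under $(\cdot\circ\cdot)$ and $(\cdot\otimes\cdot)$: in the composition formulas the prefactors $s_fs_g$ and $\frac{s_fs_g}{2^m}$ stay of the form $\frac1{\sqrt2^p}$ with $p\in\mathbb Z$, and since $\widehat{(\cdot)}$ applied to a boolean polynomial produces a polynomial with integer coefficients (immediate from its inductive definition), the added term $\frac{\widehat{\vec O_g}\cdot\vec y+\widehat{\vec I_f}\cdot\vec y}{2}$ keeps the phase polynomial of the shape $\frac P2$ with $P$ of integer coefficients; and second, that $[\cdot]^{\operatorname{sop}}$ sends each $\cat{ZH}_{\operatorname{TH}}$-generator into $\cat{SOP}[\frac12]$, namely the $Z$-spider to $\sum_y\ketbra{y,\ldots,y}{y,\ldots,y}$, the $\frac1{\sqrt2}$-scalar to $\frac1{\sqrt2}\ketbra{}{}$, and the $H$-spider with parameter $-1=e^{i\pi}$, through the phase case of the definition, to $\sum e^{2i\pi\frac12 x_1\cdots x_n y_1\cdots y_m}\ketbra{y_1,\ldots,y_m}{x_1,\ldots,x_n}$, whose phase polynomial is $\frac P2$ with $P$ a single monomial. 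Note that on $\cat{ZH}_{\operatorname{TH}}$ only these directly defined cases of $[\cdot]^{\operatorname{sop}}$ occur, so the decomposition of \autoref{lem:H-spider-decomp} is not needed; functoriality then yields $[D]^{\operatorname{sop}}\in\cat{SOP}[\frac12]$ for all $D\in\cat{ZH}_{\operatorname{TH}}$.

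For the converse inclusion I would dually use $\interp{t}=\interp{[t]^{\operatorname{ZH}}}$ (\autoref{prop:zh-preserves-semantics}) and argue that for $t\in\cat{SOP}[\frac12]$ the diagram $[t]^{\operatorname{ZH}}$ is equal, at least semantically (e.g.~via the equational theory), to a diagram of $\cat{ZH}_{\operatorname{TH}}$. Inspecting the building blocks of the translation: the global scalar $\frac1{\sqrt2^p}$ with $p\in\mathbb Z$ is built from the $\frac1{\sqrt2}$-scalar and its formal inverse $\sqrt2$, which is expressible in $\cat{ZH}_{\operatorname{TH}}$ (as $2\cdot\frac1{\sqrt2}$, the scalar $2$ being a legless $Z$-spider); each phase monomial of $t$ has coefficient in $\frac12\mathbb Z$, so modulo $1$ it is $0$ (no $H$-spider) or $\frac12$ (an $H$-spider of parameter $e^{2i\pi\cdot\frac12}=-1$, a generator); and the conjunction-, exclusive-or-, constant- and negation-gadgets used to encode the boolean polynomials $\vec O$ and $\vec I$ are the standard $\cat{ZH}_{\operatorname{TH}}$ constructions of \cite{backens2021ZHcompleteness,phase-free-ZH}. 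Hence $\interp{[t]^{\operatorname{ZH}}}\in\{\interp{D}\mid D\in\cat{ZH}_{\operatorname{TH}}\}$, i.e.~belongs to the Toffoli-Hadamard fragment, which completes the argument. The part I expect to need the most care is this last inclusion, specifically verifying that every gadget occurring in the normal form produced by $[\cdot]^{\operatorname{ZH}}$ -- above all the AND/XOR encodings and the scalars $\frac1{\sqrt2^p}$ for negative $p$ -- really lies in the sub-PROP generated by the three generators of $\cat{ZH}_{\operatorname{TH}}$, together with the observation that reducing phase coefficients modulo $1$ genuinely removes every $H$-spider of parameter other than $-1$; these verifications are routine but rely on recalling the standard phase-free ZH encodings.
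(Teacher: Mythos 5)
Your proposal follows essentially the same route as the paper: the paper's proof also establishes the two inclusions $[\cat{ZH}_{\operatorname{TH}}]^{\operatorname{sop}}\subseteq \cat{SOP}[\frac12]$ and $[\cat{SOP}[\frac12]]^{\operatorname{ZH}}\subseteq \cat{ZH}_{\operatorname{TH}}$ (left there as ``straightforward verifications'', which you spell out, including closure of $\cat{SOP}[\frac12]$ under composition) and then chains them with the semantics-preservation of the two translations to get $\interp{\cat{SOP}[\frac12]}=\interp{\cat{ZH}_{\operatorname{TH}}}$. The only cosmetic difference is that you allow the second inclusion to hold merely up to semantic equality of diagrams, which is weaker than the paper's syntactic inclusion but suffices for the statement.
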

\begin{proof}
We can prove this by showing that $\left[\cat{ZH}_{\operatorname{TH}}\right]^{\operatorname{sop}}\subseteq \cat{SOP}[\frac12]$ and that $\left[\cat{SOP}[\frac12]\right]^{\operatorname{ZH}}\subseteq \cat{ZH}_{\operatorname{TH}}$. The two claims are straightforward verifications, and use the fact that compositions of $\cat{SOP}[\frac12]$-morphisms give $\cat{SOP}[\frac12]$-morphisms.\\
Then, $\interp{\cat{ZH}_{\operatorname{TH}}}=\interp{\left[\cat{ZH}_{\operatorname{TH}}\right]^{\operatorname{sop}}}\subseteq\interp{\cat{SOP}[\frac12]} = \interp{\left[\cat{SOP}[\frac12]\right]^{\operatorname{ZH}}}\subseteq \interp{\cat{ZH}_{\operatorname{TH}}}$, so:
\[\interp{\cat{SOP}[\textstyle\frac12]} = \interp{\cat{ZH}_{\operatorname{TH}}}\qedhere\]
\end{proof}
Notice in particular that the Hadamard and Toffoli gates given in \autoref{ex:H-Tof} lie in this fragment; and that compositions of elements of this fragment remain in the fragment. Not all of $\cat{SOP}[\frac12]$ can be generated by the two gates $H$ and $\operatorname{Tof}$, however, as $\cat{SOP}[\frac12]$ comprises linear maps that are not unitary, i.e.~such that $\interp{t^\dagger\circ t}\neq id$.

\subsection{The Rewrite System $\underset{\operatorname{TH}}\longrightarrow$, Graphically}
\label{sec:rewrites-in-ZH}

Before moving on to show the result of completeness the Toffoli-Hadamard fragment of $\cat{SOP}$, it is worth checking how the rules of $\rewrite{\operatorname{TH}}$ translate in $\cat{ZH}$. We will focus on the rules mentioned above, that were not present in the previous works on $\cat{SOP}$, namely (HHgen), (HHgen'), (HHnl) and (Rem).

\def\fig{HHgen}
Rule (HHgen) uses a side condition, that is $QQ'=Q$, which translates to
\begin{align}
\label{eq:QQp=Qp}
\input{./figures/\fig/\fig_00.tikz}
\eq{}\input{./figures/\fig/\fig_01.tikz}.
\end{align}
The pattern $\frac{y_0}2(y_i\widehat{Q}+\widehat{Q}'+1)$ is represented by:
\begin{align*}
\input{./figures/\fig/\fig_02.tikz}
\end{align*}
and can be rewritten into:
\begin{align*}
\input{./figures/\fig/\fig_02.tikz}
&\eq{\eqref{eq:QQp=Qp}\\\text{(HS2)}\\\text{(IV)}}\input{./figures/\fig/\fig_03.tikz}
\eq{\ref{lem:poly-copy}}\input{./figures/\fig/\fig_04.tikz}
\eq{\text{(BA2)}}\input{./figures/\fig/\fig_05.tikz}\\
&\eq{\text{(CPH)}}\input{./figures/\fig/\fig_06.tikz}
\eq{}\input{./figures/\fig/\fig_07.tikz}
\eq{\ref{lem:poly-copy}}\input{./figures/\fig/\fig_08.tikz}
\end{align*}
where the last equality (that uses \autoref{lem:poly-copy}) is the substitution $[y_i\leftarrow Q'\oplus1]$.

Rule (HHgen') can be proven graphically exactly in the same way, except we start from the second diagram in the derivation above.

Rule (HHnl) turns an occurrence of $\frac{y_0}2\widehat Q+\frac{y_0'}2\widehat Q'$ into $\frac{y_0}2(\widehat Q+\widehat Q'+\widehat Q\widehat Q')$, when the two variables are linked to nothing else than their respective polynomials $Q$ and $Q'$. The induced $\cat{ZH}$ identity can be derived using its rules:
\def\fig{HHnl}
\begin{align*}
\input{./figures/\fig/\fig_00.tikz}
&\eq{\text{(M)}}\input{./figures/\fig/\fig_01.tikz}
\eq{}\input{./figures/\fig/\fig_02.tikz}
\eq{\text{(CPH)}}\input{./figures/\fig/\fig_03.tikz}
\eq{\text{(HS2)}}\input{./figures/\fig/\fig_04.tikz}\\
&\eq{}\input{./figures/\fig/\fig_05.tikz}
\eq{}\input{./figures/\fig/\fig_06.tikz}
\eq{\text{(M)}}\input{./figures/\fig/\fig_07.tikz}
\eq{}\input{./figures/\fig/\fig_08.tikz}
\end{align*}
Although the overall number of nodes usually increases, the number of white nodes that amount to $\cat{SOP}$-variables (i.e. white nodes that are not part of a polynomial) decreases.

Finally, Rule (Rem) turns the phase polynomial $\frac{y_0}2\widehat{Q}+\widehat{SQ}+\widehat{R}$ into $\frac{y_0}2\widehat{Q}+\widehat{R}$, which can be derived graphically as:
\def\fig{Rem}
\begin{align*}
\input{./figures/\fig/\fig_00.tikz}
&\eq{}\input{./figures/\fig/\fig_01.tikz}
\eq{\ref{lem:poly-copy}}\input{./figures/\fig/\fig_02.tikz}
\eq{\text{(BA1)}}\input{./figures/\fig/\fig_03.tikz}\\
&\eq{}\input{./figures/\fig/\fig_04.tikz}
\eq{\ref{lem:poly-copy}}\input{./figures/\fig/\fig_05.tikz}
\end{align*}

\section{Completeness for Toffoli-Hadamard}
\label{sec:TH-completeness}

In this section, we aim to show that the set of rules $\rewrite{\text{TH}}$ captures the whole Toffoli-Hadamard fragment of quantum mechanics. We do so by transporting the similar result from $\cat{ZH}_{\operatorname{TH}}$ to $\cat{SOP}[\frac12]$. First, we show:

\begin{prop}
\label{prop:TH-proves-TH}
$\forall D_1,D_2\in \cat{ZH}_{\operatorname{TH}},~~ \operatorname{ZH}_{\operatorname{TH}}\vdash D_1=D_2\implies [D_1]^{\operatorname{sop}}\underset{\textnormal{TH}}\sim [D_2]^{\operatorname{sop}}$
\end{prop}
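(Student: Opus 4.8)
The plan is to reduce the statement to a finite check: since $\operatorname{ZH}_{\operatorname{TH}}\vdash D_1 = D_2$ means there is a finite derivation of $D_1 = D_2$ using the axioms of $\operatorname{ZH}_{\operatorname{TH}}$ (together with reflexivity, symmetry, transitivity, and the congruence rules for $(\cdot\circ\cdot)$ and $(\cdot\otimes\cdot)$), it suffices to handle each of these ingredients separately. For the structural ingredients this is easy: symmetry and transitivity of $\operatorname{ZH}_{\operatorname{TH}}\vdash$ are matched by the fact that $\underset{\operatorname{TH}}\sim$ is by definition the symmetric and transitive closure of $\rewrite{\operatorname{TH}}$, reflexivity is trivial, and the congruence rules are precisely covered by \autoref{prop:TH-local}, using that $[\cdot]^{\operatorname{sop}}$ is a PROP-functor (so $[D\circ D']^{\operatorname{sop}} = [D]^{\operatorname{sop}}\circ [D']^{\operatorname{sop}}$ and similarly for $\otimes$). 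Thus the whole proposition boils down to showing, for each single axiom $L = R$ in \autoref{fig:ZH-TH-rules}, that $[L]^{\operatorname{sop}}\underset{\operatorname{TH}}\sim [R]^{\operatorname{sop}}$.

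The core of the proof is therefore a case analysis over the rules of \autoref{fig:ZH-TH-rules}: (ZS1), (ZS2), (HS1), (HS2), (BA1), (BA2), (M), (CPH), (IP), (IV), (Z), plus the definitional equations for the X-spiders and for $\frac1{\sqrt2}^{\otimes p}$. For each, I would first compute the two $\cat{SOP}$-morphisms $[L]^{\operatorname{sop}}$ and $[R]^{\operatorname{sop}}$ explicitly from the definition of $[\cdot]^{\operatorname{sop}}$ given in \autoref{sec:SOP<->ZH} (using \autoref{lem:H-spider-decomp} where an H-spider with non-phase, non-zero parameter appears, e.g. in (HS1)), and then exhibit a sequence of rewrites in $\rewrite{\operatorname{TH}}$ — essentially (Elim), (HH)/(HHgen), (ket)/(bra), (Z) — connecting them, using both directions as allowed. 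Most of these are small: for instance the spider-fusion rules (ZS1)/(ZS2) reduce to an (Elim) or an (HH) eliminating a fused variable, the bialgebra rules (BA1)/(BA2) to a few applications of (HH), the copy rules to (ket)/(bra), and (IP)/(IV)/(Z) are essentially scalar identities handled by (Elim) and (Z). Where convenient I can instead invoke the graphical computations already carried out in \autoref{sec:rewrites-in-ZH}, or use \autoref{prop:double-interp-is-identity} to go back and forth, but the honest route is the rule-by-rule verification.

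The main obstacle I expect is the rule (M) (the "multiply" / average rule of \cite{backens2021ZHcompleteness}) and possibly (CPH): these involve H-boxes with several legs and the $\frac12$ scalars, so their $\cat{SOP}$ translations have phase polynomials with genuine products of three or more variables, and massaging one side into the other will need a careful chain of (HHgen) steps — precisely the kind of nonlinear manipulation that motivated introducing (HHgen) and (HHnl) in the first place. This is exactly the point where the earlier version \cite{Vilmart2023completeness} needed the extra rule, and the claim now (per the discussion after \autoref{fig:rewrite-TH} and the remark that the problematic axiom of \cite{phase-free-ZH} is derivable from the simpler axioms of \cite{backens2021ZHcompleteness}) is that the plain rules of $\rewrite{\operatorname{TH}}$ already suffice; verifying this for (M) is the crux. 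One clean way to discharge it is to note that we only need $\underset{\operatorname{TH}}\sim$-equality (rules usable in both directions), so I can reduce \emph{both} $[L]^{\operatorname{sop}}$ and $[R]^{\operatorname{sop}}$ toward a common form rather than rewriting one into the other, and appeal to \autoref{lem:hhgen'-sound} / soundness to sanity-check the intermediate steps.
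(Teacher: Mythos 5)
Your proposal follows essentially the same route as the paper: handle reflexivity/symmetry/transitivity via the definition of $\underset{\textnormal{TH}}\sim$, handle the congruence rules via \autoref{prop:TH-local} and functoriality of $[\cdot]^{\operatorname{sop}}$, and then verify each axiom of \autoref{fig:ZH-TH-rules} by explicit rewrites — which is exactly what the paper does, so the only thing missing is the case-by-case computations themselves. One small miscalibration: you predict (M) to be the crux, but in the paper (M) falls into the "straightforward via (HH)" bucket alongside (ZS1), (ZS2), (HS1) and (HS2); the single axiom that genuinely requires (HHgen) is (CPH), as the paper's remark after the proof emphasizes.
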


\begin{proof}
\phantomsection\label{prf:TH-proves-TH}
We show that all the rules of $\operatorname{ZH}_{\operatorname{TH}}$ hold in $\cat{SOP}[\frac12]$, which together with \autoref{prop:TH-local} proves the result.

The translation of rule (IP) is implicit in $\cat{SOP}$: it essentially states that $y_i\cdot y_i = y_i$ for boolean variable $y_i$. 
Checking the rules (ZS1), (ZS2), (HS1), (HS2) and (M) is straightforward using the rule (HH). We give for instance a check of the rule (ZS1):
\begin{align*}
\left[
\input{./figures/ZH-rule-ZS1-lhs.tikz}
\right]^{\operatorname{sop}} &= \frac12\sum e^{2i\pi \frac{y_0+y_1}2y'}\ketbra{y_1,...,y_1}{y_0,...,y_0}\\
&\rewrite{\text{HH}(y',[y_1\leftarrow y_0])}
\sum \ketbra{y_0,...,y_0}{y_0,...,y_0} = \left[
\input{./figures/Z-spider.tikz}
\right]^{\operatorname{sop}}
\end{align*}
We give derivations to prove the remaining rules of $\operatorname{ZH}_{\operatorname{TH}}$. Recall that equality is up to $\alpha$-conversion.

(IV):
\begin{align*}
\left[~
\begin{tikzpicture}
	\begin{pgfonlayer}{nodelayer}
		\node [style=box] (0) at (0, 0) {$\frac12$};
	\end{pgfonlayer}
\end{tikzpicture}

~~
\begin{tikzpicture}
	\begin{pgfonlayer}{nodelayer}
		\node [style=white dot] (0) at (0, 0) {};
	\end{pgfonlayer}
\end{tikzpicture}

~\right]^{\operatorname{sop}} = \frac12\sum_y\ketbra{}{}
\rewrite{\text{Elim}} 1 = \left[~
\input{./figures/empty.tikz}
~\right]^{\operatorname{sop}}
\end{align*}

(Z):
\begin{align*}
\left[~

~~
\begin{tikzpicture}
	\begin{pgfonlayer}{nodelayer}
		\node [style=box] (0) at (0, 0.25) {};
		\node [style=white dot] (1) at (0, -0.25) {};
	\end{pgfonlayer}
	\begin{pgfonlayer}{edgelayer}
		\draw (0) to (1);
	\end{pgfonlayer}
\end{tikzpicture}

~\right]^{\operatorname{sop}}
\rewrite{\text{HH}}
\frac1{\sqrt2}\sum e^{2i\pi \frac y2}\ketbra{}{}
\rewrite{\text Z}
\sum e^{2i\pi \frac y2}\ketbra{}{}
\underset{\text{HH}}{\longleftarrow}
\left[~

~\right]^{\operatorname{sop}}
\end{align*}

The two rules (BA1) and (BA2) are fairly easy to check, once one realises that $\left[
\input{./figures/xor.tikz}
\right]^{\operatorname{sop}} \rewrite{\text{HH}} \sum \ketbra{y_0{\oplus}y_1}{y_0,y_1}$:
\begin{align*}
\left[
\input{./figures/ZH-rule-BA1-rhs.tikz}
\right]^{\operatorname{sop}} \rewrite{\text{HH}}
\frac12\sum e^{2i\pi \frac{y_1+...+y_n+y_0}2y'} \ketbra{y_1,...,y_n}{y_0,...,y_0}\\
\rewrite{\text{HH}(y',[y_0\leftarrow y_1{\oplus}...{\oplus}y_n])}
\sum \ketbra{y_1,...,y_n}{y_1{\oplus}...{\oplus}y_n,...,y_1{\oplus}...{\oplus}y_n}
\underset{\text{HH}}{\longleftarrow}\left[
\input{./figures/ZH-rule-BA1-lhs.tikz}
\right]^{\operatorname{sop}} 
\end{align*}
and
\begin{align*}
\left[
\input{./figures/ZH-rule-BA2-rhs.tikz}
\right]^{\operatorname{sop}} \rewrite{\text{HH}}
\frac12\sum e^{2i\pi \left(\frac{y_1...y_my'}2+\frac{x_1+...+x_n+y'}2y''\right)} \ketbra{y_1,...,y_m}{x_1,...x_n}\\
\rewrite{\text{HH}(y'',[y'\leftarrow x_1{\oplus}...{\oplus}x_n])}
\sum e^{2i\pi \left(\frac{y_1...y_mx_1}2+...+\frac{y_1...y_mx_n}2\right)} \ketbra{y_1,...,y_m}{x_1,...x_n}\\
\underset{\text{HH}}{\longleftarrow}\left[
\input{./figures/ZH-rule-BA2-lhs.tikz}
\right]^{\operatorname{sop}} 
\end{align*}

(CPH):
\begin{align*}
\left[
\input{./figures/ZH-rule-CPH-lhs.tikz}
\right]^{\operatorname{sop}}&\rewrite{\text{HH}} \sum e^{2i\pi\left(\frac{y_0}2+\frac{y_0y_1y_2}2\right)}\ket{y_1,y_2}
\rewrite{\text{HHgen}(y_0,[y_1\leftarrow 1])}
\sum e^{2i\pi\left(\frac{y_0}2+\frac{y_0y_2}2\right)}\ket{1,y_2}\\
&\rewrite{\text{HH}(y_0,[y_2\leftarrow 1])} 2\ket{1,1}
\underset{\text{HH}}{\longleftarrow}\left[
\input{./figures/ZH-rule-CPH-rhs.tikz}
\right]^{\operatorname{sop}} \qedhere
\end{align*}
\end{proof}

\begin{rem}
Notice in the previous proof that Rule (CPH) is the only one that requires (HHgen) to be proven. It is unclear as of now if (CPH) is a necessary axiom of $\operatorname{ZH}_{\operatorname{TH}}$, but unsuccessful efforts put into trying to derive both (CPH) in $\operatorname{ZH}_{\operatorname{TH}}$ and (HHgen) in $\underset{\operatorname{TH}}\sim$ hint towards their respective necessity.
\end{rem}

We can now use the previous proposition to show the main result of this paper:
\begin{thm}
\label{thm:TH-completeness}
$\cat{SOP}[\frac12]/\underset{\textnormal{TH}}\sim$ is complete, i.e.: 
$\forall t_1,t_2\in\cat{SOP}[{\textstyle\frac12}],~~\interp{t_1}=\interp{t_2}\iff t_1\underset{\textnormal{TH}}\sim t_2$
\end{thm}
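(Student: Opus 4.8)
The plan is to transport the completeness of $\cat{ZH}_{\operatorname{TH}}$ to $\cat{SOP}[\frac12]$ along the two semantics-preserving translations, assembling the ingredients already in place. The soundness direction ($\Leftarrow$) is immediate: if $t_1\underset{\textnormal{TH}}\sim t_2$ then by \autoref{prop:TH-soundness} each rewrite step preserves the interpretation, hence $\interp{t_1}=\interp{t_2}$. So the work is entirely in the forward direction.

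For the forward direction, suppose $t_1,t_2\in\cat{SOP}[\frac12]$ with $\interp{t_1}=\interp{t_2}$. First I would set $D_i := [t_i]^{\operatorname{ZH}}$; by (the proof of) \autoref{prop:sop1/2-TOF}, namely the inclusion $\big[\cat{SOP}[\frac12]\big]^{\operatorname{ZH}}\subseteq\cat{ZH}_{\operatorname{TH}}$, these land in $\cat{ZH}_{\operatorname{TH}}$, and by \autoref{prop:zh-preserves-semantics} we have $\interp{D_1}=\interp{t_1}=\interp{t_2}=\interp{D_2}$. Then \autoref{thm:ZH-completeness} gives $\operatorname{ZH}_{\operatorname{TH}}\vdash D_1=D_2$. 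Applying \autoref{prop:TH-proves-TH} yields $[D_1]^{\operatorname{sop}}\underset{\textnormal{TH}}\sim[D_2]^{\operatorname{sop}}$, i.e. $\big[[t_1]^{\operatorname{ZH}}\big]^{\operatorname{sop}}\underset{\textnormal{TH}}\sim\big[[t_2]^{\operatorname{ZH}}\big]^{\operatorname{sop}}$. Finally \autoref{prop:double-interp-is-identity} gives $\big[[t_i]^{\operatorname{ZH}}\big]^{\operatorname{sop}}\underset{\textnormal{TH}}\sim t_i$ for $i=1,2$, and chaining these three $\underset{\textnormal{TH}}\sim$-equalities using symmetry and transitivity of the equational theory produces $t_1\underset{\textnormal{TH}}\sim t_2$, as desired.

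As for where the real difficulty lies: it has already been discharged upstream. The substantive content is in \autoref{prop:TH-proves-TH} (verifying that each axiom of $\operatorname{ZH}_{\operatorname{TH}}$, pushed through $[\cdot]^{\operatorname{sop}}$, is derivable from $\rewrite{\textnormal{TH}}$ — the only genuinely new case being (CPH), which is exactly why (HHgen) was introduced) and in \autoref{prop:double-interp-is-identity} (the round-trip $\big[[\cdot]^{\operatorname{ZH}}\big]^{\operatorname{sop}}$ reducing back to the original term up to $\rewrite{\textnormal{TH}}$). Given those, the theorem is a short diagram chase. The only point requiring care is that the argument must stay inside the fragment at each stage: one needs $[t_i]^{\operatorname{ZH}}\in\cat{ZH}_{\operatorname{TH}}$ so that \autoref{thm:ZH-completeness} is applicable, and implicitly $[D_i]^{\operatorname{sop}}\in\cat{SOP}[\frac12]$, both of which are precisely the inclusions $\big[\cat{SOP}[\frac12]\big]^{\operatorname{ZH}}\subseteq\cat{ZH}_{\operatorname{TH}}$ and $\big[\cat{ZH}_{\operatorname{TH}}\big]^{\operatorname{sop}}\subseteq\cat{SOP}[\frac12]$ recorded in the proof of \autoref{prop:sop1/2-TOF}.
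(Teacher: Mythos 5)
Your proposal is correct and follows essentially the same route as the paper's own proof: translate via $[\cdot]^{\operatorname{ZH}}$ using \autoref{prop:zh-preserves-semantics}, invoke \autoref{thm:ZH-completeness}, pull back with \autoref{prop:TH-proves-TH}, and close the loop with \autoref{prop:double-interp-is-identity}. Your additional remarks on soundness for the reverse implication and on staying inside the fragments (via the inclusions from \autoref{prop:sop1/2-TOF}) are accurate points that the paper leaves implicit.
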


\begin{proof}
Let $t_1$ and $t_2$ be two $\cat{SOP}[\frac12]$-morphisms such that $\interp{t_1}=\interp{t_2}$. By \autoref{prop:zh-preserves-semantics}:
$\interp{[t_1]^{\operatorname{ZH}}} = \interp{[t_2]^{\operatorname{ZH}}}$.\\
By completeness of $\cat{ZH}_{\operatorname{TH}}/\operatorname{ZH}_{\operatorname{TH}}$ (\autoref{thm:ZH-completeness}):
$\operatorname{ZH}_{\operatorname{TH}}\vdash [t_1]^{\operatorname{ZH}}=[t_2]^{\operatorname{ZH}}$\\
Thanks to \autoref{prop:TH-proves-TH}:
$\left[[t_1]^{\operatorname{ZH}}\right]^{\operatorname{sop}} \underset{\text{TH}}\sim \left[[t_2]^{\operatorname{ZH}}\right]^{\operatorname{sop}}$.
Finally, by \autoref{prop:double-interp-is-identity}:
\[t_1\underset{\text{TH}}\sim\left[[t_1]^{\operatorname{ZH}}\right]^{\operatorname{sop}} \underset{\text{TH}}\sim \left[[t_2]^{\operatorname{ZH}}\right]^{\operatorname{sop}}\underset{\text{TH}}\sim t_2\qedhere\]
\end{proof}

The rewrite system is however not sufficient to get to a unique normal form, as:
\begin{lem}[Non-Confluence]
The rewrite system $\rewrite{\textnormal{TH}}$ is not confluent.
\end{lem}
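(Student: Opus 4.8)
The plan is to obtain non-confluence essentially for free from completeness (\autoref{thm:TH-completeness}), rather than by hunting for non-joinable critical pairs. Recall that any abstract rewrite system is confluent iff it has the Church--Rosser property, and that $\underset{\textnormal{TH}}{\sim}$ is by definition the equivalence generated by $\rewrite{\textnormal{TH}}$. Hence, \emph{if} $\rewrite{\textnormal{TH}}$ were confluent, any two $\underset{\textnormal{TH}}{\sim}$-equivalent terms would be joinable; combined with soundness (\autoref{prop:TH-soundness}) and completeness, this would force any two \emph{irreducible} $\cat{SOP}[\frac12]$-morphisms with equal interpretation to be joinable, hence — being already normal forms — equal up to $\alpha$-conversion. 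So it suffices to exhibit two irreducible morphisms of $\cat{SOP}[\frac12]$ that denote the same linear map but are manifestly not $\alpha$-equivalent.

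I would take two presentations of the $Z$ gate,
\[
t_1 \;:=\; \sum_{c} e^{2i\pi\frac{c}{2}}\,\ketbra{c}{c},
\qquad
t_2 \;:=\; \frac12\sum_{a,c,d} e^{2i\pi\left(\frac{ad}{2}-\frac{ac}{2}+\frac{c}{2}\right)}\,\ketbra{c}{d},
\]
where $t_2$ arises, for concreteness, as the $\cat{SOP}$ term of $H\circ X\circ H$ reduced by $\rewrite{\textnormal{TH}}$. A one-line computation gives $\interp{t_1}=\interp{t_2}=\ketbra00-\ketbra11$ (in $t_2$, summing over the internal variable $a$ yields the factor $1+(-1)^{d\oplus c}$, which kills the off-diagonal terms). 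The term $t_1$ is obviously a normal form: it has no internal variable, its phase is linear in an output/input variable, and its ket and bra have no $\oplus$-structure, so none of (Elim), (HH), (HHgen), (ket), (bra), (Z) applies. Since $t_1$ has one variable and $t_2$ has three, they are certainly not $\alpha$-equivalent.

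The crux — and the only real obstacle — is to check that $t_2$ is also irreducible. Rules (Elim), (ket), (bra) and (Z) clearly cannot fire: every variable occurs in the phase, $\vec O=(c)$ and $\vec I=(d)$ carry no component $y_0\oplus O'$ with $O'\neq 0$, and a bare $\frac{y_0}{2}$ monomial could only live on $c$ or $d$, which occur in the ket/bra. What needs genuine argument is that (HH) and (HHgen) are blocked: the single internal variable is $a$, its coefficient in the phase is $\tfrac12(d-c)$, and writing the $a$-homogeneous part as $\tfrac{a}{2}(y_i+\widehat{Q})$ — resp.\ $\tfrac{a}{2}(y_i\widehat{Q}+\widehat{Q'}+1)$ — would require $-c$ to be of the form $\widehat{(\cdot)}$ — resp.\ would force the constant term of $y_i\widehat{Q}+\widehat{Q'}+1$ to vanish — which is impossible since hats of boolean polynomials take values in $\{0,1\}$, and this remains true after reducing the phase modulo $1$. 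Hence $t_2$ is a normal form, $t_1\neq t_2$, $\interp{t_1}=\interp{t_2}$, and so by the reduction above $\rewrite{\textnormal{TH}}$ is not confluent; this is just the expected shadow of the universality of the fragment, a terminating and confluent system there being incompatible with the hardness of circuit equivalence.
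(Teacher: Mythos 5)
Your meta-argument is sound: confluence plus completeness would force any two semantically equal normal forms of $\cat{SOP}[\frac12]$ to coincide up to $\alpha$-conversion, so two non-$\alpha$-equivalent irreducible terms with the same interpretation would indeed refute confluence. The gap is in the witness: $t_2$ is \emph{not} irreducible, and the place where your argument breaks is exactly the sentence you flag as the crux. Phase polynomials in this formalism are considered modulo $1$, i.e.\ modulo polynomials with integer coefficients (see the footnote to the definition of $\cat{SOP}$, and the end of \autoref{ex:H-Tof}, where $-y_5y_6y_7y_{10}$ is silently discarded on these grounds). In particular $-\frac{ac}{2}\equiv -\frac{ac}{2}+ac=\frac{ac}{2}\pmod 1$, so the phase of $t_2$ is equal, as an element of the quotient, to $\frac{a}{2}(d+c)+\frac{c}{2}$. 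This is precisely the (HH) pattern $\frac{y_0}{2}(y_i+\widehat{Q})+R$ with $y_0=a$, $y_i=d$, $Q=c$, $R=\frac{c}{2}$, and the side conditions hold since $a\notin\Var(Q,R,\vec O,\vec I)=\{c,d\}$ and $d\notin\Var(Q)=\{c\}$. Hence $t_2$ rewrites by (HH) (substituting $d\leftarrow c$, which kills the $a$-monomials) and then (Elim) directly to $t_1$. Reducing modulo $1$ does \emph{not} preserve the obstruction you invoke -- it flips the sign of the offending monomial and unblocks the rule -- so you have exhibited one normal form and one reducible term, not two distinct normal forms.

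The paper's proof avoids having to certify irreducibility of a second presentation of a fixed map (which is hard precisely because (HH)/(HHgen) are so effective at collapsing internal variables of diagonal-like terms). Instead it runs the argument in the opposite direction: it exhibits a single term admitting two different one-step reductions -- one by (HHgen) with $y_1\leftarrow 1$, one by (HH) with $y_1\leftarrow y_4y_5\oplus 1$, the two branches of the derived rule (HHnl) -- whose results are both irreducible and visibly distinct (they do not even have the same number of variables or the same scalar). This refutes confluence directly, without appealing to completeness or soundness at all. If you want to keep your completeness-based route, you need a second normal form that survives the mod-$1$ normalisation of its phase polynomial; the paper's pair of reducts would in fact serve as such a witness, but then the detour through completeness buys you nothing over the direct peak argument.
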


\begin{proof}
We can exhibit a term which can be reduced into two different irreducible terms. To do so, we can use an instance of derived Rule (HHnl). As a by-product, this shows that rule (HHnl) would be necessary in the oriented setting, i.e.~in $\underset{\textnormal{TH}}\longrightarrow$ rather than in $\underset{\textnormal{TH}}\sim$.
The  $\cat{SOP}[\frac12]$-morphism:
\[\sum e^{2i\pi \left(\frac{y_0}2 + \frac{y_0y_1(y_2y_3+1)}2 + \frac{y_1y_0'}2 + \frac{y_0'(y_4y_5+1)}2 \right)}\ket{y_2y_3y_4y_5}
\]
can be reduced into (at least) two different irreducible terms:
\begin{gather*}
\sum e^{2i\pi \left(\frac{y_0}2 + \frac{y_0y_1(y_2y_3+1)}2 + \frac{y_1y_0'}2 + \frac{y_0'(y_4y_5+1)}2 \right)}\ket{y_2y_3y_4y_5}\\
\begin{array}[b]{cc}
\phantom{\text{HHgen}(y_1\leftarrow1)}\Big\downarrow\text{HHgen}(y_1\leftarrow1)&\phantom{\text{HH}(y_1\leftarrow y_4y_5\oplus1)}\Big\downarrow\text{HH}(y_1\leftarrow y_4y_5\oplus1)\\[0.7em]
\sum e^{2i\pi \left(\frac{y_0}2 y_2y_3+ \frac{y_0'}2 y_4y_5\right)}\ket{y_2y_3y_4y_5} &\qquad
2\sum e^{2i\pi \left(\frac{y_0}2(y_2y_3 + y_4y_5 + y_2y_3y_4y_5)\right)}\ket{y_2y_3y_4y_5}
\end{array}\qedhere
\end{gather*}
\end{proof}
Another important downside is the potential explosion of the size of the phase polynomial:
\begin{lem}
Applying (HH) $k$ times in a row on an SOP morphism with phase polynomial of size $O(k)$ may give a morphism with phase polynomial of size $O(2^k)$.
\end{lem}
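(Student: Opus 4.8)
The plan is to exhibit, for each $k\geq 1$, an explicit morphism $t_k\in\cat{SOP}[\frac12]$ whose phase polynomial has $O(k)$ monomials, together with a sequence of $k$ applications of (HH) --- each followed, as usual, by (Elim) --- whose result has a phase polynomial with $2^k$ monomials. Concretely I would take $t_k:0\to1$ with internal variables $c_1,\ldots,c_k$ and $x_1,\ldots,x_k$, ``working'' variables $z_0,z_1,\ldots,z_k$, output polynomial $\vec O=(z_0)$, empty input, and phase polynomial
\[
P_k\;=\;\frac{z_k}{2}\;+\;\sum_{j=1}^{k}\frac{c_j}{2}\bigl(z_j+z_{j-1}-x_jz_{j-1}\bigr).
\]
Since $z_{j-1}-x_jz_{j-1}=\widehat{(x_j\oplus1)z_{j-1}}$, the $j$-th summand has exactly the shape $\frac{c_j}{2}(z_j+\widehat{Q_j})$ demanded on the left of (HH), with $Q_j:=(x_j\oplus1)z_{j-1}$. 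The polynomial $P_k$ has $1+3k$ monomials, all with coefficient $\pm\frac12$, so its size is $O(k)$ (and indeed $t_k\in\cat{SOP}[\frac12]$, so the blow-up already occurs inside the Toffoli--Hadamard fragment).

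I would then perform the rewrites from right to left in the index: at step $i$ (for $i=1,\ldots,k$) apply (HH) with $y_0:=c_{k-i+1}$, substituted variable $z_{k-i+1}$, and $Q:=Q_{k-i+1}$, then (Elim) to drop $c_{k-i+1}$. This replaces $z_{k-i+1}$ by $(x_{k-i+1}\oplus1)z_{k-i}$ throughout, i.e.\ by $(1-x_{k-i+1})z_{k-i}$ inside the phase polynomial. I would prove, by induction on $i$, that after step $i$ the phase polynomial equals --- up to monomials with integer coefficients, which are trivial as phases and discarded (precisely as is needed for (Elim) to follow (HH)) ---
\[
\frac12\,z_{k-i}\prod_{l=k-i+1}^{k}(1-x_l)\;+\;\sum_{j=1}^{k-i}\frac{c_j}{2}\bigl(z_j+\widehat{Q_j}\bigr).
\]
Indeed, substituting $z_{k-i+1}$ turns the $(k-i+1)$-th summand into $c_{k-i+1}\widehat{Q_{k-i+1}}$, which has integer coefficients and so disappears as a phase (freeing $c_{k-i+1}$ for (Elim)); the summands with $j\leq k-i$ involve only $z$-variables of index $\leq k-i$ and are untouched; and the leading term $\frac12 z_{k-i+1}\prod_{l=k-i+2}^{k}(1-x_l)$ becomes $\frac12 z_{k-i}\prod_{l=k-i+1}^{k}(1-x_l)$, a product which expands into $2^i$ pairwise distinct monomials with no cancellation. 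For $i=k$ this is $\frac12 z_0\prod_{l=1}^{k}(1-x_l)=\sum_{S\subseteq\{1,\ldots,k\}}\frac{(-1)^{|S|}}{2}\bigl(\prod_{l\in S}x_l\bigr)z_0$, which in reduced form (coefficients taken modulo $1$, hence all equal to $\frac12$) has exactly $2^k$ distinct monomials, witnessing the $O(2^k)$ bound.

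The main obstacle is threading the needle between conflicting constraints: each $Q_j$ must have two monomials (so the substitution doubles), the $Q_j$'s must chain ($Q_j$ feeding $Q_{j-1}$), and yet the rigid syntactic pattern $\frac{y_0}{2}(y_i+\widehat Q)$ and the variable-disjointness side conditions of (HH) must hold at \emph{every} step. The disjointness checks are routine once the indexing is fixed --- at step $i$, $c_{k-i+1}$ occurs only in the summand being consumed (so not in the remainder $R$, nor in $\vec O$ or $\vec I$), and $Q_{k-i+1}$ involves only $x_{k-i+1}$ and $z_{k-i}$ (so neither $c_{k-i+1}$ nor $z_{k-i+1}$). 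The subtler point is that the growth must not collapse, which is exactly why the spacer variables $x_j$ are introduced: a naive choice such as $Q_j=1\oplus z_{j-1}$ (so $\widehat{Q_j}=1-z_{j-1}$) would make the substitution self-cancelling, since $1-(1-z)=z$, and the ``seed'' would merely oscillate between one and two monomials; with pairwise distinct $x_j$'s no cancellation occurs and the doubling is genuine.
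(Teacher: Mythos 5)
Your construction is correct, and it takes a genuinely different route from the paper's. The paper seeds the blow-up with a single degree-$k$ monomial: it takes the phase polynomial $y_1\cdots y_k+\sum_{i=1}^k\frac{y_i'}2(y_i+x_i+x_i')$ and applies the $k$ instances of (HH) \emph{independently}, each substituting $y_i\leftarrow x_i\oplus x_i'$ into the one big monomial, so that it expands into $\prod_i(x_i+x_i')$ with $2^k$ terms of degree $k$. You instead \emph{chain} the substitutions, $z_{k}\leftarrow(x_k\oplus1)z_{k-1}$, then $z_{k-1}\leftarrow(x_{k-1}\oplus1)z_{k-2}$, etc., so that each (HH) step doubles the expanded leading term $\frac12 z_{k-i}\prod_{l>k-i}(1-x_l)$; your inductive bookkeeping, the verification of the side conditions of (HH) at every step, and the observation that the dropped summand $c_j\widehat{Q_j}$ acquires integer coefficients (hence is a trivial phase, freeing $c_j$ for (Elim)) are all sound, and match the paper's implicit convention (cf.\ the modulo-$1$ reduction used in Example 2.3) that integer-coefficient monomials may be discarded. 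Your variant even buys something the paper's does not: the initial polynomial has degree at most $3$ rather than $k$, and the $2^k$ final monomials carry coefficient $\frac12$ rather than integer coefficients, so the blown-up phase cannot itself be argued away as a trivial phase --- whereas in the paper's example both $y_1\cdots y_k$ and the resulting $\prod_i\widehat{x_i\oplus x_i'}$ are integer-valued and hence semantically trivial phases. The paper's example is shorter and the substitutions commute (no ordering or induction needed), but both establish the stated $O(k)\rightsquigarrow O(2^k)$ growth, so your proof stands as a valid (and in this respect sharper) alternative.
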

\begin{proof}
For any $k\geq1$ we can define the following term:
\[t_k:=\sum e^{2i\pi \left(y_1\cdot...\cdot y_k+\sum\limits_{i=1}^k\frac{y'_{i}}2(y_i+x_i+x_i')\right)} \]
on which we can apply (HH) $k$ times in a row with $[y_i\leftarrow x_i\oplus x_i']$. In that case we end up with:
\[t_k\rewrite{}^k2^k\sum e^{2i\pi \left(\prod\limits_{i=1}^k(x_i+x_i')\right)}\]
While $t_k$ has only $3(k+1)$ terms (each of degree $2$ except one of degree $k$) in its phase polynomial, it can rewrite into a morphism with $2^{k+1}$ terms (each of degree $k$).
\end{proof}
Hence, if one were to perform simplifications with this rewrite system, they ought to give special attention as to where and in which order to apply the rules.

\section{Completeness for the Dyadic Fragment}
\label{sec:completeness}

We show here how we can turn an $\cat{SOP}[\frac1{2^{n+1}}]$-morphism into an $\cat{SOP}[\frac1{2^{n}}]$-morphism in a ``reversible'' manner. This will allow us to extend the completeness result to all the restrictions $\cat{SOP}[\frac1{2^n}]$. This is particularly interesting as the phase gates with dyadic multiples of $\pi$, used in particular in the quantum Fourier transform, belong in these fragments:
\[R_Z\left(p\frac\pi{2^k}\right) := \sum_{y_0}e^{2i\pi\cdot\frac p{2^{k-1}}}\ketbra{y_0}{y_0}\]

\begin{rem}
The construction we are about to define to relate the different dyadic levels together can be seen as a (non-unitary) instance of ``catalysis'', defined and studied at length in the recent paper \cite{Amy2023catalytic}, where the complexity of applying gate (or piece of circuit) $U$ several times in the overall circuit is offset to a particular state $\ket{\chi_U}$ and several occurrences of a ``simpler'' piece of circuit $\phi_U$ such that $\phi_U\circ(id\otimes \ket{\chi_U}) = U\otimes \ket{\chi_U}$. In our case, $U$ is an occurrence of the $\frac\pi{2^{n+1}}$ phase, and $\ket{\chi_U}$ encodes that phase. Both $U$ and $\ket{\chi_U}$ live in $\cat{SOP}[\frac1{2^{n+1}}]$, while $\phi_U$ lives in $\cat{SOP}[\frac1{2^{n}}]$.
\end{rem}

\subsection{Ascending the Dyadic Levels}

These transformations between restrictions of $\cat{SOP}$ are more easily defined on $\cat{SOP}$-morphisms of a particular shape, namely, when their phase polynomial is reduced to a single monomial. Because of this, we show how a $\cat{SOP}$-morphism can be turned into a composition of these.
\begin{lem}
\label{lem:sop-decomp}
Let $P = \sum m_i \in \mathbb R[X_1,\ldots,X_{k}]/(X_i^2-X_i)$, and $t = s\sum e^{2i\pi P} \ketbra{\vec O}{\vec I}$. Then:
\begin{align*}
\left[\begin{array}{c}
\displaystyle\left(s\sum \ketbra{\vec O}{y_0,...,y_k}\right)\circ\hfill\phantom{.}\\
\displaystyle\left(\sum e^{2i\pi m_1}\ketbra{y_0,...,y_k}{y_0,...,y_k}\right)
\circ \ldots \circ
\left(\sum e^{2i\pi m_\ell}\ketbra{y_0,...,y_k}{y_0,...,y_k}\right)\\
\displaystyle\hfill\circ\left(\sum \ketbra{y_0,...,y_k}{\vec I}\right)
\end{array}\right]
\overset\ast{\rewrite{\textnormal{HH}}}t
\end{align*}
\end{lem}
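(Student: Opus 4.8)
The idea is to compute the big composite on the left-hand side directly using the definition of $(\cdot\circ\cdot)$ in $\cat{SOP}$, observe that each sequential composition introduces a fresh batch of internal variables together with phase terms of the form $\frac{y_j}{2}(\text{sum of two copies of a variable})$, and then apply (HH) repeatedly to eliminate all of these fresh variables, recovering $t$. So first I would unfold the composite from the inside out. The innermost two factors, $\left(\sum e^{2i\pi m_\ell}\ketbra{\vec z}{\vec z}\right)\circ\left(\sum \ketbra{\vec z}{\vec I}\right)$ (writing $\vec z = (y_0,\dots,y_k)$), compose to give, by definition, a term with a factor $\frac{1}{2^{k+1}}$, a new tuple of summation variables $\vec z\,'$, and phase contribution $\frac{\widehat{\vec I}\cdot\vec z\,' + \widehat{\vec z}\cdot\vec z\,'}{2}$ added to $m_\ell(\vec z\,')$; iterating, after composing all $\ell+2$ factors we obtain a single $\cat{SOP}$ term of the form $\frac{s}{2^{(k+1)\ell}}\sum e^{2i\pi\left(\sum_j m_j(\vec z_j) + \sum \frac{(\text{matching terms})}{2}\right)}\ketbra{\vec O(\vec z_0)}{\vec I(\vec z_{\ell+1})}$, where consecutive blocks of variables $\vec z_j$ and $\vec z_{j+1}$ are "glued" by a phase block $\frac{\widehat{\vec z_j}\cdot\vec w_j + \widehat{\vec z_{j+1}}\cdot\vec w_j}{2}$ coming from the composition's auxiliary variables $\vec w_j$. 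I would set this up with explicit indexing so the bookkeeping is transparent.

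**The reduction step.** Now each gluing block $\frac{w_{j,r}}{2}(z_{j,r} + z_{j+1,r})$ is exactly an instance of the left-hand side of (HH): the variable $w_{j,r}$ is internal and occurs nowhere else, and $z_{j+1,r}$ (say) does not occur in the relevant $Q = z_{j,r}$. Applying (HH) with $[z_{j+1,r}\leftarrow z_{j,r}]$ and following with (Elim) to discard $w_{j,r}$ removes the block and merges the two variables. Doing this across all $r\in\{0,\dots,k\}$ and all adjacent pairs $j$ identifies every $\vec z_j$ with a single common tuple $\vec y = (y_0,\dots,y_k)$, and each application of (HH) kills one factor of $2$ against the $\frac{1}{2}$ in the phase, so the scalar $\frac{s}{2^{(k+1)\ell}}$ is restored to $s$ after all $(k+1)\ell$ applications (here I should double-check the exact count of auxiliary variables — each of the $\ell+1$ compositions contributes $k+1$ of them, giving $(k+1)(\ell+1)$, but the outermost composition with $\left(s\sum\ketbra{\vec O}{\vec z}\right)$ contributes the same way, so the arithmetic $2^{(k+1)\ell}\to 1$ needs to be tracked carefully against how many (HH) steps fire). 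Once all blocks are gone, the monomials $m_1(\vec y),\dots,m_\ell(\vec y)$ all sit on the same variables, their sum is $P(\vec y)$, the kets and bras are $\ket{\vec O(\vec y)}$ and $\bra{\vec I(\vec y)}$, and we have exactly $t$.

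**Main obstacle.** The conceptual content is easy; the real work is the bookkeeping — making sure that (i) the auxiliary variables from each composition genuinely appear only in their own gluing block (so the side condition $y_0\notin\Var(Q,R,\vec O,\vec I)$ of (HH) is met at each step, which requires choosing the order of eliminations sensibly, e.g. peeling from one end), (ii) the substitution $[z_{j+1,r}\leftarrow z_{j,r}]$ does not interfere with a later (HH) step — it does not, since after the substitution the merged variable still occurs only linearly in the next block — and (iii) the scalar arithmetic $\frac{s}{2^{N}}\cdot 2^{N} = s$ comes out exactly, where $N$ is the total number of auxiliary variables eliminated. I would also note that the $\widehat{\cdot}$ translation applied to the bare variables $\vec I$, $\vec z$ etc. in the composition formula is the identity on linear-in-distinct-variables polynomials, which is why the gluing blocks have the clean shape $\frac{w}{2}(z + z')$ that (HH) expects; strictly, $\vec I$ may be an arbitrary boolean polynomial, so the block is $\frac{w_{j,r}}{2}(z_{j,r} + \widehat{I_r})$ at the bottom, which is still an (HH) instance with $Q$ a constant — actually here $Q=z_{j,r}$ and we substitute $z_{j,r}\leftarrow I_r$ — so the statement's claim of a pure $\overset{\ast}{\rewrite{\mathrm{HH}}}$ reduction goes through. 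I expect the proof in the paper to either carry out this induction on $\ell$ explicitly or to present the $\ell=1$ or small case and assert the general pattern.
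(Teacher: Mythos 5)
Your proposal is correct and is essentially the paper's argument: compute the sequential compositions, observe that each one contributes $k+1$ auxiliary variables glued by phase blocks of the form $\frac{w}{2}(z+z')$ together with a factor $\frac1{2^{k+1}}$, and eliminate them all with (HH) (plus the implicit (Elim)) to merge the variable tuples and restore the scalar; the paper merely organises this incrementally (compose two factors, reduce, repeat, then attach the two extremal factors) rather than unfolding the whole composite first. The only slips in your write-up are cosmetic: the total scalar deficit is $2^{-(k+1)(\ell+1)}$ (which you flag yourself), and the bottom gluing block never involves $\widehat{I_r}$ since it glues the \emph{output} variables of $\sum\ketbra{y_0,\ldots,y_k}{\vec I}$, which are plain variables.
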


\begin{proof}
Let us start by composing the two first diagonal terms:
\begin{align*}
\left(\sum e^{2i\pi m_1}\ketbra{y_0,...,y_k}{y_0,...,y_k}\right)
\circ
\left(\sum e^{2i\pi m_\ell}\ketbra{y_0,...,y_k}{y_0,...,y_k}\right)\\
= \frac1{2^{k+1}}\sum e^{2i\pi (m_1 + m_2[y_i\leftarrow y_i'] + \frac{y_0y_0'' + y_0''y_0'}2+...+\frac{y_ky_k'' + y_k''y_k'}2)}\ketbra{y_0,...,y_k}{y_0',...,y_k'}\\
\rewrite{\text{HH}(y_i'\leftarrow y_i)}
\sum e^{2i\pi (m_1 + m_2)}\ketbra{y_0,...,y_k}{y_0,...,y_k}
\end{align*}
Doing so repeatedly with all the diagonal terms gives
\[\sum e^{2i\pi P}\ketbra{y_0,...,y_k}{y_0,...,y_k}\]
Finally, applying the extremal terms (one at a time) and removing the newly created variables with (HH), just as we did for the diagonal terms, yields $t$.
\end{proof}

Notice that this decomposed form is not unique, as different orderings on the monomials of $P$ define different orderings of the compositions. However, this will not matter.

A particular care is sadly needed for the overall scalar. Because of this, we will first focus on a slightly different notion of restriction of $\cat{SOP}$.

\begin{defi}[$\cat{SOP}{[}\frac1{2^n}{]}'$]
We define $\cat{SOP}[\frac1{2^n}]'$ as the restriction of $\cat{SOP}$ to morphisms of the form: $\displaystyle t = \frac1{2^p}\sum e^{2i\pi \frac P{2^n}}\ketbra{\vec O}{\vec I}$ where $P$ has integer coefficients.
\end{defi}

The only difference with $\cat{SOP}[\frac1{2^n}]$ is that the overall scalar is now a power of $\frac12$ and not of $\frac1{\sqrt2}$. There always exists a $\cat{SOP}[\frac1{2^n}]'$-morphism that represents the same linear map as any $\cat{SOP}[\frac1{2^{n}}]$-morphism.

\begin{lem}
$\interp{\frac1{\sqrt2}\sum\limits_{y_0\in V} e^{2i\pi\left(\frac{1}{8} + \frac{3}{4}y_{0}\right)}} = 1$. Hence:
\[\textstyle\forall t\in \cat{SOP}[\frac1{2^n}],~\exists t'\in \cat{SOP}[\frac1{2^{\max(3,n)}}]',~~\interp{t}=\interp{t'}\]
\end{lem}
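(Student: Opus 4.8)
The statement has two parts: the numerical identity for the scalar, and the ``Hence'' corollary. The plan is to dispatch the scalar by brute force, since $\interp{\cdot}$ applied to an \cat{SOP} term whose only variable is $y_0$ just means summing over $y_0\in\{0,1\}$: the value is $\frac1{\sqrt2}\bigl(e^{2i\pi/8}+e^{2i\pi(1/8+3/4)}\bigr) = \frac1{\sqrt2}\bigl(e^{i\pi/4}+e^{-i\pi/4}\bigr) = \frac1{\sqrt2}\cdot 2\cos(\pi/4) = 1$. This is a one-line computation and I expect no obstacle here.

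For the corollary, let $t = \frac1{\sqrt2^p}\sum e^{2i\pi P/2^n}\ketbra{\vec O}{\vec I}$ with $p\in\mathbb Z$ and $P$ integer-valued, and set $m := \max(3,n)$. The only differences between a $\cat{SOP}[\frac1{2^n}]$-morphism and a $\cat{SOP}[\frac1{2^{m}}]'$-morphism are: (a) the denominator of the phase, which is harmless since $m\ge n$ lets one rewrite $P/2^n = (2^{m-n}P)/2^m$ with still-integer coefficients; and (b) the global scalar, which for the unprimed fragment is a power of $1/\sqrt2$ rather than of $1/2$. So the plan is a case split on the parity of $p$. If $p$ is even, take $t' := t$ directly (then $1/\sqrt2^p = 1/2^{p/2}$, with $p/2\in\mathbb Z$). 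If $p$ is odd, tensor $t$ with the scalar $c := \frac1{\sqrt2}\sum_{y_0} e^{2i\pi(1/8+3/4\,y_0)}$ of the first part (renaming $y_0$ to avoid collision with $\Var(t)$); then $t\otimes c$ has global scalar $1/\sqrt2^{p+1} = 1/2^{(p+1)/2}$, with $(p+1)/2\in\mathbb Z$, and phase polynomial $P/2^n + 1/8 + 3/4\,y_0$, which — using $m\ge 3$, so that $2^{m-3}$ and $3\cdot2^{m-2}$ are integers — can be put over the common denominator $2^m$ with integer numerator. Hence $t' := t\otimes c\in\cat{SOP}[\frac1{2^m}]'$.

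It remains to see $\interp{t'}=\interp{t}$; in the even case this is trivial, and in the odd case one either invokes monoidality of $\interp{\cdot}$ together with $\interp{c}=1$, or, equivalently, observes directly that in $\interp{t\otimes c}$ the sum factorizes and the $y_0$-factor is exactly the scalar just shown to equal $1$. There is no genuinely hard step here; the only points requiring a little care are the bookkeeping on signs of exponents (one should note the primed fragment's ``power of $1/2$'' is allowed a negative exponent, matching the convention $p\in\mathbb Z$ of the unprimed fragment, so that negative $p$ is subsumed by the two cases above) and the freshness of the introduced variable $y_0$.
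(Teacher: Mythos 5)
Your proposal is correct and follows essentially the same route as the paper: the paper also verifies the scalar evaluates to $1$ and, when the global factor is an odd power of $\frac1{\sqrt2}$, tensors $t$ with that scalar term so that the $\frac18$ phase is absorbed using $\max(3,n)\geq 3$, taking $t'=t$ otherwise. Your explicit parity case split and the remark on fresh variable renaming are just slightly more detailed bookkeeping than the paper's one-line construction.
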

\begin{proof}
If $t\in \cat{SOP}[\frac1{2^n}]$ and $t\notin \cat{SOP}[\frac1{2^n}]'$, then:
\[t':=t\otimes\left(\frac1{\sqrt2}\sum e^{2i\pi\left(\frac{1}{8} + \frac{3}{4}y_{0}\right)}\right) \in \cat{SOP}[\frac1{2^{\max(3,n)}}]'\quad \text{ and }\quad\interp{t'}=\interp{t}.\qedhere\]
\end{proof}

We can now define the family of maps that will link the different levels of the ``dyadic levels'':
\begin{defi}
For any $k\geq1$, we define the functor $\ascend{\cdot}_k:\cat{SOP}[\frac1{2^{k+1}}]'\to\cat{SOP}[\frac1{2^k}]'$, first for morphisms $t=s\sum e^{2i\pi \frac\ell{2^{k+1}}y_{i_1}...y_{i_q}} \ketbra{\vec O}{\vec I}$ with phase polynomial of size 0 or 1:
\[t\mapsto
\begin{cases}
\displaystyle s\sum e^{2i\pi \frac{\ell/2}{2^{k}}y_{i_1}...y_{i_q}} \ketbra{\vec O,y'}{\vec I,y'} = t\otimes id&\text{ if } \ell\bmod2 = 0\\[0.5ex]
\displaystyle s\sum e^{2i\pi \frac{y_{i_1}...y_{i_q}}{2^{k}}\left((\ell-1)/2+y'\right)} \ketbra{\vec O,y'}{\vec I,y'{\oplus}y_{i_1}...y_{i_q}}&\text{ if } \ell\bmod2 = 1
\end{cases}\]
The functor is then extended to any $\cat{SOP}[\frac1{2^{k+1}}]'$-morphism by the decomposition of \autoref{lem:sop-decomp} (and given a particular ordering on the monomials of the phase polynomial).
\end{defi}
Since $\ascend{\cdot}_k$ is defined to be a functor, we have $\ascend{\cdot\,\circ\,\cdot}_k = \ascend{\cdot}_k\circ\ascend{\cdot}_k$. We can show that the ordering of the monomials has no real importance. Indeed, suppose $t_1=\sum e^{2i\pi \frac{\ell_1}{2^{k+1}}y_{i_1}...y_{i_q}} \ketbra{\vec y}{\vec y}$ and $t_2=\sum e^{2i\pi \frac{\ell_2}{2^{k+1}}y_{j_1}...y_{j_r}} \ketbra{\vec y}{\vec y}$. Then: $
\ascend{t_1\circ t_2}_k = \ascend{t_2\circ t_1}_k$ 
quite obviously when either $\ell_1\bmod2=0$ or $\ell_2\bmod2=0$, but also when $\ell_1\bmod2=\ell_2\bmod2=1$:
\begin{align*}
\ascend{t_1\circ t_2}_k
\rewrite{\text{HH}} \sum e^{2i\pi \left(
\begin{array}{c}
\scriptstyle \frac{y_{i_1}...y_{i_q}}{2^{k}}\left((\ell_1-1)/2+y'\right)
 +\frac{y_{j_1}...y_{j_r}}{2^{k}}\left((\ell_2-1)/2+y'\right)\\[1ex]
\scriptstyle + \frac{y_{i_1}...y_{i_q}y_{j_1}...y_{j_r}}{2^{k}}(1-2y')
\end{array}
\right)} \raisebox{-2ex}{\hspace*{-4em}$\ketbra{\vec y,y'}{\vec y,y'{\oplus}y_{i_1}...y_{i_q}{\oplus}y_{j_1}...y_{j_r}}$}\\
\underset{\text{HH}}\longleftarrow \ascend{t_2\circ t_1}_k
\end{align*}
Notice however that $\ascend{\cdot}_k$ adds an input and an output, so necessarily $\ascend{\cdot\otimes\cdot}_k\neq\ascend\cdot_k\otimes\ascend\cdot_k$.

The functors $\ascend{\cdot}_k$ map terms with the same semantics to terms with the same semantics:
\begin{prop}
\label{prop:ascending-encodes}
$\displaystyle\forall t_1,t_2\in\cat{SOP}[\textstyle\frac1{2^{k+1}}]',~~\interp{t_1}=\interp{t_2}\implies \interp{\ascend{t_1}_k} = \interp{\ascend{t_2}_k}$
\end{prop}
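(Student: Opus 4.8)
The plan is to reduce the claim to a statement about a single monomial phase polynomial, and then establish that case by a direct computation of the interpretation. First I would use the decomposition of \autoref{lem:sop-decomp}: any $\cat{SOP}[\frac1{2^{k+1}}]'$-morphism $t$ is equal (up to $\underset{\textnormal{HH}}\sim$, hence with the same interpretation by \autoref{prop:TH-soundness}) to a composition $t = D_{\vec O}\circ M_1\circ\cdots\circ M_\ell\circ D_{\vec I}$, where each $M_j$ is a ``diagonal'' morphism $\sum e^{2i\pi m_j}\ketbra{\vec y}{\vec y}$ with $m_j$ a single monomial, and $D_{\vec O}, D_{\vec I}$ encode the output/input polynomials. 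Since $\ascend{\cdot}_k$ is a functor, $\ascend{t}_k = \ascend{D_{\vec O}}_k\circ\ascend{M_1}_k\circ\cdots\circ\ascend{M_\ell}_k\circ\ascend{D_{\vec I}}_k$, so it suffices to understand $\interp{\ascend{M}_k}$ for each diagonal building block (and the extremal blocks, which are handled by the $\ell \bmod 2 = 0$ / identity-like case).

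The heart of the argument is the following single-monomial computation: for $M = \sum e^{2i\pi\frac{\ell}{2^{k+1}}y_{i_1}\cdots y_{i_q}}\ketbra{\vec y}{\vec y}$, I claim $\interp{\ascend{M}_k}$ has the same semantics as $\interp{M}$ tensored with an extra wire, i.e. $\interp{\ascend{M}_k} = \interp{M}\otimes \mathbb{I}$. When $\ell$ is even this is immediate from the definition, since $\ascend{M}_k = M\otimes id$ there. When $\ell$ is odd, I would evaluate $\interp{\ascend{M}_k}$ directly: fixing the booleans $y_{i_1},\dots,y_{i_q}$, write $b := y_{i_1}\cdots y_{i_q}\in\{0,1\}$. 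If $b=0$ the summand over $y'$ just gives the identity action on the new wire with phase $0$; if $b=1$, the bra is $\bra{\dots, y'\oplus 1}$ so the new input wire carries value $y'\oplus 1$ while the output carries $y'$, and the phase is $e^{2i\pi\frac{(\ell-1)/2 + y'}{2^{k}}}$. Substituting $y' = 0,1$ and using that $\frac{(\ell-1)/2}{2^k} + \frac{y'}{2^k}$ at $y'$ versus the phase $\frac{\ell}{2^{k+1}}$ we want to reproduce, one checks the two $2\times 2$ matrices on the (input wire, output wire) block agree, so that $\interp{\ascend{M}_k} = \interp{M}\otimes\mathbb I$. This is essentially the catalysis identity $\phi_U\circ(id\otimes\ket{\chi_U}) = U\otimes\ket{\chi_U}$ alluded to in the remark, specialized here.

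Granting this, the proposition follows: if $\interp{t_1} = \interp{t_2}$, then by the functoriality of $\ascend\cdot_k$ and the single-block identities, $\interp{\ascend{t_i}_k}$ is obtained from $\interp{t_i}$ by a fixed, syntax-independent procedure (decompose, ascend each monomial block, recompose) whose composite effect on semantics is determined entirely by $\interp{t_i}$ — concretely, one shows by induction on the number of monomial blocks that $\interp{\ascend{t}_k}$ depends only on $\interp{t}$, using at the inductive step that composition of linear maps is well-defined on semantics and that each ascended block's semantics is a function of that block's semantics. Since the ordering of monomials does not matter (as already verified in the text before the statement), and since \autoref{prop:TH-soundness} guarantees the $\underset{\textnormal{HH}}\sim$ rewrites in \autoref{lem:sop-decomp} preserve interpretation, we conclude $\interp{\ascend{t_1}_k} = \interp{\ascend{t_2}_k}$.

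The main obstacle I expect is making the ``$\interp{\ascend{t}_k}$ depends only on $\interp{t}$'' step rigorous: $\ascend\cdot_k$ is defined on syntax via a non-canonical decomposition, so one must argue the semantics is independent of all the choices (which decomposition, which monomial ordering, which representative of $\interp{t}$). The monomial-ordering independence is already in hand; the decomposition independence should follow because any two decompositions of the same $t$ are related by $\underset{\textnormal{HH}}\sim$ and $\ascend\cdot_k$ sends $\underset{\textnormal{HH}}\sim$-related terms to $\underset{\textnormal{HH}}\sim$-related terms (since it is defined to be a functor and the HH-steps of \autoref{lem:sop-decomp} are built from the generators it acts on); and representative-independence is exactly the content of the single-block identity $\interp{\ascend M_k} = \interp M\otimes\mathbb I$ propagated through composition. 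Care is also needed with the overall scalar, which is precisely why the statement is phrased in $\cat{SOP}[\frac1{2^n}]'$ rather than $\cat{SOP}[\frac1{2^n}]$.
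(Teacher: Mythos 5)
There is a genuine gap: your central single-block identity $\interp{\ascend{M}_k}=\interp{M}\otimes\mathbb I$ is false when $\ell$ is odd. In that case the ascended block is $s\sum e^{2i\pi \frac{y_{i_1}\dots y_{i_q}}{2^{k}}\left((\ell-1)/2+y'\right)} \ketbra{\vec O,y'}{\vec I,y'{\oplus}y_{i_1}\dots y_{i_q}}$, so on the branch $y_{i_1}\cdots y_{i_q}=1$ the extra wire is acted on \emph{anti-diagonally}: the $2\times2$ block is $e^{2i\pi\frac{(\ell-1)/2}{2^k}}\bigl(\ketbra{0}{1}+e^{i\pi/2^{k-1}}\ketbra{1}{0}\bigr)$, not the diagonal $e^{i\pi\ell/2^{k}}\,\mathbb I$. (A sanity check: $\ascend{M}_k$ lies in $\cat{SOP}[\frac1{2^k}]'$, so its interpretation has entries in $\mathbb Z[\frac12,e^{i\pi/2^{k-1}}]$, whereas $\interp{M}\otimes\mathbb I$ contains $e^{i\pi\ell/2^k}$ with $\ell$ odd; also, your identity would make $\descend{\cdot}_k$ and the whole catalysis mechanism pointless, since the catalysis equation is $\phi_U\circ(id\otimes\ket{\chi_U})=U\otimes\ket{\chi_U}$, not $\phi_U=U\otimes id$.) The correct semantic statement, which is what the paper proves, is $\interp{\ascend{t}_k}=\psi_k(\interp{t})$, where $\psi_k$ sends $A+Be^{i\pi/2^k}\mapsto A\otimes I_2+B\otimes X_k$ with $X_k=\begin{pmatrix}0&1\\ e^{i\pi/2^{k-1}}&0\end{pmatrix}$.

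Even with the block computation corrected, your reduction of the proposition to ``$\interp{\ascend{t}_k}$ depends only on $\interp{t}$'' is not carried by the argument you give. Two terms $t_1,t_2$ with equal interpretation decompose into completely unrelated monomial blocks, so functoriality plus ``each ascended block's semantics is a function of that block's semantics'' only shows that $\interp{\ascend{t}_k}$ is determined by the chosen decomposition, not by $\interp{t}$. The missing ingredient is that the block-level rule extends to a single well-defined homomorphism $\psi_k$ on matrices over $\mathbb Z[\frac12,e^{i\pi/2^k}]$; well-definedness requires that every $x$ in this ring decomposes \emph{uniquely} as $x_1+e^{i\pi/2^k}x_2$ with $x_1,x_2\in\mathbb Z[\frac12,e^{i\pi/2^{k-1}}]$, which the paper establishes by showing $\mathbb Q[e^{i\pi/2^k}]$ is a degree-$2$ extension of $\mathbb Q[e^{i\pi/2^{k-1}}]$ (via Euler's totient), with basis $(1,e^{i\pi/2^k})$. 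Your proposal never addresses this uniqueness, and without it (or some substitute) the conclusion $\interp{t_1}=\interp{t_2}\implies\interp{\ascend{t_1}_k}=\interp{\ascend{t_2}_k}$ does not follow.
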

\begin{proof}
\phantomsection\label{prf:ascending-encodes}
We prove this proposition by showing that:
\begin{enumerate}
\item $\interp{\cat{SOP}[\frac1{2^{k+1}}]'}\subseteq
\mathcal M(\mathbb Z[\frac12,e^{i\frac\pi{2^k}}])$
\label{item:interp}
\item For each element $x\in \mathbb Z[\frac12,e^{i\frac\pi{2^k}}]$, there exists a unique decomposition as $x = x_1+e^{i\frac\pi{2^k}}x_2$ where $x_1,x_2\in \mathbb Z[\frac12,e^{i\frac\pi{2^{k-1}}}]$
\label{item:unique}
\item There exists a map $\psi_k:\mathcal M(\mathbb Z[\frac12, e^{i\frac\pi{2^k}}])\to \mathcal M(\mathbb Z[\frac12, e^{i\frac\pi{2^{k-1}}}])$, based on the decomposition, and such that $\interp{\ascend{t}_k} = \psi_k\left(\interp{t}\right)$
\label{item:exists}
\end{enumerate}
In this case, given $t_1,t_2\in\cat{SOP}[\frac1{2^{k+1}}]'$ such that $\interp{t_1}=\interp{t_2}$, by (\ref{item:interp}) we can apply $\psi_k$ to their interpretation. By uniqueness of the decomposition (\ref{item:unique}), $\psi_k(\interp{t_1})=\psi_k(\interp{t_2})$. Finally, by (\ref{item:exists}), $\interp{\ascend{t_1}_k} = \interp{\ascend{t_2}_k}$.
Let us now prove the previous claims:
\begin{enumerate}
\item This point is a simple verification.
\item Let $\displaystyle x=\sum_{\ell=0}^{2^k-1} \alpha_\ell e^{i\frac{\ell\pi}{2^k}}$ $\in \mathbb Z[\frac12,e^{i\frac\pi{2^k}}]$. Obviously, $x$ can be decomposed as $$ x = \sum_{\ell=0}^{2^{k-1}-1} \alpha_{2\ell} e^{i\frac{\ell\pi}{2^{k-1}}} + e^{i\frac{\pi}{2^k}}\sum_{\ell=0}^{2^{k-1}-1} \alpha_{2\ell+1} e^{i\frac{\ell\pi}{2^{k-1}}} = x_1+e^{i\frac\pi{2^k}}x_2$$
where $x_1,x_2\in \mathbb Z[\frac12,e^{i\frac\pi{2^{k-1}}}]$. We now need to show that this decomposition is unique. To do so, let us consider $\mathbb Q[e^{i\frac\pi{2^k}}]$ and $\mathbb Q[e^{i\frac\pi{2^{k-1}}}]$. These are two fields such that $\mathbb Q[e^{i\frac\pi{2^{k-1}}}] \subset \mathbb Q[e^{i\frac\pi{2^k}}]$. $\mathbb Q[e^{i\frac\pi{2^k}}]$ can hence be seen as a vector space over $\mathbb Q[e^{i\frac\pi{2^{k-1}}}]$. This vector space is of dimension:
$$\left[\mathbb Q[e^{i\frac\pi{2^k}}]:\mathbb Q[e^{i\frac\pi{2^{k-1}}}]\right]
= \left[\mathbb Q[e^{i\frac{2\pi}{2^{k+1}}}]:\mathbb Q[e^{i\frac{2\pi}{2^k}}]\right]
=\frac{\left[\mathbb Q[e^{i\frac{2\pi}{2^{k+1}}}]:\mathbb Q\right]}{\left[\mathbb Q[e^{i\frac{2\pi}{2^k}}]:\mathbb Q\right]} = \frac{\varphi(2^{k+1})}{\varphi(2^k)}=\frac{2^k}{2^{k-1}}=2$$
where $\varphi$ is Euler's totient function. The vector space has $(1,e^{i\frac\pi{2^k}})$ as a basis. Hence, the above decomposition is unique.
\item We now need to define $\psi_k$. We are going to define it first on scalars, and on the basis $(1,e^{i\frac\pi{2^k}})$:
$$\psi_k(1):= I_2 = \begin{pmatrix}1&0\\0&1\end{pmatrix}\qquad\quad\text{and}\qquad\quad
\psi_k(e^{i\frac\pi{2^k}}):= X_k = \begin{pmatrix}0&1\\e^{i\frac\pi{2^{k-1}}}&0\end{pmatrix}$$
By linearity, $\psi_k$ is defined on all elements of $\mathbb Z[\frac12,e^{i\frac\pi{2^k}}]$. We then naturally extend this definition to any matrix over these elements. Formally: $\psi_k:A+Be^{i\frac\pi{2^k}}\mapsto A\otimes I_2 + B\otimes X_k$ where $A+Be^{i\frac\pi{2^k}}$ is the aforementioned decomposition extended to matrices. One can check that $\psi_k$ is a homomorphism, i.e.~$\psi_k(.+.)=\psi_k(.)+\psi_k(.)$ and $\psi_k(.\circ .)=\psi_k(.)\circ\psi_k(.)$.

It remains to show that $\interp{\ascend{.}_k} = \psi_k\left(\interp{.}\right)$. Since $\psi_k$ is a homomorphism, it is enough to show the result on the terms in the decomposed form of \autoref{lem:sop-decomp}. Let $t=s\sum e^{2i\pi \frac\ell{2^{k+1}}y_{i_1}...y_{i_q}} \ketbra{\vec O}{\vec I}$ be such a term.

If $\ell\bmod2=0$, then $\interp{t}\in\mathcal M(\mathbb Z[\frac12, e^{i\frac\pi{2^{k-1}}}])$ so $\psi_k(\interp t) = \interp t \otimes I_2$ and: \[\interp{\ascend{t}_k} = \interp{s\sum e^{2i\pi \frac{\ell/2}{2^{k}}y_{i_1}...y_{i_q}} \ketbra{\vec O,y'}{\vec I,y'}} = \interp t\otimes I_2.\]

If $\ell\bmod2=1$, then:
$$\interp{t} = se^{i\frac\pi{2^k}}\sum_{y_{i_1}...y_{i_q}=1} e^{2i\pi \frac{(\ell-1)/2}{2^k}} \ketbra{\vec O}{\vec I} + s\sum_{y_{i_1}...y_{i_q}=0} \ketbra{\vec O}{\vec I}$$
so:
$$\psi_k(\interp t) = \left(s\sum_{y_{i_1}...y_{i_q}=1} e^{2i\pi \frac{(\ell-1)/2}{2^k}} \ketbra{\vec O}{\vec I}\right)\otimes X_k + \left(s\sum_{y_{i_1}...y_{i_q}=0} \ketbra{\vec O}{\vec I}\right)\otimes I_2$$
and 
\begin{align*}
&\interp{\ascend{t}_k} = s\sum e^{2i\pi \frac{y_{i_1}...y_{i_q}}{2^{k}}\left((\ell-1)/2+y'\right)} \ketbra{\vec O,y'}{\vec I,y'{\oplus}y_{i_1}...y_{i_q}}\\
&= s\sum_{y_{i_1}...y_{i_q}=1} e^{2i\pi \frac{(\ell-1)/2+y'}{2^{k}}} \ketbra{\vec O,y'}{\vec I,y'{\oplus}1} + s\sum_{y_{i_1}...y_{i_q}=0} \ketbra{\vec O,y'}{\vec I,y'}\\
&= \left(s\sum_{y_{i_1}...y_{i_q}=1} e^{2i\pi \frac{(\ell-1)/2}{2^k}} \ketbra{\vec O}{\vec I}\right)\otimes X_k + \left(s\sum_{y_{i_1}...y_{i_q}=0} \ketbra{\vec O}{\vec I}\right)\otimes I_2 = \psi_k(\interp t)\qedhere
\end{align*}
\end{enumerate}
\end{proof}

\subsection{Going Back}

We  now show how to reverse the functors $\ascend{\cdot}_k$.

\begin{defi}
For any $k\geq1$, we define the (partial) map $\descend{\cdot}_k:\cat{SOP}[\frac1{2^k}]'\to \cat{SOP}[\frac1{2^{k+1}}]'$ as:
\[\forall t:n+1\to m+1 \in\cat{SOP}[{\textstyle\frac1{2^k}}]',~~\descend{t}_k:= (id_m\otimes \bra0)\circ t \circ (id_n\otimes \sum e^{2i\pi \frac{y_0}{2^{k+1}}}\ket{y_0})\]
\end{defi}
Notice that $\descend{\cdot}_k$ can only be applied on morphisms that have at least one input and one output.

$\descend{\cdot}_k$ reverses the action of $\ascend{\cdot}_k$ (up to some rewrites):
\begin{prop}
\label{prop:descending-reverses-ascending}
$\descend{\ascend{\cdot}_k}_k \underset{\textnormal{TH}}\sim (\cdot)$ and $t_1\underset{\textnormal{TH}}\sim t_2 \implies \descend{t_1}\underset{\textnormal{TH}}\sim \descend{t_2}$ for any two terms $t_1,t_2$.
\end{prop}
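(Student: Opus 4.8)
The plan is to handle the two claims separately. The congruence claim $t_1\underset{\textnormal{TH}}{\sim}t_2\implies\descend{t_1}_k\underset{\textnormal{TH}}{\sim}\descend{t_2}_k$ is immediate from \autoref{prop:TH-local}: by definition $\descend{t}_k$ is obtained from $t$ by composing on the output side with $id\otimes\bra0$ and on the input side with $id\otimes\ket\phi$, where $\ket\phi:=\sum_{y_0}e^{2i\pi\frac{y_0}{2^{k+1}}}\ket{y_0}$ is a fixed $\cat{SOP}$-morphism, and $\underset{\textnormal{TH}}{\sim}$ is a congruence for sequential composition (for $t_1,t_2$ having at least one input and one output, so that $\descend{\cdot}_k$ is defined). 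So all the content lies in the ``catalysis equation''
\[\ascend{t}_k\circ(id\otimes\ket\phi)\ \underset{\textnormal{TH}}{\sim}\ t\otimes\ket\phi\qquad\text{for all }t\in\cat{SOP}[{\textstyle\frac1{2^{k+1}}}]',\]
from which $\descend{\ascend{t}_k}_k=(id\otimes\bra0)\circ\ascend{t}_k\circ(id\otimes\ket\phi)\underset{\textnormal{TH}}{\sim}(id\otimes\bra0)\circ(t\otimes\ket\phi)=t\otimes(\bra0\circ\ket\phi)\underset{\textnormal{TH}}{\sim}t$ follows, using \autoref{prop:TH-local} once more together with the easy computation $\bra0\circ\ket\phi\underset{\textnormal{TH}}{\sim}1$.

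To prove the catalysis equation I would fix $t:n\to m$ and decompose it, via \autoref{lem:sop-decomp}, as $t\underset{\textnormal{HH}}{\sim}A_1\circ\cdots\circ A_N$ where each $A_j$ is either a single-monomial diagonal term $s\sum e^{2i\pi\frac{\ell}{2^{k+1}}y_{i_1}\cdots y_{i_q}}\ketbra{\vec y}{\vec y}$ or one of the two extremal terms $s\sum\ketbra{\vec O}{\vec y}$, $\sum\ketbra{\vec y}{\vec I}$ — all of which still lie in $\cat{SOP}[\frac1{2^{k+1}}]'$. Since $\ascend{\cdot}_k$ is defined precisely through such a decomposition, $\ascend{t}_k=\ascend{A_1}_k\circ\cdots\circ\ascend{A_N}_k$ (for this choice of decomposition), and moreover the fresh ``catalyst'' wire that $\ascend{\cdot}_k$ adds is always the last input and output; hence, pushing $\ket\phi$ leftward piece by piece and invoking \autoref{prop:TH-local} at each step, it suffices to prove the catalysis equation for a single piece, $\ascend{A}_k\circ(id\otimes\ket\phi)\underset{\textnormal{TH}}{\sim}A\otimes\ket\phi$. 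For extremal pieces and for diagonal pieces with even coefficient $\ell$, one has $\ascend{A}_k=A\otimes id$ by definition, so this is just bifunctoriality of $\otimes$ and $\circ$ (up to $\underset{\textnormal{HH}}{\sim}$). For a diagonal piece with $\ell$ odd, $\ascend{A}_k=s\sum e^{2i\pi\frac{\mu}{2^k}(\frac{\ell-1}2+y')}\ketbra{\vec y,y'}{\vec y,\,y'\oplus\mu}$ with $\mu:=y_{i_1}\cdots y_{i_q}$, and I would run the composition against $\ket\phi=\sum_z e^{2i\pi\frac{z}{2^{k+1}}}\ket z$ by hand: the gluing forces $z=y'\oplus\mu$ (plus the $\vec y$-matching constraints), all removed by (HH)/(Elim), after which the phase $\frac{\widehat{y'\oplus\mu}}{2^{k+1}}+\frac{\mu}{2^k}(\frac{\ell-1}2+y')$ simplifies to $\frac{y'+\ell\mu}{2^{k+1}}$ — which is exactly the phase of $A\otimes\ket\phi$, and the scalars and kets/bras match too.

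The main obstacle is this last computation, the odd-$\ell$ case: one has to actually perform the $\cat{SOP}$-composition with $\ket\phi$, apply the right (HH)-substitution on the glued variable, and check that the phase genuinely collapses modulo $1$ to that of $A\otimes\ket\phi$ — this is precisely the identity that the definition of $\ascend{\cdot}_k$ was engineered to satisfy, and it realises the (non-unitary) ``catalysis'' instance described in the remark above. A lighter but still necessary point is the bookkeeping that the single extra input/output threaded through $\ascend{t}_k$ under the decomposition is exactly the wire that $\descend{\cdot}_k$ feeds $\ket\phi$ into and projects with $\bra0$; for this one uses that the ordering of monomials in \autoref{lem:sop-decomp} is immaterial up to $\underset{\textnormal{TH}}{\sim}$ (as already noted), so any convenient decomposition of $t$ may be used.
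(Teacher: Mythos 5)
Your proposal is correct and follows essentially the same route as the paper: reduce to single-monomial pieces via \autoref{lem:sop-decomp} (trivial for extremal/even-$\ell$ pieces, an explicit (HH)/(Elim) computation collapsing the phase to $\frac{y'+\ell\mu}{2^{k+1}}$ for odd $\ell$), assemble the catalysis equation $\ascend{t}_k\circ(id\otimes\ket\phi)\underset{\textnormal{TH}}\sim t\otimes\ket\phi$ inductively, and conclude with $(id\otimes\bra0)$ and \autoref{prop:TH-local}, which also gives the congruence claim exactly as in the paper.
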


\begin{proof}
\phantomsection\label{prf:descending-reverses-ascending}
Again, we can use the decomposition given in \autoref{lem:sop-decomp}. We can show that if $t=s\sum e^{2i\pi \frac\ell{2^{k+1}}y_{i_1}...y_{i_q}} \ketbra{\vec O}{\vec I}$, then $\ascend{t}_k\circ (id_n\otimes \sum e^{2i\pi \frac{y_0}{2^{k+1}}}\ket{y_0}) \underset{\textnormal{TH}}\sim t\otimes \sum e^{2i\pi \frac{y_0}{2^{k+1}}}\ket{y_0}$:

If $\ell\bmod2=0$, then $\ascend{t}_k = t\otimes id$ so $\ascend{t}_k\circ (id_n\otimes \sum e^{2i\pi \frac{y_0}{2^{k+1}}}\ket{y_0}) \underset{\textnormal{TH}}\sim t\otimes \sum e^{2i\pi \frac{y_0}{2^{k+1}}}\ket{y_0}$.

If $\ell\bmod2=1$, then:
\begin{align*}
\ascend{t}_k\circ (id_n\otimes \sum e^{2i\pi \frac{y_0}{2^{k+1}}}\ket{y_0})
\hspace*{25em}\\
= \frac s2\sum e^{2i\pi \left( \frac{y_{i_1}...y_{i_q}}{2^{k}}\left((\ell-1)/2+y'\right)+\frac{y'+y_{i_1}...y_{i_q}+y_0}2y''+\frac{y_0}{2^{k+1}}\right)} \ketbra{\vec O,y'}{\vec I}\\
\rewrite{\text{HH}(y'',[y_0\leftarrow y'{\oplus}y_{i_1}...y_{i_q}])}
s\sum e^{2i\pi \left( \frac{y_{i_1}...y_{i_q}}{2^{k}}\left((\ell-1)/2+y'\right)+\frac{y'+y_{i_1}...y_{i_q}-2y'y_{i_1}...y_{i_q}}{2^{k+1}}\right)} \ketbra{\vec O,y'}{\vec I}\\
=s\sum e^{2i\pi \left( \ell\frac{y_{i_1}...y_{i_q}}{2^{k}}+\frac{y'}{2^{k+1}}\right)} \ketbra{\vec O,y'}{\vec I} = t\otimes \sum e^{2i\pi \frac{y_0}{2^{k+1}}}\ket{y_0}
\end{align*}

Now, for an arbitrary $t\in \cat{SOP}[\frac1{2^{k+1}}]'$, we can do the above inductively on each term in its decomposition, resulting in $\ascend{t}_k\circ (id_n\otimes \sum e^{2i\pi \frac{y_0}{2^{k+1}}}\ket{y_0}) \underset{\textnormal{TH}}\sim t\otimes \sum e^{2i\pi \frac{y_0}{2^{k+1}}}\ket{y_0}$. Finally:
\begin{align*}
\descend{\ascend{t}_k}_k &= (id_m\otimes \bra0)\circ \ascend{t}_k \circ (id_n\otimes \sum e^{2i\pi \frac{y_0}{2^{k+1}}}\ket{y_0})\\
&\underset{\textnormal{TH}}\sim (id_m\otimes \bra0)\circ \left(t\otimes \sum e^{2i\pi \frac{y_0}{2^{k+1}}}\ket{y_0}\right)
\underset{\textnormal{TH}}\sim t
\end{align*}

The second result in the Proposition simply comes from the fact that $\descend{t_i}$ is built by composition from $t_i$, so Proposition \ref{prop:TH-local} gives the desired result.
\end{proof}

\subsection{Completeness}

We may now show completeness first for $\cat{SOP}[\frac1{2^{k+1}}]'$ and then tweak the equational theory to extend the result to $\cat{SOP}[\frac1{2^{k+1}}]$.

\begin{thm}[Completeness of $\cat{SOP}{[}\frac1{2^{k+1}}{]}'/\sim_{\raisebox{-1ex}{\hspace*{-2ex}\textnormal{\scriptsize TH}}}$]
\label{thm:dyadic-completeness-aux}
\[\forall t_1,t_2\in \cat{SOP}[{\textstyle\frac1{2^{k+1}}}]',~~\interp{t_1}=\interp{t_2}\iff t_1\underset{\textnormal{TH}}\sim t_2\]
\end{thm}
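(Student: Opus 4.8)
The plan is to prove the backward direction by soundness and the forward direction by induction on the dyadic level. For ``$\Leftarrow$'': if $t_1\underset{\textnormal{TH}}\sim t_2$, then $t_1$ and $t_2$ are linked by a finite zig-zag of $\underset{\textnormal{TH}}\longrightarrow$-steps, each preserving $\interp{\cdot}$ by \autoref{prop:TH-soundness}, so $\interp{t_1}=\interp{t_2}$. The content is ``$\Rightarrow$'', which I would establish by induction on $n\geq1$, proving that $\cat{SOP}[\frac1{2^n}]'/\underset{\textnormal{TH}}\sim$ is complete; the theorem is the case $n=k+1$.

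For the base case $n=1$, I would note that $\cat{SOP}[\frac12]'\subseteq\cat{SOP}[\frac12]$ (a scalar $\frac1{2^p}$ is the scalar $\frac1{\sqrt2^{2p}}$), so if $t_1,t_2\in\cat{SOP}[\frac12]'$ have $\interp{t_1}=\interp{t_2}$, then \autoref{thm:TH-completeness} directly yields $t_1\underset{\textnormal{TH}}\sim t_2$; the witnessing rewrites may leave the primed fragment, which is harmless since $\underset{\textnormal{TH}}\sim$ is defined on all of $\cat{SOP}$.

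For the inductive step, assuming completeness for $\cat{SOP}[\frac1{2^k}]'$ with $k\geq1$, I would take $t_1,t_2\in\cat{SOP}[\frac1{2^{k+1}}]'$ with $\interp{t_1}=\interp{t_2}$ and push them down with the functor $\ascend{\cdot}_k$. By \autoref{prop:ascending-encodes} we get $\interp{\ascend{t_1}_k}=\interp{\ascend{t_2}_k}$, so the induction hypothesis gives $\ascend{t_1}_k\underset{\textnormal{TH}}\sim\ascend{t_2}_k$. Since $\ascend{\cdot}_k$ adds an input and an output wire, each $\ascend{t_i}_k$ meets the domain condition of $\descend{\cdot}_k$; applying $\descend{\cdot}_k$ and using that it is a $\underset{\textnormal{TH}}\sim$-congruence (\autoref{prop:descending-reverses-ascending}) yields $\descend{\ascend{t_1}_k}_k\underset{\textnormal{TH}}\sim\descend{\ascend{t_2}_k}_k$, and the round-trip identity $\descend{\ascend{\cdot}_k}_k\underset{\textnormal{TH}}\sim(\cdot)$ from the same proposition then gives
\[t_1\underset{\textnormal{TH}}\sim\descend{\ascend{t_1}_k}_k\underset{\textnormal{TH}}\sim\descend{\ascend{t_2}_k}_k\underset{\textnormal{TH}}\sim t_2,\]
closing the induction.

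I expect no serious obstacle at the level of the theorem itself: the real work has been front-loaded into \autoref{prop:ascending-encodes} (semantic faithfulness of $\ascend{\cdot}_k$, resting on the degree-$2$ extension of $\mathbb Q[e^{i\pi/2^{k-1}}]$ inside $\mathbb Q[e^{i\pi/2^k}]$ and the unique decomposition $x=x_1+e^{i\pi/2^k}x_2$) and into \autoref{prop:descending-reverses-ascending} (that $\descend{\cdot}_k$ inverts $\ascend{\cdot}_k$ up to $\underset{\textnormal{TH}}\sim$, via \autoref{lem:sop-decomp} and \autoref{prop:TH-local}). The only point calling for a little care is that $\descend{\cdot}_k$ is a partial map, defined only on morphisms with a free input and output, but since we only ever feed it terms of the form $\ascend{t}_k$ — which supply exactly that extra wire — this never causes trouble. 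If a ``hard part'' must be singled out, it is the conceptual design choice of ascending one dyadic level, invoking completeness there, and descending back; its correctness is precisely what the two cited propositions guarantee.
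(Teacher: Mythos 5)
Your proposal is correct and follows essentially the same route as the paper: ascend via $\ascend{\cdot}_k$ using \autoref{prop:ascending-encodes}, invoke completeness of the Toffoli-Hadamard fragment (\autoref{thm:TH-completeness}, noting $\cat{SOP}[\frac12]'\subseteq\cat{SOP}[\frac12]$), and come back with $\descend{\cdot}_k$ via \autoref{prop:descending-reverses-ascending}. The only difference is presentational — you organise the descent of dyadic levels as an induction on $n$, whereas the paper iterates $\ascend{\cdot}_k,\ldots,\ascend{\cdot}_1$ all the way down to $\cat{SOP}[\frac12]'$ in one pass and then descends back — and your handling of soundness and of the partiality of $\descend{\cdot}_k$ is accurate.
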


\begin{proof}
Let $t_1,t_2\in \cat{SOP}[{\textstyle\frac1{2^{k+1}}}]'$ such that $\interp{t_1}=\interp{t_2}$. By \autoref{prop:ascending-encodes}:
\[\interp{\ascend{...\ascend{t_1}_k...}_1}=\interp{\ascend{...\ascend{t_2}_k...}_1}\]
Since $\ascend{...\ascend{t_i}_k...}_1 \in \cat{SOP}[\frac12]'\subset \cat{SOP}[\frac12]$, by completeness of this fragment (\autoref{thm:TH-completeness}): 
\[\ascend{...\ascend{t_1}_k...}_1 \underset{\textnormal{TH}}\sim \ascend{...\ascend{t_2}_k...}_1 \]
Finally, by \autoref{prop:descending-reverses-ascending}:
\[t_1\underset{\textnormal{TH}}\sim\descend{...\descend{\ascend{...\ascend{t_1}_k...}_1}_1...}_k \underset{\textnormal{TH}}\sim \descend{...\descend{\ascend{...\ascend{t_2}_k...}_1}_1...}_k\underset{\textnormal{TH}}\sim t_2.\qedhere\]
\end{proof}

This is not entirely satisfactory, as we would like to relate any two morphisms of the same interpretation. However:

\begin{lem}
\label{lem:empty-intersection-sop-sop'}
If $t_1\in \cat{SOP}[{\textstyle\frac1{2^{k+1}}}]'$ and $t_2\in\cat{SOP}[{\textstyle\frac1{2^{k+1}}}]\setminus \cat{SOP}[{\textstyle\frac1{2^{k+1}}}]'$, then $t_1\underset{\textnormal{TH}}{\nsim}t_2$.
\end{lem}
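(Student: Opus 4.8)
The plan is to exhibit a $\mathbb{Z}/2$-valued invariant of $\underset{\textnormal{TH}}{\sim}$ that tells the two fragments apart. For any $\cat{SOP}$-morphism $t=s\sum e^{2i\pi P}\ketbra{\vec O}{\vec I}$ in $\cat{SOP}[\frac1{2^{k+1}}]$, write the global scalar as $s=\frac1{\sqrt2^{p}}$ with $p\in\mathbb{Z}$ -- this $p$ is uniquely determined by the real number $s$ -- and put $\pi(t):=p\bmod 2$. By the very definitions of $\cat{SOP}[\frac1{2^{k+1}}]$ and $\cat{SOP}[\frac1{2^{k+1}}]'$, a morphism of $\cat{SOP}[\frac1{2^{k+1}}]$ lies in $\cat{SOP}[\frac1{2^{k+1}}]'$ exactly when $\pi(t)=0$, and in $\cat{SOP}[\frac1{2^{k+1}}]\setminus\cat{SOP}[\frac1{2^{k+1}}]'$ exactly when $\pi(t)=1$; so it is enough to show that $\pi$ is constant on $\underset{\textnormal{TH}}{\sim}$-classes.

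To prove that, I would check rule by rule that a single step of $\rewrite{\textnormal{TH}}$ preserves $\pi$ (hence so does its symmetric--transitive closure). Rules (ket) and (bra) are changes of variables and rules (HH) and (HHgen) are substitutions in the phase polynomial and in the kets/bras; none of them touches the global scalar, so $\pi$ is unchanged. Rule (Elim) multiplies the scalar by $2=(\sqrt2)^{2}$, i.e.\ decreases $p$ by $2$, so $\pi$ is again unchanged. The remaining rule, (Z), is the delicate one, since on the face of it it replaces the summand part by a fixed term and could be read as also normalising the scalar; I discuss it below. Granting that (Z) too preserves $\pi$, the lemma is immediate: $t_1\underset{\textnormal{TH}}{\sim}t_2$ would force $\pi(t_1)=\pi(t_2)$, contradicting $\pi(t_1)=0\neq 1=\pi(t_2)$.

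The step I expect to be the main obstacle is precisely the handling of (Z): one must make sure it cannot be used (in either direction) to turn a morphism carrying an odd power of $\sqrt2$ into the canonical zero term $\sum_{y_0}e^{2i\pi y_0/2}\ketbra{0\cdots0}{0\cdots0}$ -- that is, that its scalar is really carried through. When $\interp{t_1}=\interp{t_2}\neq 0$ this is automatic: by soundness (\autoref{prop:TH-soundness}) every morphism appearing along a rewriting witness of $t_1\underset{\textnormal{TH}}{\sim}t_2$ has that same nonzero interpretation, whereas the left-hand side of (Z) always has interpretation $0$, so (Z) never occurs and only the (already $\pi$-preserving) substitution and elimination rules are used. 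When $\interp{t_1}=\interp{t_2}=0$ one instead reduces $t_1$ to the canonical zero of its type using \autoref{thm:dyadic-completeness-aux} and then has to argue about the scalar of $t_2$ directly; it is this bookkeeping for the zero morphisms that carries the weight of the proof.
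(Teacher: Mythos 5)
Your invariant and the rule-by-rule parity check in your first two paragraphs are essentially the paper's entire proof, which consists of the single assertion that no rule of $\rewrite{\textnormal{TH}}$ changes the overall scalar from an odd power of $\frac1{\sqrt2}$ to an even one or vice-versa. Your treatment of the case $\interp{t_1}=\interp{t_2}\neq0$ via soundness is correct and in fact more careful than the paper: both sides of a (Z)-step denote the zero map, so no (Z)-step can occur in a zig-zag between terms sharing a nonzero interpretation, and every other rule either leaves the scalar untouched or multiplies it by $2$. (The case $\interp{t_1}\neq\interp{t_2}$ is immediate from soundness.)

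The zero-interpretation case, however, is a genuine gap in your proposal, and it is not mere bookkeeping. As printed, rule (Z) replaces $s\sum e^{2i\pi(\frac{y_0}2+R)}\ketbra{\vec O}{\vec I}$ by $\sum_{y_0}e^{2i\pi\frac{y_0}2}\ketbra{0\cdots0}{0\cdots0}$, \emph{discarding} the scalar $s$, so it does change your invariant $\pi$ whenever $s$ is an odd power of $\frac1{\sqrt2}$: for instance $t_1=\sum_{y_0,y_1}e^{2i\pi(\frac{y_0}2+\frac{y_1}2)}$ and $t_2=\frac1{\sqrt2}\sum_{y_0,y_1}e^{2i\pi(\frac{y_0}2+\frac{y_1}2)}$ both rewrite by (Z) to $\sum_{y_0}e^{2i\pi\frac{y_0}2}$ and are therefore $\underset{\textnormal{TH}}{\sim}$-related, although they lie on opposite sides of the primed/unprimed divide. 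This scalar-dropping behaviour is not incidental: the derivation of the ZH rule (Z) in the proof of \autoref{prop:TH-proves-TH} uses exactly such a step, $\frac1{\sqrt2}\sum e^{2i\pi\frac y2}\rewrite{\text{Z}}\sum e^{2i\pi\frac y2}$, and completeness on zero maps requires it. Consequently your plan for the zero case cannot be completed as stated: after reducing $t_1$ to the canonical zero via \autoref{thm:dyadic-completeness-aux}, the remaining claim -- that a zero term outside $\cat{SOP}[\frac1{2^{k+1}}]'$ is not equivalent to that canonical zero -- is an instance of the lemma itself and, with (Z) read literally, fails. The parity argument closes uniformly only on terms with nonzero interpretation (where your soundness argument already suffices), or under a reading of (Z) that carries the scalar $s$ to the right-hand side, which is what the paper's one-line proof implicitly assumes; so the delicate point you flagged is precisely where the paper's own argument is loose, but your proposal as written does not resolve it either.
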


\begin{proof}
There is no rule in $\rewrite{\textnormal{TH}}$ that changes the overall scalar from an odd power of $\frac1{\sqrt2}$ to an even one, or vice-versa.
\end{proof}

However, adding a single rule:
\[\sum_{\vec y} e^{2i\pi \left(\frac18+\frac34y_0+R\right)}\ketbra{\vec O}{\vec I} \rewrite{y_0\notin\Var(R,\vec O,\vec I)} \sqrt2\sum_{\vec y\setminus\{y_0\}} e^{2i\pi R}\ketbra{\vec O}{\vec I} \tag{$\sqrt2$}\]
fixes this caveat. This rule can also be recovered from the more general one:
\[\sum_{\vec y} e^{2i\pi\left(\frac{y_0}{4} + \frac{y_0}{2}\widehat{Q} + R\right)}\ketbra{\vec O}{\vec I}
\rewrite{y_0\notin\Var(Q,R,\vec O,\vec I)} \sqrt{2}\sum_{\vec y\setminus{\{y_0\}}} e^{2i\pi\left(\frac{1}{8}-\frac{1}{4}\widehat{Q} + R\right)}\ketbra{\vec O}{\vec I}\tag{$\omega$}\]
which was already used in \cite{SOP,LvdWK,SOP-Clifford} to deal with the Clifford fragment of quantum mechanics.

With this additional rule at hand, we can derive the general completeness theorem:
\begin{thm}[Completeness of $\cat{SOP}{[}\frac1{2^{k+1}}{]}/\sim_{\raisebox{-1ex}{\hspace*{-2ex}\textnormal{\scriptsize TH'}}}$]
Let us write $\rewrite{\textnormal{TH'}}~:=~\rewrite{\textnormal{TH}}+\{(\sqrt2)\}$. Then: 
$\forall t_1,t_2\in \cat{SOP}[{\textstyle\frac1{2^{k+1}}}],~~\interp{t_1}=\interp{t_2}\iff t_1\underset{\textnormal{TH'}}\sim t_2$
\end{thm}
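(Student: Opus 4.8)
The plan is to reduce the statement to the already-established completeness of $\cat{SOP}[\frac1{2^{N}}]'/\underset{\textnormal{TH}}\sim$ (\autoref{thm:dyadic-completeness-aux}) for the ambient level $N := \max(3,k+1)$, using the new rule $(\sqrt2)$ only to pass between a morphism whose overall scalar is an odd power of $\frac1{\sqrt2}$ and an equivalent morphism that lives in the primed fragment.

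For the direction $\Leftarrow$ I would simply check that every rule of $\rewrite{\textnormal{TH'}}$ is sound: those of $\rewrite{\textnormal{TH}}$ by \autoref{prop:TH-soundness}, and $(\sqrt2)$ by the direct computation $\sum_{y_0\in\{0,1\}} e^{2i\pi(\frac18+\frac34 y_0)} = e^{i\pi/4}(1-i) = \sqrt2$, so that both sides of $(\sqrt2)$ have the same interpretation (this also follows from the soundness of $(\omega)$, already used in \cite{SOP,LvdWK,SOP-Clifford}, of which $(\sqrt2)$ is a consequence). Hence $t_1\underset{\textnormal{TH'}}\sim t_2 \implies \interp{t_1}=\interp{t_2}$.

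For the direction $\Rightarrow$, fix $t_1,t_2\in\cat{SOP}[\frac1{2^{k+1}}]$ with $\interp{t_1}=\interp{t_2}$, set $N:=\max(3,k+1)$, and let $g:=\frac1{\sqrt2}\sum_{y_0}e^{2i\pi(\frac18+\frac34 y_0)}$, which has $\interp{g}=1$. For each $i$, if the overall scalar of $t_i$ is an even power of $\frac1{\sqrt2}$ put $\hat t_i:=t_i$, and otherwise put $\hat t_i:=t_i\otimes g$. In the first case $\hat t_i\in\cat{SOP}[\frac1{2^{k+1}}]'\subseteq\cat{SOP}[\frac1{2^{N}}]'$ and trivially $t_i\underset{\textnormal{TH'}}\sim\hat t_i$. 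In the second case the scalar of $t_i\otimes g$ is an even power of $\frac1{\sqrt2}$, i.e.\ a power of $\frac12$, and since $N\geq 3$ all phases have denominator dividing $2^{N}$, so $\hat t_i\in\cat{SOP}[\frac1{2^{N}}]'$; moreover a single application of $(\sqrt2)$, which absorbs the fresh variable $y_0$ of $g$ and turns the scalar $\frac1{\sqrt2^{q+1}}$ of $t_i\otimes g$ back into the scalar $\frac1{\sqrt2^{q}}$ of $t_i$ (using $\sqrt2\cdot\frac1{\sqrt2^{q+1}}=\frac1{\sqrt2^{q}}$), gives $\hat t_i\underset{\textnormal{TH'}}\sim t_i$. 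In all cases $\interp{\hat t_i}=\interp{t_i}$, so $\interp{\hat t_1}=\interp{\hat t_2}$, and by \autoref{thm:dyadic-completeness-aux} applied at level $N$ we get $\hat t_1\underset{\textnormal{TH}}\sim\hat t_2$, hence $\hat t_1\underset{\textnormal{TH'}}\sim\hat t_2$. Chaining, $t_1\underset{\textnormal{TH'}}\sim\hat t_1\underset{\textnormal{TH'}}\sim\hat t_2\underset{\textnormal{TH'}}\sim t_2$.

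The only genuinely delicate part is the scalar bookkeeping together with the choice of ambient level $N=\max(3,k+1)$: one has to verify that the normalizing gadget $g$ always fits inside $\cat{SOP}[\frac1{2^{N}}]'$ (which forces $N\geq 3$, and thereby also covers the cases $k+1\in\{1,2\}$, where $(\sqrt2)$ does not live in the nominal fragment but may still be used as an equation passing through the richer fragment $\cat{SOP}[\frac1{2^3}]'$) and that $(\sqrt2)$ cancels exactly the extra $\sqrt2$ introduced by $g$. Note that this unified construction also absorbs the ``mixed'' case where one of $t_1,t_2$ is primed and the other is not. Everything else is immediate from the results already proved.
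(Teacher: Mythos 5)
Your proposal is correct and follows essentially the same route as the paper: tensor any term whose scalar is an odd power of $\frac1{\sqrt2}$ with the gadget $\frac{1}{\sqrt2}\sum e^{2i\pi(\frac18+\frac34 y_0)}$ to land in $\cat{SOP}[\frac1{2^{\max(3,k+1)}}]'$, relate the two via rule $(\sqrt2)$, and conclude by the completeness of the primed fragment. The extra details you give (soundness of $(\sqrt2)$, the scalar bookkeeping, and the mixed case) are exactly the ``easy to check'' steps the paper leaves implicit.
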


\begin{proof}
Let $t_1,t_2\in \cat{SOP}[{\textstyle\frac1{2^{k+1}}}]$ such that $\interp{t_1}=\interp{t_2}$. Let us also write:
\[t_{\sqrt2}:=\frac{1}{\sqrt2} \sum e^{2i\pi\left(\frac{1}{8} + \frac{3}{4}y_{0}\right)}\]
We define $t_i'$ as:
$\qquad t_i':=\begin{cases}
t_i &\text{ if } t_i\in \cat{SOP}[{\textstyle\frac1{2^{k+1}}}]'\\
t_i\otimes t_{\sqrt2} &\text{ if } t_i\notin \cat{SOP}[{\textstyle\frac1{2^{k+1}}}]'
\end{cases}$.\\
It is easy to check that $t_i'\in \cat{SOP}[{\textstyle\frac1{2^{\max(3,k+1)}}}]'$ and that $t_i\underset{\textnormal{TH'}}\sim t_i'$. By \autoref{thm:dyadic-completeness-aux}:
\[t_1 \underset{\textnormal{TH'}}\sim t_1' \underset{\textnormal{TH'}}\sim t_2' \underset{\textnormal{TH'}}\sim t_2\qedhere\]
\end{proof}

We hence have completeness for all dyadic fragments of quantum computation. By taking their union, we can get completeness for the ``whole dyadic fragment''.

\begin{defi}
Let $\cat{SOP}[\mathbb D] := \bigcup\limits_{k=1}^\infty \cat{SOP}[\frac1{2^{k}}]$ be the whole dyadic fragment of quantum computation.
\end{defi}

\begin{cor}[Completeness of $\cat{SOP}{[}\mathbb D{]}/\sim_{\raisebox{-1ex}{\hspace*{-2ex}\textnormal{\scriptsize TH'}}}$]
\[\forall t_1,t_2\in \cat{SOP}[{\textstyle\mathbb D}],~~\interp{t_1}=\interp{t_2}\iff t_1\underset{\textnormal{TH'}}\sim t_2\]
\end{cor}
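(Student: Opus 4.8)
The plan is to bootstrap from the completeness of the individual dyadic levels $\cat{SOP}[\frac1{2^{k+1}}]/\underset{\textnormal{TH'}}{\sim}$ established just above, using that $\cat{SOP}[\mathbb D]$ is a \emph{nested} union of these fragments. For the soundness direction ($\Leftarrow$), I would note that $\rewrite{\textnormal{TH'}} = \rewrite{\textnormal{TH}} + \{(\sqrt2)\}$, that $\rewrite{\textnormal{TH}}$ preserves the interpretation by \autoref{prop:TH-soundness}, and that $(\sqrt2)$ is sound because $\interp{\frac1{\sqrt2}\sum_{y_0} e^{2i\pi\left(\frac18+\frac34 y_0\right)}} = 1$ (the lemma preceding the rule), so that removing or adjoining such a factor does not change $\interp{\cdot}$; since $\underset{\textnormal{TH'}}{\sim}$ is the symmetric--transitive closure of $\rewrite{\textnormal{TH'}}$, this gives $t_1\underset{\textnormal{TH'}}{\sim}t_2 \implies \interp{t_1}=\interp{t_2}$.

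For the completeness direction ($\Rightarrow$), the key observation is the chain of inclusions $\cat{SOP}[\frac1{2^{n}}] \subseteq \cat{SOP}[\frac1{2^{n+1}}]$: a morphism $\frac1{\sqrt2^{p}}\sum e^{2i\pi\frac{P}{2^n}}\ketbra{\vec O}{\vec I}$ is literally the same tuple as $\frac1{\sqrt2^{p}}\sum e^{2i\pi\frac{2P}{2^{n+1}}}\ketbra{\vec O}{\vec I}$, only the presentation of the phase polynomial changes, and crucially the overall scalar stays a power of $\frac1{\sqrt2}$. Hence, given $t_1,t_2\in\cat{SOP}[\mathbb D]$ with $\interp{t_1}=\interp{t_2}$, there are $n_1,n_2$ with $t_i\in\cat{SOP}[\frac1{2^{n_i}}]$, and setting $k+1:=\max(n_1,n_2)$ we get $t_1,t_2\in\cat{SOP}[\frac1{2^{k+1}}]$. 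Applying the completeness theorem for $\cat{SOP}[\frac1{2^{k+1}}]/\underset{\textnormal{TH'}}{\sim}$ then yields $t_1\underset{\textnormal{TH'}}{\sim}t_2$, which completes the proof.

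The main obstacle here is essentially nil: this is a routine union argument, and the only point deserving care is the verification that consecutive dyadic levels are nested and that this nesting does not perturb the form of the overall scalar, which is exactly the hypothesis needed to invoke the finite-level theorem. (If one wanted to avoid even this, one could alternatively first normalise both terms into $\cat{SOP}[\frac1{2^{\max(3,k+1)}}]'$ via the $t_{\sqrt2}$ trick and appeal to \autoref{thm:dyadic-completeness-aux} instead, at the cost of one extra application of $(\sqrt2)$.)
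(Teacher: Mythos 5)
Your proposal is correct and matches the argument the paper intends (it leaves this corollary's proof implicit): soundness of $\rewrite{\textnormal{TH'}}$ plus the nesting $\cat{SOP}[\frac1{2^{n}}]\subseteq\cat{SOP}[\frac1{2^{n+1}}]$ (rewriting $\frac{P}{2^n}$ as $\frac{2P}{2^{n+1}}$, scalar unchanged), so any two terms of $\cat{SOP}[\mathbb D]$ with equal interpretation lie in a common level and the finite-level completeness theorem applies.
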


\section{Summing and Concatenating $\cat{SOP}$-Morphisms}
\label{sec:sum-concat}

We show in this section two interesting constructions that allow us to perform operations on $\cat{SOP}$-morphisms that are not primitively doable in $\cat{SOP}$ or more generally in gate-based quantum computation, but necessary when considering Hamiltonian-based computation \cite{Shaikh2022sum}, namely their sum and concatenation. Specifically, for any two $\cat{SOP}$-morphisms $t_0,t_1:n\to m$, we want to be able to build terms $t_{\textit{sum}}$ and $t_{\textit{concat}}$ such that:
\begin{align*}
\interp{t_{\textit{sum}}} &= \interp{t_0}+\interp{t_1}\\
\interp{t_{\textit{concat}}} &= \ket0\otimes\interp{t_0} + \ket1\otimes\interp{t_1}
\end{align*}
As we will see, these constructions are well suited to the dyadic fragments, as they can be performed entirely inside them.

To do so, we need a notion of controlled $\cat{SOP}$-morphism, inferred from \cite{ZXNormalForm} and made systematic in the graphical framework in \cite{Jeandel2022Addition}. 
\begin{defi}
A $\cat{SOP}$-morphism $t:n+1\to m$ is called a \emph{controlled morphism} if: $$\interp{t\circ(\ket0\otimes id_n)} = \sum_{\vec y\in\{0,1\}^{n+m}} \ketbra{y_0,...,y_m}{y_{m+1},...,y_{m+n}} = \interp{H^n_m(1)}$$
where $H^n_m(1)$ is the H-spider from $\cat{ZH}$ with parameter $1$. It will be used as a shortcut notation to represent the linear map $\sum_{\vec y\in\{0,1\}^{n+m}} \ketbra{y_0,...,y_m}{y_{m+1},...,y_{m+n}}$ or equivalently the $\cat{SOP}$-morphism $\sum_{\vec y} \ketbra{y_0,...,y_m}{y_{m+1},...,y_{m+n}}$.\\
In a controlled morphism, the rightmost input is called the \emph{control input}. We also call the morphism $t\circ(\ket1\otimes id_n)$ the \emph{controlee}.
\end{defi}

\begin{exa}
\label{ex:controlled-identity}
$t:=\frac{1}{2} \sum e^{2i\pi\left(\frac{1}{2}y_{0}y_{1}y_{6} + \frac{1}{2}y_{1}y_{2}y_{6}\right)}\ket{y_{0}}\!\!\bra{y_{1}, y_{2}}
$ is a controlled morphism, as $\interp{t} = \bra0\otimes\interp{H^1_1(1)} + \bra1\otimes id_1$.
\end{exa}

In the following, we show how given two controlled morphisms, we can build a third controlled morphism whose controlee is the sum (resp.~the concatenation) of the controlees of the first two terms. Finally, we show how to turns any term $t$ into a controlled morphism whose controlee is precisely $t$. Since recovering the controlee out of a controlled term is easy (it suffices to apply $\ket1$ to the control input), the process towards building e.g.~the sum of two terms $t_0$ and $t_1$ would then be:
\begin{itemize}
\item build controlled terms $\Lambda t_0$ and $\Lambda t_1$ controlling $t_0$ and $t_1$ respectively
\item use the construction to build $\Lambda (t_0+t_1)$, a term controlling $t_0+t_1$
\item apply $\ket1$ to the control inputs to get $t_0+t_1$
\end{itemize}

\subsection{The Constructions}

Using this notion of controlled morphism, there exists a construction that will allow us to perform a sum of terms. In this construction, we need in particular:
$$t_+:=\frac{1}{2} \sum e^{2i\pi\left(\frac{1}{2}y_{0}y_{1}y_{3}\right)}\ket{y_{0}, y_{1}}\!\!\bra{y_{0}{\oplus}y_{1}}$$
and the family of morphisms:
$$\operatorname{cp}_n:n\to 2n:= \sum_{\vec y}\ketbra{\vec y,\vec y}{\vec y}$$
We also need the following identities:
\begin{lem}
$\interp{H_{m_1}^{n_1}(1)}\otimes\interp{H_{m_2}^{n_2}(1)} = \interp{H_{m_1+m_2}^{n_1+n_2}(1)}$
\end{lem}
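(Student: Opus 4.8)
The plan is to unfold the standard interpretation of the H-spider with parameter $1$ and reduce the claim to an elementary fact about Kronecker products. Since $1^{j_1\ldots j_m i_1\ldots i_n} = 1$ for every choice of bits, the definition of $\interp{\cdot}$ on H-spiders gives immediately
\[\interp{H^n_m(1)} = \sum_{\vec j\in\{0,1\}^m,~\vec i\in\{0,1\}^n}\ketbra{\vec j}{\vec i},\]
i.e.\ $\interp{H^n_m(1)}$ is the $2^m\times 2^n$ matrix all of whose entries equal $1$. So the statement to prove is just that the tensor product of an all-ones $2^{m_1}\times 2^{n_1}$ matrix with an all-ones $2^{m_2}\times 2^{n_2}$ matrix is the all-ones $2^{m_1+m_2}\times 2^{n_1+n_2}$ matrix.

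Carrying this out, I would first expand the left-hand side using bilinearity of $(\cdot\otimes\cdot)$ together with the interchange law $\ketbra{a}{b}\otimes\ketbra{c}{d} = \ketbra{a,c}{b,d}$ coming from the $\dagger$-compact PROP structure of $\cat{Qubit}$, obtaining
\[\interp{H_{m_1}^{n_1}(1)}\otimes\interp{H_{m_2}^{n_2}(1)} = \sum_{\substack{\vec j_1\in\{0,1\}^{m_1},~\vec i_1\in\{0,1\}^{n_1}\\\vec j_2\in\{0,1\}^{m_2},~\vec i_2\in\{0,1\}^{n_2}}}\ketbra{\vec j_1,\vec j_2}{\vec i_1,\vec i_2}.\]
I would then observe that concatenation $(\vec j_1,\vec j_2)\mapsto \vec j_1\vec j_2$ is a bijection $\{0,1\}^{m_1}\times\{0,1\}^{m_2}\xrightarrow{\sim}\{0,1\}^{m_1+m_2}$, and similarly on the input side, so re-indexing the sum over the concatenated tuples yields exactly $\sum_{\vec J\in\{0,1\}^{m_1+m_2},~\vec I\in\{0,1\}^{n_1+n_2}}\ketbra{\vec J}{\vec I} = \interp{H_{m_1+m_2}^{n_1+n_2}(1)}$, which is the right-hand side.

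There is essentially no obstacle here; the only point requiring a little care is the bookkeeping of tensor orderings — that $\ket{\vec j_1}\otimes\ket{\vec j_2}$ is identified with the basis ket $\ket{\vec j_1,\vec j_2}$, and likewise on bras — but this is precisely the convention already fixed by the definitions of $(\cdot\otimes\cdot)$ and of the Dirac notation on $\cat{Qubit}$. Alternatively one could argue purely diagrammatically within $\cat{ZH}$ by fusing the two H-spiders, but since the statement is phrased at the level of $\cat{Qubit}$-morphisms, the direct computation above is the shortest path.
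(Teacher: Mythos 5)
Your proposal is correct and matches the paper's argument: the paper also just observes that $\interp{H^n_m(1)}$ is the sum of all $\ketbra{\vec j}{\vec i}$ over basis bitstrings (the all-ones matrix) and calls the tensor-product identity obvious; you merely spell out the re-indexing by concatenation that the paper leaves implicit.
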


\begin{proof}
This is obvious from the fact that:
\[\interp{H^n_m(1)} = \sum_{\vec y\in\{0,1\}^{n+m}} \ketbra{y_0,...,y_m}{y_{m+1},...,y_{m+n}}\qedhere\]
\end{proof}

\begin{lem}
$\quad~\left(\interp{H_0^n(1)}\otimes ~.~\right)\circ \interp{\operatorname{cp}_n} = (.)\qquad\qquad
(~.~\otimes \interp{H_0^n(1)})\circ \interp{\operatorname{cp}_n} = (.)$
\[\interp{\operatorname{cp}_m}^\dagger\circ\left(\interp{H_m^0(1)}\otimes ~.~\right) = (.)\qquad\qquad
\interp{\operatorname{cp}_m}^\dagger\circ\left(~.~\otimes \interp{H_m^0(1)}\right) = (.)\]
\end{lem}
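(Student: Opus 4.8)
The plan is to expand the left-hand side of each of the four identities in Dirac notation and to observe that, in every case, one of the two tensor factors evaluates to the scalar $1$ on each computational basis vector, leaving the other factor unchanged. The only ingredients are the explicit formulas $\interp{\operatorname{cp}_n} = \sum_{\vec y\in\{0,1\}^n}(\ket{\vec y}\otimes\ket{\vec y})\bra{\vec y}$, $\interp{H_0^n(1)} = \sum_{\vec i\in\{0,1\}^n}\bra{\vec i}$ and $\interp{H_m^0(1)} = \sum_{\vec j\in\{0,1\}^m}\ket{\vec j}$ (the last two being read off from the standard interpretation of the $H$-spider with parameter $1$), together with the fact that $\interp{H_0^n(1)}\ket{\vec y} = 1$ and $\bra{\vec y}\interp{H_m^0(1)} = 1$ for every basis vector $\ket{\vec y}$, and the resolution of identity $\sum_{\vec y}\ketbra{\vec y}{\vec y} = id$.

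For the two identities of the first row I would compute, for any morphism $f$ with $n$ inputs,
\[
\left(\interp{H_0^n(1)}\otimes f\right)\circ\interp{\operatorname{cp}_n}
= \sum_{\vec y\in\{0,1\}^n}\bigl(\interp{H_0^n(1)}\ket{\vec y}\bigr)\,(f\ket{\vec y})\bra{\vec y}
= \sum_{\vec y\in\{0,1\}^n}(f\ket{\vec y})\bra{\vec y} = f ,
\]
and symmetrically $\left(f\otimes\interp{H_0^n(1)}\right)\circ\interp{\operatorname{cp}_n} = f$, the only difference being that $\interp{H_0^n(1)}$ is then contracted against the second of the two copies produced by $\interp{\operatorname{cp}_n}$.

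For the second row I would dualise, using $\interp{\operatorname{cp}_m}^\dagger = \sum_{\vec y\in\{0,1\}^m}\ket{\vec y}\,(\bra{\vec y}\otimes\bra{\vec y})$: for any morphism $f$ with $m$ outputs,
\[
\interp{\operatorname{cp}_m}^\dagger\circ\left(\interp{H_m^0(1)}\otimes f\right)
= \sum_{\vec y\in\{0,1\}^m}\ket{\vec y}\,\bigl(\bra{\vec y}\interp{H_m^0(1)}\bigr)\,(\bra{\vec y}f)
= \sum_{\vec y\in\{0,1\}^m}\ketbra{\vec y}{\vec y}\,f = f ,
\]
and symmetrically $\interp{\operatorname{cp}_m}^\dagger\circ\left(f\otimes\interp{H_m^0(1)}\right) = f$. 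There is no genuine obstacle here; the only point requiring a little care is the bookkeeping of the tensor factors — i.e.\ keeping track of which of the two copied wire-blocks the $H$-spider is applied to — which is fixed by the convention $\ket{\vec y,\vec y} = \ket{\vec y}\otimes\ket{\vec y}$.
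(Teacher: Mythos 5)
Your proof is correct and follows essentially the same route as the paper: expand $\operatorname{cp}$ and the parameter-$1$ H-spiders in the computational basis and observe that the all-ones (co)vector contracts each copy to the scalar $1$, leaving a resolution of the identity. The only cosmetic difference is that you carry the arbitrary morphism $f$ through the computation, whereas the paper computes the composite with the identity in the hole (yielding $id_n$) and notes the remaining cases are analogous.
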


\begin{proof}
The first equation is obtained by:
\begin{align*}
\left(\interp{H_0^n(1)}\otimes ~.~\right)\circ \interp{\operatorname{cp}_n} &= \left(\sum_{\vec y_1,\vec y_2}\ketbra{\vec y_2}{\vec y_1,\vec y_2}\right)\circ\left(\sum_{\vec y}\ketbra{\vec y,\vec y}{\vec y}\right)=\sum_{\vec y}\ketbra{\vec y}{\vec y} = id_n
\end{align*}
The other three equations can be obtained similarly.
\end{proof}

We may now build, from two controlled morphisms, a third controlled morphism whose controlee is the sum of the two first controlees:
\begin{prop}
\label{prop:sum-SOP}
Let $t_1,t_2:n+1\to m$ be two controlled morphisms. We define:
$$t:=\operatorname{cp}_m^\dagger\circ (t_1\otimes t_2) \circ (id_1\otimes\sigma_{1,n}\otimes id_n)\circ (t_+\otimes\operatorname{cp}_n)$$
Then:
$$\interp{t}=\bra0\otimes\interp{H_m^n(1)} + \bra1\otimes\left(\vphantom{\rule{1pt}{1em}}\interp{t_1\circ(\ket1\otimes id_n)}+\interp{t_2\circ(\ket1\otimes id_n)}\right)$$
\end{prop}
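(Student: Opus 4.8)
The plan is to compute $\interp{t}$ explicitly. Since $\interp{\cdot}$ preserves $\circ$ and $\otimes$, we have
\[\interp{t}=\interp{\operatorname{cp}_m}^\dagger\circ(\interp{t_1}\otimes\interp{t_2})\circ(id_1\otimes\sigma_{1,n}\otimes id_n)\circ(\interp{t_+}\otimes\interp{\operatorname{cp}_n}).\]
The first input of $t$ is the one fed into $t_+$, so it plays the role of the control input; and a linear map $f:n+1\to m$ whose first input is distinguished is determined by $f\circ(\ket0\otimes id_n)$ and $f\circ(\ket1\otimes id_n)$ via $f=\bra0\otimes\big(f\circ(\ket0\otimes id_n)\big)+\bra1\otimes\big(f\circ(\ket1\otimes id_n)\big)$. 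Hence it suffices to compute $\interp{t}\circ(\ket b\otimes id_n)$ for $b\in\{0,1\}$, and by linearity we may evaluate on a data basis state $\ket{\vec x}$ with $\vec x\in\{0,1\}^n$.

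I would first record three easy facts. A direct computation (substitute the input of $t_+$ and sum over its internal variable) gives $\interp{t_+}\circ\ket0=\ket{0,0}$ and $\interp{t_+}\circ\ket1=\ket{0,1}+\ket{1,0}$; the map $\interp{\operatorname{cp}_n}$ sends $\ket{\vec x}$ to $\ket{\vec x,\vec x}$; and the controlled hypothesis together with the explicit form of $\interp{H_m^n(1)}$ gives $\interp{t_i}\circ(\ket0\otimes id_n)=\interp{H_m^n(1)}$, which sends every $\ket{\vec x}$ to the fixed unnormalised uniform state $u:=\interp{H_m^0(1)}=\sum_{\vec z\in\{0,1\}^m}\ket{\vec z}$. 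Write $T_i:=\interp{t_i\circ(\ket1\otimes id_n)}$ for the $i$-th controlee.

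Now I would push $\ket b\otimes\ket{\vec x}$ through the layers. The bottom layer outputs $\big(\interp{t_+}\circ\ket b\big)\otimes\ket{\vec x,\vec x}$, and the permutation $id_1\otimes\sigma_{1,n}\otimes id_n$ reorders the $2+2n$ wires so that the first output of $t_+$ is placed in front of the first copy of $\vec x$ and the second output of $t_+$ in front of the second copy; that is, $t_1\otimes t_2$ receives $(\text{control}_1,\vec x)$ on the left and $(\text{control}_2,\vec x)$ on the right, where $(\text{control}_1,\text{control}_2)$ are the two wires produced by $t_+$. For $b=0$ this pair is $(0,0)$, so each $t_i$ returns $u$ and $t_1\otimes t_2$ outputs $u\otimes u$; applying $\operatorname{cp}_m^\dagger$ and using $\interp{\operatorname{cp}_m}^\dagger\circ(\interp{H_m^0(1)}\otimes\,\cdot\,)=(\cdot)$ collapses $u\otimes u$ to $u$, so $\interp{t}\circ(\ket0\otimes id_n)=\interp{H_m^n(1)}$. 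For $b=1$ the control part is $\ket{0,1}+\ket{1,0}$, so after $t_1\otimes t_2$ we obtain $u\otimes(T_2\ket{\vec x})+(T_1\ket{\vec x})\otimes u$; applying $\operatorname{cp}_m^\dagger$ together with the two identities $\interp{\operatorname{cp}_m}^\dagger\circ(\interp{H_m^0(1)}\otimes\,\cdot\,)=(\cdot)$ and $\interp{\operatorname{cp}_m}^\dagger\circ(\,\cdot\,\otimes\interp{H_m^0(1)})=(\cdot)$ gives $T_2\ket{\vec x}+T_1\ket{\vec x}$, so $\interp{t}\circ(\ket1\otimes id_n)=T_1+T_2$. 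Substituting both cases into the decomposition above yields $\interp{t}=\bra0\otimes\interp{H_m^n(1)}+\bra1\otimes(T_1+T_2)$, which is the claimed formula.

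The genuinely routine parts are the evaluations of $t_+$ and $\operatorname{cp}_n$; the step that needs care is the bookkeeping of the permutation $id_1\otimes\sigma_{1,n}\otimes id_n$ — one must check that each $t_i$ is handed exactly one of the two wires output by $t_+$ as its control input, together with a faithful copy of $\vec x$, and that in the $b=1$ case it is the branch where $t_i$'s control equals $1$ that contributes $T_i\ket{\vec x}$. This routing is precisely what makes $\operatorname{cp}_m^\dagger$ select $T_1\ket{\vec x}+T_2\ket{\vec x}$ (a sum), rather than a single term or the product $T_1\ket{\vec x}\otimes T_2\ket{\vec x}$.
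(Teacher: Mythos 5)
Your proposal is correct and follows essentially the same route as the paper's proof: evaluate the control input on $\ket0$ and $\ket1$ (using $\interp{t_+}\circ\ket0=\ket{0,0}$ and $\interp{t_+}\circ\ket1=\ket{0,1}+\ket{1,0}$), invoke the controlled-morphism hypothesis $\interp{t_i\circ(\ket0\otimes id_n)}=\interp{H_m^n(1)}$, and collapse with the $\operatorname{cp}_m^\dagger$/$H(1)$ identities, splitting the $\ket1$ case into the two branches by linearity. The only cosmetic difference is that you track basis states $\ket{\vec x}$ through the layers and make the final reassembly $\interp{t}=\bra0\otimes(\cdot)+\bra1\otimes(\cdot)$ explicit, whereas the paper works directly at the level of composed maps and leaves that last step implicit.
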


\begin{proof}
\phantomsection\label{prf:sum-SOP}
First, notice that $t_+\circ\ket0\rewrite{\text{HH}}\ket{0,0}$ and $t_+\circ\ket1\rewrite{\text{HH}}\sum\ket{y_0,y_0{\oplus}1}$.

Then:
\begin{align*}
\interp{t\circ(\ket0\otimes id_n)}
&= \interp{\operatorname{cp}_m^\dagger\circ (t_1\otimes t_2) \circ (id_1\otimes\sigma_{1,n}\otimes id_n)\circ ((t_+\circ\ket0)\otimes\operatorname{cp}_n)}\\
&= \interp{\operatorname{cp}_m^\dagger\circ (t_1\otimes t_2) \circ (id_1\otimes\sigma_{1,n}\otimes id_n)\circ (\ket{0,0}\otimes\operatorname{cp}_n)}\\
&= \interp{\operatorname{cp}_m}^\dagger\circ (\interp{t_1\circ(\ket0\otimes id_n)}\otimes \interp{t_2\circ(\ket0\otimes id_n)}) \circ \interp{\operatorname{cp}_n}\\
&= \interp{\operatorname{cp}_m}^\dagger\circ (\interp{H_m^n(1)}\otimes \interp{H_m^n(1)}) \circ \interp{\operatorname{cp}_n}
= \interp{H_m^n(1)}
\end{align*}
and:
\begin{align*}
\interp{t\circ(\ket1\otimes id_n)}
&= \interp{\operatorname{cp}_m^\dagger\circ (t_1\otimes t_2) \circ (id_1\otimes\sigma_{1,n}\otimes id_n)\circ ((t_+\circ\ket1)\otimes\operatorname{cp}_n)}\\
&= \interp{\operatorname{cp}_m^\dagger\circ (t_1\otimes t_2) \circ (id_1\otimes\sigma_{1,n}\otimes id_n)\circ \left(\left(\sum\ket{y_0,y_0{\oplus}1}\right)\otimes\operatorname{cp}_n\right)}\\
&= \interp{\operatorname{cp}_m^\dagger\circ (t_1\otimes t_2) \circ (id_1\otimes\sigma_{1,n}\otimes id_n)\circ \left(\ket{0,1}\otimes\operatorname{cp}_n\right)}\\
\tag*{$+ \interp{\operatorname{cp}_m^\dagger\circ (t_1\otimes t_2) \circ (id_1\otimes\sigma_{1,n}\otimes id_n)\circ \left(\ket{1,0}\otimes\operatorname{cp}_n\right)}$}\\
&= \interp{\operatorname{cp}_m}^\dagger\circ (\interp{t_1\circ(\ket0\otimes id_n)}\otimes \interp{t_2\circ(\ket1\otimes id_n)}) \circ \interp{\operatorname{cp}_n}\\
\tag*{$+ \interp{\operatorname{cp}_m}^\dagger\circ (\interp{t_1\circ(\ket1\otimes id_n)}\otimes \interp{t_2\circ(\ket0\otimes id_n)}) \circ \interp{\operatorname{cp}_n}$}\\
&= \interp{t_2\circ(\ket1\otimes id_n)}) + \interp{t_1\circ(\ket1\otimes id_n)}\qedhere
\end{align*}
\end{proof}

\begin{exa}
$t_1=\frac{1}{2} \sum e^{2i\pi\left(\frac{1}{2}y_{0}y_{1}y_{6} + \frac{1}{2}y_{1}y_{2}y_{6}\right)}\ket{y_{0}}\!\!\bra{y_{1}, y_{2}}
$ is a controlled morphism by Example \ref{ex:controlled-identity}. The morphism $t_2 = \sum \ketbra{y_0}{y_1,y_2}$ is also obviously a controlled morphism. $t_1$ controls $\begin{pmatrix}1&0\\0&1\end{pmatrix}$ and $t_2$ controls $\begin{pmatrix}1&1\\1&1\end{pmatrix}$. The above construction on $t_1$ and $t_2$ yields $$t\overset{\ast}{\rewrite{\text{HH}}} \frac{1}{4} \sum e^{2i\pi\left(\frac{1}{2}y_{2}y_{11}y_{6} + \frac{1}{2}y_{2}y_{3}y_{4} + \frac{1}{2}y_{0}y_{2}y_{4}\right)}\ket{y_{0}}\!\!\bra{y_{2}{\oplus}y_{6}, y_{3}}\quad\text{which controls }\begin{pmatrix}2&1\\1&2\end{pmatrix}.$$
To get the controlee, it then suffices to apply $\ket1$ on the first input of $t$:
$$t\circ(\ket1\otimes id) \overset{\ast}{\rewrite{\text{HH}}} \frac{1}{2} \sum e^{2i\pi\left(\frac{1}{2}y_{3}y_{4}y_{6} + \frac{1}{2}y_{0}y_{4} + \frac{1}{2}y_{3}y_{4} + \frac{1}{2}y_{0}y_{4}y_{6}\right)}\ket{y_{0}}\!\!\bra{y_{3}}$$
\end{exa}

The notion of controlled morphism also allows us to define a construction that will perform the concatenation of the controlees. Here again, we need a new building block:
$$t_c:=\sum \ket{y_{0}, y_{0}y_{1}{\oplus}y_{1}, y_{0}y_{1}}\!\!\bra{y_{1}}$$

\begin{prop}
\label{prop:concat-SOP}
Let $t_1,t_2:n+1\to m$ be two controlled morphisms. We define:
$$t:=(id_1\otimes\operatorname{cp}_m^\dagger)\circ (id_1\otimes t_1\otimes t_2) \circ (id_2\otimes\sigma_{1,n}\otimes id_n)\circ (t_c\otimes\operatorname{cp}_n)$$
Then:
$$\interp{t}=\bra0\otimes\interp{H_{m+1}^n(1)} + \bra1\otimes\left(\vphantom{\rule{1pt}{1em}}\ket0\otimes\interp{t_1\circ(\ket1\otimes id_n)}+\ket1 \otimes\interp{t_2\circ(\ket1\otimes id_n)}\right)$$
i.e.~$t$ controls the concatenation of $t_1$ and $t_2$'s controlees.
\end{prop}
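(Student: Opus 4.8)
The plan is to follow the proof of \autoref{prop:sum-SOP} essentially verbatim, with the building block $t_+$ replaced by $t_c$ and the outer wiring adapted to the one extra output wire. The single general fact needed at the end is that, for any morphism $f:n+1\to m$, decomposing the control input in the basis $\{\ket0,\ket1\}$ (i.e.~using $\ketbra00+\ketbra11=\mathbb I$) gives $\interp f = \bra0\otimes\interp{f\circ(\ket0\otimes id_n)} + \bra1\otimes\interp{f\circ(\ket1\otimes id_n)}$. So it suffices to compute the two ``slices'' $\interp{t\circ(\ket0\otimes id_n)}$ and $\interp{t\circ(\ket1\otimes id_n)}$ and check that they match the two summands of the statement.

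First I would record the behaviour of the new building block $t_c$: composing with $\ket0$ forces its single input to $0$, so $t_c\circ\ket0 \overset{\ast}{\rewrite{\text{HH}}} \sum_{y_0}\ket{y_0,0,0}$, with interpretation $\interp{H_1^0(1)}\otimes\ket{0,0}$; composing with $\ket1$ gives $t_c\circ\ket1 \overset{\ast}{\rewrite{\text{HH}}} \sum_{y_0}\ket{y_0,y_0{\oplus}1,y_0}$, with interpretation $\ket{0,1,0}+\ket{1,0,1}$. The role of the three outputs of $t_c$ is then fixed by the wiring: after the shuffle $(id_2\otimes\sigma_{1,n}\otimes id_n)$, the first output passes through both $id_1$'s and becomes the new leftmost output wire of $t$, the second output becomes the control input of $t_1$, the third becomes the control input of $t_2$, and the two legs of $\operatorname{cp}_n$ feed the data inputs of $t_1$ and $t_2$.

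For the $\ket0$-slice, $t_c\circ\ket0$ puts $0$ on both control inputs, so $t_1$ and $t_2$ each act as $\interp{H_m^n(1)}$; since they are fed the same data through $\operatorname{cp}_n$ and merged by $\operatorname{cp}_m^\dagger$, the lemmas on $\operatorname{cp}$ give $\interp{\operatorname{cp}_m}^\dagger\circ(\interp{H_m^n(1)}\otimes\interp{H_m^n(1)})\circ\interp{\operatorname{cp}_n} = \interp{H_m^n(1)}$, and tensoring with the surviving first output gives $\interp{H_1^0(1)}\otimes\interp{H_m^n(1)} = \interp{H_{m+1}^n(1)}$ by the additivity lemma for $H$-spiders, as required. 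For the $\ket1$-slice I expand $\interp{t_c\circ\ket1}$ as $\ket{0,1,0}+\ket{1,0,1}$: in the first term $t_1$'s control is $1$ and $t_2$'s is $0$, so $t_1$ acts as $t_1\circ(\ket1\otimes id_n)$ and $t_2$ as $\interp{H_m^n(1)}$, which the $\operatorname{cp}_m^\dagger$-merge absorbs by the identities-through-$\operatorname{cp}$ lemma, leaving $\ket0\otimes\interp{t_1\circ(\ket1\otimes id_n)}$; symmetrically the second term yields $\ket1\otimes\interp{t_2\circ(\ket1\otimes id_n)}$. Adding the two slices as $\bra0\otimes(\cdot)+\bra1\otimes(\cdot)$ produces exactly the claimed identity.

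The SOP bookkeeping — the (HH)-reductions behind $t_c\circ\ket0$ and $t_c\circ\ket1$ and the compositions with the $\operatorname{cp}$-identities — is routine and parallels \autoref{prop:sum-SOP}. One tiny extra remark is needed in the $\ket0$-branch: we invoke $\interp{\operatorname{cp}_m}^\dagger\circ(\interp{H_m^n(1)}\otimes\interp{H_m^n(1)})\circ\interp{\operatorname{cp}_n} = \interp{H_m^n(1)}$ rather than the stated $\operatorname{cp}$-lemma (which is phrased with $H_m^0(1)$), but since $\interp{H_m^n(1)}$ is constant on basis inputs it reduces, when precomposed with a leg of $\operatorname{cp}_n$, to the state $\interp{H_m^0(1)}$, so the stated lemma applies. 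The only genuine source of error — and the step I would double-check most carefully — is the wire permutation $(id_2\otimes\sigma_{1,n}\otimes id_n)$: one must verify that the second and third outputs of $t_c$ really land on the control inputs of $t_1$ and $t_2$, and not on data wires; once that is checked, the rest is a direct transcription of the sum construction.
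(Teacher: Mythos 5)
Your proposal is correct and takes essentially the same route as the paper's proof: it computes the $\ket0$- and $\ket1$-slices of $t$ via the behaviour of $t_c$ under $\ket0$ and $\ket1$ (with exactly the wiring you describe) together with the $\operatorname{cp}$/$H$-spider identities, including the step $\interp{\operatorname{cp}_m}^\dagger\circ(\interp{H_m^n(1)}\otimes\interp{H_m^n(1)})\circ\interp{\operatorname{cp}_n}=\interp{H_m^n(1)}$ which the paper also uses. The only cosmetic difference is that you expand $\interp{t_c\circ\ket1}$ as $\ket{0,1,0}+\ket{1,0,1}$ and conclude by linearity, whereas the paper projects the extra output onto $\bra0$ and $\bra1$; this is the same computation.
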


\begin{proof}
\phantomsection\label{prf:concat-SOP}
Notice first that $t_c\circ \ket0 \rewrite{\text{HH}} \sum \ket{y_0,0,0}$, $(\bra0\otimes id_2)\circ t_c\ket1 \rewrite{\text{HH}} \ket{1,0}$ and $(\bra1\otimes id_2)\circ t_c\ket1 \rewrite{\text{HH}} \ket{0,1}$.

Let $t$ be defined as above. First, we have:
\begin{align*}
\interp{t\circ(\ket0\otimes id_n)}\hspace*{-3em}&\\
&=\interp{(id_1\otimes\operatorname{cp}_m^\dagger)\circ (id_1\otimes t_1\otimes t_2) \circ (id_2\otimes\sigma_{1,n}\otimes id_n)\circ \left(\left(\sum \ket{y_0,0,0}\right)\otimes\operatorname{cp}_n\right)}\\
&= \interp{\left(\sum \ket{y_0}\right)\otimes\left(\operatorname{cp}_m^\dagger\circ \left((t_1 \circ(\ket0\otimes id_n))\otimes (t_2\circ(\ket0\otimes id_n))\right) \circ \operatorname{cp}_n\right)}\\
&= \interp{H_1^0(1)}\otimes\left(\interp{\operatorname{cp}_m}^\dagger\circ \left(\interp{t_1 \circ(\ket0\otimes id_n)}\otimes \interp{t_2\circ(\ket0\otimes id_n)}\right) \circ \interp{\operatorname{cp}_n}\right)\\
&= \interp{H_1^0(1)}\otimes\left(\interp{\operatorname{cp}_m}^\dagger\circ \left(\interp{H_m^n(1)}\otimes \interp{H_m^n(1)}\right) \circ \interp{\operatorname{cp}_n}\right)
= \interp{H_1^0(1)}\otimes\interp{H_m^n(1)}\\
&= \interp{H_{m+1}^n(1)}
\end{align*}
Then:
\begin{align*}
\interp{(\bra0\otimes id_m)\circ t\circ(\ket0\otimes id_n)}\hspace*{-5em}&\\
&= \interp{\operatorname{cp}_m^\dagger\circ (t_1\otimes t_2) \circ (id_1\otimes\sigma_{1,n}\otimes id_n)\circ ((\bra0\otimes id_2)\circ t_c\ket1)\otimes\operatorname{cp}_n)}\\
&= \interp{\operatorname{cp}_m^\dagger\circ (t_1\otimes t_2) \circ (id_1\otimes\sigma_{1,n}\otimes id_n)\circ (\ket{1,0}\otimes\operatorname{cp}_n)}\\
&= \interp{\operatorname{cp}_m}^\dagger\circ (\interp{t_1 \circ(\ket1\otimes id_n)}\otimes \interp{t_2 \circ(\ket0\otimes id_n)}) \circ \interp{\operatorname{cp}_n}\\
&= \interp{\operatorname{cp}_m}^\dagger\circ (\interp{t_1 \circ(\ket1\otimes id_n)}\otimes \interp{H_m^n(1)}) \circ \interp{\operatorname{cp}_n}\\
&= \interp{t_1 \circ(\ket1\otimes id_n)}
\end{align*}
and similarly:
\begin{align*}
\interp{(\bra1\otimes id_m)\circ t\circ(\ket0\otimes id_n)}
= \interp{t_2 \circ(\ket1\otimes id_n)}&\qedhere
\end{align*}
\end{proof}

\subsection{Controlling $\cat{SOP}$-Morphisms}

Critical to the previous two constructions is the existence, for every morphism $t$ of another morphism that controls it.

First, we need to be able, in all generality, to control arbitrary complex scalars.

\begin{defi}
Let $s\in\mathbb C$. If $s\neq0$, let $n:=\lceil \log_2(|s|+1)\rceil$, $\alpha:=\arccos\left(\frac{|s|}{2^n-1}\right)$ and $\theta = \arg(s)$. Then $s = (2^n-1)\cos\alpha e^{i\theta}$. We hence define:
$$\mathrm\Lambda' s := \begin{cases}
\bra0 &\text{ if } s=0\\
\frac14\sum e^{2i\pi \left(\frac{y_1...y_ny''}2 + \frac{y''(1+y_0)}2 + \frac{2\alpha}{2\pi}y_0y' + \frac{\theta-\alpha}{2\pi}y_0\right)}\bra{y_0}&\text{ if }s\neq0
\end{cases}$$
\end{defi}

\begin{prop}
\label{prop:control-scalar'}
For any $s\in\mathbb C$, 
$\interp{\mathrm\Lambda' s} = \bra0+s\bra1$.
\end{prop}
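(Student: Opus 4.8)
The plan is to evaluate the functor $\interp{\cdot}:\cat{SOP}\to\cat{Qubit}$ directly on both branches of the definition of $\mathrm\Lambda' s$. The case $s=0$ is immediate: $\interp{\bra0}=\bra0=\bra0+0\cdot\bra1$. For $s\neq0$ I would first check the data is well defined: since $|s|>0$ we have $n=\lceil\log_2(|s|+1)\rceil\geq1$, and $2^n\geq|s|+1$ gives $0\leq\frac{|s|}{2^n-1}\leq1$, so $\alpha=\arccos\!\left(\frac{|s|}{2^n-1}\right)$ makes sense and $s=(2^n-1)\cos\alpha\,e^{i\theta}$ as claimed. Then I would expand $\interp{\mathrm\Lambda' s}$ as a sum over $y_0,y_1,\ldots,y_n,y',y''\in\{0,1\}$ and split on the value of the single input variable $y_0$.

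For $y_0=0$: the variable $y'$ drops out, contributing a factor $2$; the exponent becomes $\frac{y_1\cdots y_n\,y''}{2}+\frac{y''}{2}$, i.e.\ the summand is $e^{i\pi y''(y_1\cdots y_n+1)}$. Summing over $y''$ gives $1+(-1)^{y_1\cdots y_n+1}$, which is $2$ when $y_1=\cdots=y_n=1$ and $0$ otherwise, so the remaining sum over $y_1,\ldots,y_n$ collapses to the single surviving tuple. Collecting the prefactor $\tfrac14\cdot2\cdot2=1$ yields exactly $\bra0$.

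For $y_0=1$: the term $\frac{y''(1+y_0)}{2}=y''$ is an integer and disappears from the exponential, leaving exponent $\frac{y_1\cdots y_n\,y''}{2}+\frac{\alpha}{\pi}y'+\frac{\theta-\alpha}{2\pi}$. Summing over $y'$ produces $1+e^{2i\alpha}=2e^{i\alpha}\cos\alpha$, which combined with the constant phase $e^{i(\theta-\alpha)}$ gives $2\cos\alpha\,e^{i\theta}$; summing over $y''$ gives $1+(-1)^{y_1\cdots y_n}$, now $0$ when all $y_i=1$ and $2$ otherwise, so summing over $y_1,\ldots,y_n$ contributes $2(2^n-1)$. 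Multiplying by $\tfrac14$ gives $\tfrac14\cdot2(2^n-1)\cdot2\cos\alpha\,e^{i\theta}=(2^n-1)\cos\alpha\,e^{i\theta}=s$, the $\bra1$ coefficient. Adding the two contributions yields $\interp{\mathrm\Lambda' s}=\bra0+s\bra1$.

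The whole argument is routine bookkeeping of partial sums; the only points deserving attention are the well-definedness of $\alpha$ (which is precisely why $n$ is taken to be $\lceil\log_2(|s|+1)\rceil$), keeping track of the scalar prefactors across the successive summations over $y'$, $y''$ and $y_1,\ldots,y_n$, and the small trigonometric recombination $1+e^{2i\alpha}=2e^{i\alpha}\cos\alpha$ together with the phase correction $e^{i(\theta-\alpha)}$ that rebuilds $s$. I do not expect a genuine obstacle here.
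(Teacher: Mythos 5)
Your proposal is correct and follows essentially the same route as the paper: evaluate $\interp{\mathrm\Lambda' s}$ directly, split on $y_0$, and sum out $y'$, $y''$ and $y_1,\ldots,y_n$, recovering $1$ for the $\bra0$ coefficient and $\tfrac14\,(2^{n+1}-2)\,e^{i(\theta-\alpha)}(1+e^{2i\alpha})=(2^n-1)\cos\alpha\,e^{i\theta}=s$ for the $\bra1$ coefficient. The only cosmetic difference is that the paper counts the $(y_1,\ldots,y_n,y'')$ sum via the identity $\sum e^{2i\pi\frac{y_1\cdots y_n}{2}}=2^n-2$ whereas you note directly which tuples survive the $y''$ summation; the bookkeeping is equivalent.
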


\begin{proof}
\phantomsection\label{prf:control-scalar'}
If $s=0$, the result is obvious. Otherwise, notice that $\sum\limits_{\vec y\in\{0,1\}^n} e^{2i\pi \frac{y_1...y_n}2} = 2^n-2$. Then:
\begin{align*}
\interp{\mathrm\Lambda' s\circ\ket0}
&= \interp{\frac14\sum_{\vec y\setminus\{y_0\}} e^{2i\pi \left(\frac{y_1...y_ny''}2 + \frac{y''}2\right)}}
= \interp{\frac12\sum_{\vec y\setminus\{y_0,y'\}} e^{2i\pi \left(\frac{y_1...y_ny''}2 + \frac{y''}2\right)}}\\
&= \interp{\frac12\sum_{\vec y\setminus\{y_0,y',y''\}} e^{2i\pi \left(0\right)}} - \interp{\frac12\sum_{\vec y\setminus\{y_0,y',y''\}} e^{2i\pi \left(\frac{y_1...y_n}2\right)}} = \frac12(2^n -(2^n-2)) = 1
\end{align*}
\begin{align*}
\interp{\mathrm\Lambda' s\circ\ket1}
&= \interp{\frac14\sum_{\vec y\setminus\{y_0\}} e^{2i\pi \left(\frac{y_1...y_ny''}2 + \frac{2\alpha}{2\pi}y' + \frac{\theta-\alpha}{2\pi}\right)}}\\
&= \frac14\interp{\hspace*{-2em}\sum_{\qquad\vec y\setminus\{y_0,y'\}} \hspace*{-2em}e^{2i\pi \left(\frac{y_1...y_ny''}2\right)}} \interp{\sum_{y'} e^{2i\pi\left(\frac{2\alpha}{2\pi}y' + \frac{\theta-\alpha}{2\pi}\right)}}\\
&= \frac14(2^{n+1}-2)e^{i(\theta-\alpha)}(1+e^{2i\alpha})=(2^n-1)\cos\alpha e^{i\theta} = s\qedhere
\end{align*}
\end{proof}

This construction is interesting as it requires a fairly small number of variables. However, scalars that we can find in the dyadic fragments ($\cat{SOP}[\frac1{2^n}]$) may require their control to be outside them. For instance, in $\mathrm\Lambda'\frac12$, we need $\alpha=\frac\pi3$, which is not a dyadic multiple of $\pi$. We hence give another definition for the controlled scalar:

\begin{defi}
Let $s\in\mathbb C$. If $s\neq0$, let $n:=\lceil \log_2(|s|)\rceil$, $\alpha:=\arccos\left(\frac{|s|}{2^n}\right)$ and $\theta = \arg(s)$. Then $s = 2^n\cos\alpha e^{i\theta}$. We define:
$$\mathrm\Lambda s := \begin{cases}
\bra0&\text{ if }s=0\\
\frac1{2^{n+1}}\sum e^{2i\pi \left(\sum\limits_{i=1}^n \frac{y_iy_i'(1+y_0)}2 + \frac{2\alpha}{2\pi}y_0y' + \frac{\theta-\alpha}{2\pi}y_0\right)}\bra{y_0}&\text{if }n\geq0\\
2^{2n-1}\sum e^{2i\pi \left(\sum\limits_{i=1}^{|n|} \frac{y_iy_i'y_0}2 + \frac{2\alpha}{2\pi}y_0y' + \frac{\theta-\alpha}{2\pi}y_0\right)}\bra{y_0}&\text{if }n<0
\end{cases}$$
\end{defi}

\begin{prop}
\label{prop:control-scalar}
For any $s\in\mathbb C$, 
$\interp{\mathrm\Lambda s} = \bra0+s\bra1$.
\end{prop}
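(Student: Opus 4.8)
The plan is to follow the proof of Proposition~\ref{prop:control-scalar'} almost verbatim: the claim is obtained by evaluating $\interp{\cdot}$ on the two ``slices'' $\mathrm\Lambda s\circ\ket0$ and $\mathrm\Lambda s\circ\ket1$ and checking that the first equals $1$ and the second equals $s$, whence $\interp{\mathrm\Lambda s} = \bra0 + s\bra1$. The case $s=0$ is immediate since then $\mathrm\Lambda s = \bra0$. For $s\neq0$ I would first record the normalisation: with $n = \lceil\log_2|s|\rceil$ one has $2^{n-1} < |s| \le 2^n$, hence $|s|/2^n \in (\tfrac12,1]$, so $\alpha = \arccos(|s|/2^n)$ is well-defined (with $\alpha\in[0,\tfrac\pi3)$) and $s = 2^n\cos\alpha\,e^{i\theta}$ where $\theta = \arg s$; this also confirms that the two clauses of the definition (the one for $n\ge0$ and the one for $n<0$) together cover every nonzero $s$.

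For $n\ge0$, substituting $y_0=0$ into the phase polynomial kills the terms carrying the $\tfrac{2\alpha}{2\pi}y_0y'$ and $\tfrac{\theta-\alpha}{2\pi}y_0$ factors and turns each $\tfrac{y_iy_i'(1+y_0)}2$ into $\tfrac{y_iy_i'}2$, so that $\interp{\mathrm\Lambda s\circ\ket0}$ is $\tfrac1{2^{n+1}}$ times a sum in which $y'$ is free (contributing a factor $2$) and the remaining sum factorises over $i$ using $\sum_{a,b\in\{0,1\}}e^{i\pi ab} = 2$, giving $\tfrac1{2^{n+1}}\cdot2\cdot2^{n} = 1$. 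Substituting $y_0=1$ instead makes each $\tfrac{y_iy_i'(1+y_0)}2$ an integer $y_iy_i'$, so the corresponding exponentials are $1$ and the sum over all the $y_i,y_i'$ contributes $2^{2n}$, while the sum over $y'$ is $e^{i(\theta-\alpha)}(1+e^{2i\alpha}) = 2\cos\alpha\,e^{i\theta}$ by the half-angle identity $1+e^{2i\alpha} = 2\cos\alpha\,e^{i\alpha}$; altogether $\interp{\mathrm\Lambda s\circ\ket1} = \tfrac1{2^{n+1}}\cdot2^{2n}\cdot2\cos\alpha\,e^{i\theta} = 2^n\cos\alpha\,e^{i\theta} = s$.

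The case $n<0$ is the same computation with $|n|$ pairs $y_i,y_i'$ and overall scalar $2^{2n-1}$: at $y_0=0$ the whole phase polynomial vanishes, so $\interp{\mathrm\Lambda s\circ\ket0} = 2^{2n-1}\cdot2^{2|n|+1} = 2^{2n-1}\cdot2^{-2n+1} = 1$; at $y_0=1$ the sum over the $y_i,y_i'$ contributes $2^{|n|}$ (again via $\sum_{a,b}e^{i\pi ab}=2$) and the sum over $y'$ contributes $2\cos\alpha\,e^{i\theta}$, giving $2^{2n-1}\cdot2^{|n|}\cdot2\cos\alpha\,e^{i\theta} = 2^{n}\cos\alpha\,e^{i\theta} = s$. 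Combining the two slices in each case yields $\interp{\mathrm\Lambda s} = \bra0 + s\bra1$.

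I do not expect a genuine obstacle here: this is a bookkeeping argument over elementary Gauss-type sums, and the only point requiring real attention is the one flagged above, namely that $n = \lceil\log_2|s|\rceil$ is precisely the normalisation making $|s|/2^n\in(0,1]$, so that $\alpha$ is well-defined and the scalar prefactors balance. I would state that observation at the start and then let the four evaluations close the proof.
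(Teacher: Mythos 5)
Your proposal is correct and takes essentially the same route as the paper: a case split on $s=0$, $n\geq0$, $n<0$, followed by evaluation of the two slices $\mathrm\Lambda s\circ\ket0$ and $\mathrm\Lambda s\circ\ket1$, yielding $1$ and $2^n\cos\alpha\,e^{i\theta}=s$ respectively with exactly the scalar bookkeeping the paper performs. The only cosmetic difference is that the paper collapses the sums via the (HH) and (Elim) rewrites (relying on their soundness), whereas you compute the interpretation directly as in the proof of the analogous statement for $\mathrm\Lambda'$; the content is the same.
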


\begin{proof}
\phantomsection\label{prf:control-scalar}
If $s=0$, the result is obvious.

Then, if $n\geq0$:
\begin{align*}
\mathrm\Lambda s\circ\ket0
\rewrite{\text{HH}}
\frac1{2^{n+1}}\sum e^{2i\pi \left(\sum\limits_{i=1}^n \frac{y_iy_i'}2 \right)}
\rewrite{\text{Elim}}
\frac1{2^n}\sum e^{2i\pi \left(\sum\limits_{i=1}^n \frac{y_iy_i'}2 \right)}
\rewrite{\text{HH}^n}
1
\end{align*}
and:
\begin{align*}
\mathrm\Lambda s\circ\ket1
\rewrite{\text{HH}}
\frac1{2^{n+1}}\sum e^{2i\pi \left(\sum\limits_{i=1}^n 0 +\frac{2\alpha}{2\pi}y' + \frac{\theta-\alpha}{2\pi} \right)}
\rewrite{\text{Elim}^{2n}}
2^{n-1}\sum e^{2i\pi \left(\frac{2\alpha}{2\pi}y' + \frac{\theta-\alpha}{2\pi} \right)}
\end{align*}
So $\interp{\mathrm\Lambda s\circ\ket1} = 2^{n-1}e^{i(\theta-\alpha)}(1+e^{2i\alpha}) = 2^n\cos\alpha e^{i\theta}=s$.

Now, if $n<0$:
\begin{align*}
\mathrm\Lambda s\circ\ket0
\rewrite{\text{HH}}
2^{2n-1}\sum e^{2i\pi \left(0\right)}
\rewrite{\text{Elim}^{2|n|+1}} 1
\end{align*}
and
\begin{align*}
\mathrm\Lambda s\circ\ket1
\rewrite{\text{HH}}
2^{2n-1}\sum e^{2i\pi \left(\sum\limits_{i=1}^{|n|} \frac{y_iy_i'}2 + \frac{2\alpha}{2\pi}y' + \frac{\theta-\alpha}{2\pi}\right)}
\rewrite{\text{HH}^{|n|}}
2^{n-1}\sum e^{2i\pi \left(\frac{2\alpha}{2\pi}y' + \frac{\theta-\alpha}{2\pi}\right)}
\end{align*}
so again $\interp{\mathrm\Lambda s\circ\ket1}=s$.
\end{proof}

This time, if $s = \frac1{\sqrt2^p} = 2^{\lceil\frac p2\rceil}\cos{\frac\pi4(p\bmod2)}$, controlling $s$ gives a morphism in a dyadic fragment of $\cat{SOP}$.

We now give a general construction for controlling an arbitrary $\cat{SOP}$-term.

\begin{defi}
Let $t = s\sum\limits_{\vec y} e^{2i\pi P} \ketbra{\vec O}{\vec I} \in\cat{SOP}$. We define:
$$\overset{\sim}{\mathrm\Lambda}t := 
\frac1{2^{|\vec y|}}\sum_{\substack{\vec y, x_0,\\\vec x_1,\vec x_2}} e^{2i\pi x_0P} \ketbra{x_0\vec O{\oplus}\vec x_1{\oplus}x_0\vec x_1}{x_0,x_0\vec I{\oplus}\vec x_2{\oplus}x_0\vec x_2}$$
where
\[x_0\vec O{\oplus}\vec x_1{\oplus}x_0\vec x_1:=(x_0 O_1{\oplus} x_{11}{\oplus}x_0 x_{11},...,x_0 O_m{\oplus}x_{1m}{\oplus}x_0x_{1m}),\]
and similarly
\[x_0\vec I{\oplus}\vec x_2{\oplus}x_0\vec x_2:=(x_0 I_1{\oplus} x_{21}{\oplus}x_0 x_{21},...,x_0 I_{m+n}{\oplus}x_{2n}{\oplus}x_0x_{2n}).\]
We finally define:
\[\mathrm\Lambda t := \left(\mathrm\Lambda (s{2^{|\vec y|-n-m}}) \otimes \overset{\sim}{\mathrm\Lambda}t\right) \circ (\operatorname{cp}_1\otimes id_n)\]
\end{defi}

\begin{prop}
\label{prop:control-term}
$\mathrm\Lambda t$ is a controlled morphism, that controls $t$, i.e.:
$$\interp{\mathrm\Lambda t} = \bra0\otimes\interp{H_m^n(1)} + \bra 1 \otimes \interp{t}$$
\end{prop}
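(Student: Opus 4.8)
The plan is to verify the single displayed equation directly at the level of $\interp{\cdot}$, in the same spirit as the proofs of \autoref{prop:sum-SOP} and \autoref{prop:concat-SOP}: feed the two computational basis states into the control input of $\mathrm\Lambda t$ and read off the result. Write $t:n\to m$ and set $c := s\,2^{|\vec y|-n-m}$, so that $\mathrm\Lambda t = (\mathrm\Lambda c\otimes\overset{\sim}{\mathrm\Lambda}t)\circ(\operatorname{cp}_1\otimes id_n)$. Since $\interp{\operatorname{cp}_1}\circ\ket b = \ket{b,b}$ for $b\in\{0,1\}$, routing the copied control wire (one copy into $\mathrm\Lambda c$, one into the control of $\overset{\sim}{\mathrm\Lambda}t$) gives
\[\interp{\mathrm\Lambda t\circ(\ket b\otimes id_n)} = \interp{\mathrm\Lambda c\circ\ket b}\cdot\interp{\overset{\sim}{\mathrm\Lambda}t\circ(\ket b\otimes id_n)},\]
the first factor being a scalar; so it suffices to evaluate both factors for $b=0$ and $b=1$, and then to recombine via $\interp{\mathrm\Lambda t} = \bra0\otimes\interp{\mathrm\Lambda t\circ(\ket0\otimes id_n)} + \bra1\otimes\interp{\mathrm\Lambda t\circ(\ket1\otimes id_n)}$.

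For the second factor, functoriality of $\interp{\cdot}$ lets me compute $\interp{\overset{\sim}{\mathrm\Lambda}t\circ(\ket b\otimes id_n)}$ by simply setting $x_0 = b$ in the defining sum of $\overset{\sim}{\mathrm\Lambda}t$. For $b=0$: the phase $x_0P$ vanishes, the outputs collapse to $\vec x_1$ and the inputs to $\vec x_2$, and the variables $\vec y$ no longer occur, so summing them off cancels the $\frac1{2^{|\vec y|}}$ prefactor and leaves $\sum_{\vec x_1\in\{0,1\}^m,\,\vec x_2\in\{0,1\}^n}\ketbra{\vec x_1}{\vec x_2} = \interp{H_m^n(1)}$. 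For $b=1$: since $\vec x_1\oplus\vec x_1 = \vec 0$ componentwise, the outputs collapse to $\vec O$ and the inputs to $\vec I$; now $\vec x_1,\vec x_2$ no longer occur, so summing them off multiplies by $2^{m+n}$, giving $\frac{2^{m+n}}{2^{|\vec y|}}\sum_{\vec y}e^{2i\pi P}\ketbra{\vec O}{\vec I} = \frac{2^{m+n}}{s\,2^{|\vec y|}}\interp{t}$. For the first factor, \autoref{prop:control-scalar} gives $\interp{\mathrm\Lambda c\circ\ket0} = 1$ and $\interp{\mathrm\Lambda c\circ\ket1} = c = s\,2^{|\vec y|-n-m}$.

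Putting the pieces together yields $\interp{\mathrm\Lambda t\circ(\ket0\otimes id_n)} = 1\cdot\interp{H_m^n(1)} = \interp{H_m^n(1)}$, which is exactly the condition for $\mathrm\Lambda t$ to be a controlled morphism, and $\interp{\mathrm\Lambda t\circ(\ket1\otimes id_n)} = s\,2^{|\vec y|-n-m}\cdot\frac{2^{m+n}}{s\,2^{|\vec y|}}\interp{t} = \interp{t}$, so $t$ is its controlee; the displayed equation of the statement then follows from the basis expansion of the control qubit recorded above. The only step that requires genuine care will be this scalar bookkeeping: the prefactor $\frac1{2^{|\vec y|}}$ of $\overset{\sim}{\mathrm\Lambda}t$, the factor $2^{m+n}$ produced by summing off the now-idle fresh variables $\vec x_1,\vec x_2$ in the control-$1$ branch, and the controlled scalar $c = s\,2^{|\vec y|-n-m}$ must conspire to leave exactly $\interp{t}$ — not a scalar multiple of it — while simultaneously leaving exactly $\interp{H_m^n(1)}$ in the control-$0$ branch, and this is precisely what forces the particular power of $2$ appearing in the definition of $\mathrm\Lambda t$. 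Everything else is routine substitution together with functoriality of $\interp{\cdot}$ and of $\otimes$.
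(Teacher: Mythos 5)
Your proposal is correct and follows essentially the same route as the paper's proof: evaluate $\mathrm\Lambda t$ on $\ket0\otimes id_n$ and $\ket1\otimes id_n$, use \autoref{prop:control-scalar} for the scalar factor $\mathrm\Lambda(s2^{|\vec y|-n-m})$, substitute $x_0=b$ in $\overset{\sim}{\mathrm\Lambda}t$, and check the cancellation of the powers of $2$ exactly as you describe. The only cosmetic difference is that the paper carries the scalar $s$ along multiplicatively instead of writing the intermediate factor $\frac{2^{m+n}}{s2^{|\vec y|}}\interp{t}$, thereby avoiding a formal division by $s$ (which could be $0$); your final cancellation is otherwise identical.
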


\begin{proof}
\phantomsection\label{prf:control-term}
We have
\begin{align*}
\interp{\mathrm\Lambda t }\circ (\ket0\otimes id_n)
&= \interp{\mathrm\Lambda (s{2^{|\vec y|-n-m}})\circ\ket0} \otimes \interp{\overset{\sim}{\mathrm\Lambda}t\circ (\ket0\otimes id_n)}
= 1 \otimes \frac1{2^{|\vec y|}}\sum_{\vec y,\vec x1,\vec x_2} e^{2i\pi 0} \ketbra{\vec x_1}{\vec x_2}\\
&= \sum_{\vec x1,\vec x_2} \ketbra{\vec x_1}{\vec x_2} = \interp{H_m^n(1)}
\end{align*}
and
\begin{align*}
\interp{\mathrm\Lambda t }\circ (\ket1\otimes id_n)
&= \interp{\mathrm\Lambda (s{2^{|\vec y|-n-m}})\circ\ket1} \otimes \interp{\overset{\sim}{\mathrm\Lambda}t\circ (\ket1\otimes id_n)}\\
&= \frac{s{2^{|\vec y|}}}{2^{n+m}} \otimes \frac1{2^{|\vec y|}}\sum_{\vec y,\vec x1,\vec x_2} e^{2i\pi P} \ketbra{\vec O}{\vec I}\\
&= \frac{s}{2^{n+m}}\sum_{\vec y,\vec x1,\vec x_2} e^{2i\pi P} \ketbra{\vec O}{\vec I} = s\sum_{\vec y} e^{2i\pi P} \ketbra{\vec O}{\vec I} = t
\qedhere
\end{align*}
\end{proof}

Notice that if $t\in\cat{SOP}[\frac1{2^n}]$, then $\mathrm\Lambda t \in \cat{SOP}[\frac1{2^{\max(3,n)}}]$. Notice also that, for $t:n\to m$, only $n+m+1$ additional variables are needed in $\overset{\sim}{\mathrm\Lambda}t$. However, the controlled scalar will require around $2(|\vec y|-n-m+\log_2(|s|))$ variables, and the phase polynomial gets one degree higher. Hence, it may be useful to put additional information to use if we are provided some.

\begin{defi}
Let $t = s\sum\limits_{\vec y} e^{2i\pi P} \ketbra{\vec O}{\vec I} \in\cat{SOP}$. For any $\vec v_1\in\{0,1\}^m,\vec v_2\in\{0,1\}^n$ and $\rho e^{i\theta}\neq 0$, we define:
\begin{align*}
\overset{\sim}{\mathrm\Lambda}_2(\,t\,|\vec v_1&,\vec v_2, \rho,\theta) :=\\
&\frac1{\rho2^{n+m}}\sum e^{2i\pi \left(-\frac\theta{2\pi}+ P + \frac12\left(\vec{\widehat O} + \vec x_1 + x_0\vec x_1 + x_0\vec v_1\right)\cdot\vec x_1' + \frac12\left(\vec{\widehat I} + \vec x_2 + x_0\vec x_2 + x_0\vec v_2\right)\cdot\vec x_2'\right)}\ketbra{\vec x_1}{x_0{\oplus}1,\vec x_2}
\end{align*}
and:
\[{\mathrm\Lambda}_2(\,t\,|\vec v_1,\vec v_2, \rho,\theta) :=\left(\mathrm\Lambda (s\rho e^{i\theta}) \otimes \overset{\sim}{\mathrm\Lambda}_2(\,t\,|\vec v_1,\vec v_2, \rho,\theta)\right) \circ (\operatorname{cp}_1\otimes id_n)\]
\end{defi}

\begin{prop}
\label{prop:control-term-2}
For any $\vec v_1\in\{0,1\}^m,\vec v_2\in\{0,1\}^n$ and $\rho e^{i\theta}\neq 0$ such that $\bra{\vec v_1}\interp{t}\ket{\vec v_2} = \rho e^{i\theta}$, 
${\mathrm\Lambda}_2(\,t\,|\vec v_1,\vec v_2, \rho, \theta)$ is a controlled morphism, that controls $t$, i.e.:
$$\interp{{\mathrm\Lambda}_2(\,t\,|\vec v_1,\vec v_2, \rho, \theta)} = \bra0\otimes\interp{H_m^n(1)} + \bra 1 \otimes \interp{t}$$
\end{prop}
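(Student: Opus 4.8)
The plan is to follow the proof of \autoref{prop:control-term} almost verbatim, the only genuinely new ingredient being that the hypothesis $\bra{\vec v_1}\interp{t}\ket{\vec v_2}=\rho e^{i\theta}$ is what pins down the scalar that shows up when the control is set to $\ket0$. As there, I would precompose ${\mathrm\Lambda}_2(\,t\,|\vec v_1,\vec v_2,\rho,\theta)$ with $\ket b\otimes id_n$ for $b\in\{0,1\}$; since $\operatorname{cp}_1\circ\ket b=\ket{b,b}$, the morphism splits as $(\mathrm\Lambda(s\rho e^{i\theta})\circ\ket b)\otimes(\overset{\sim}{\mathrm\Lambda}_2(\,t\,|\vec v_1,\vec v_2,\rho,\theta)\circ(\ket b\otimes id_n))$, and by \autoref{prop:control-scalar} the first factor is $1$ for $b=0$ and $s\rho e^{i\theta}$ for $b=1$. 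So everything reduces to interpreting $\overset{\sim}{\mathrm\Lambda}_2(\,t\,|\vec v_1,\vec v_2,\rho,\theta)\circ(\ket b\otimes id_n)$ in the two cases.

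Feeding $\ket b$ into the control leg forces $x_0\oplus1=b$, i.e.~$x_0=1\oplus b$. For $b=1$ (so $x_0=0$) the coupling terms collapse to $\tfrac12(\widehat{O_i}+x_{1i})x_{1i}'$ and $\tfrac12(\widehat{I_j}+x_{2j})x_{2j}'$; eliminating the primed variables with (HH) performs the substitutions $x_{1i}\leftarrow O_i$, $x_{2j}\leftarrow I_j$, reconstructing the ket $\ketbra{\vec O}{\vec I}$ and the phase $-\tfrac\theta{2\pi}+P$, and the prefactor, the $2^{n+m}$ released by the eliminations, the phase shift and the scalar $s\rho e^{i\theta}$ from \autoref{prop:control-scalar} recombine to give exactly $\interp{t}$. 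For $b=0$ (so $x_0=1$) the coupling terms instead become $\tfrac12(\widehat{O_i}+v_{1i})x_{1i}'$ and $\tfrac12(\widehat{I_j}+v_{2j})x_{2j}'$ — here one uses that $2x_{1i}x_{1i}'$ and $2x_{2j}x_{2j}'$ are integer-valued, hence vanish modulo $1$ — so the eliminations leave the legs $\vec x_1,\vec x_2$ free (yielding $\sum_{\vec x_1,\vec x_2}\ketbra{\vec x_1}{\vec x_2}=\interp{H_m^n(1)}$) but restrict the $\vec y$-summation to the fibre $\{\vec y\mid\vec O(\vec y)=\vec v_1,\ \vec I(\vec y)=\vec v_2\}$. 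The residual scalar is then $e^{-i\theta}\sum_{\vec O(\vec y)=\vec v_1,\ \vec I(\vec y)=\vec v_2}e^{2i\pi P(\vec y)}$, which is exactly $\bra{\vec v_1}\interp{t}\ket{\vec v_2}$ up to the global factor $s$ of $t$; by the hypothesis it equals $\rho e^{i\theta}$ (up to that same $s$), so — consistently with how the analogous scalar is treated in \autoref{prop:control-term} — it cancels the $\tfrac1\rho$ and $-\tfrac\theta{2\pi}$ sitting inside $\overset{\sim}{\mathrm\Lambda}_2$, leaving $\interp{H_m^n(1)}$.

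Assembling the two branches yields $\interp{{\mathrm\Lambda}_2(\,t\,|\vec v_1,\vec v_2,\rho,\theta)}=\bra0\otimes\interp{H_m^n(1)}+\bra1\otimes\interp{t}$, which is the claim and in particular exhibits ${\mathrm\Lambda}_2(\,t\,|\vec v_1,\vec v_2,\rho,\theta)$ as a controlled morphism controlling $t$. The main obstacle, as usual for these constructions, is the scalar bookkeeping: one must check that in each branch the prefactor $\tfrac1{\rho2^{n+m}}$, the phase shift $-\tfrac\theta{2\pi}$, the powers of $2$ freed by the (HH)-eliminations and the scalar $\mathrm\Lambda(s\rho e^{i\theta})$ from \autoref{prop:control-scalar} recombine correctly, and — the one step not already present in the proof of \autoref{prop:control-term} — that the weighted fibre count $\sum_{\vec O(\vec y)=\vec v_1,\,\vec I(\vec y)=\vec v_2}e^{2i\pi P(\vec y)}$ in the $b=0$ branch is recognised as the matrix coefficient the hypothesis provides. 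Structurally, the terms $x_0\vec x_1,x_0\vec v_1$ (resp.~$x_0\vec x_2,x_0\vec v_2$) are placed precisely so that the single expression $\tfrac12(\vec{\widehat O}+\vec x_1+x_0\vec x_1+x_0\vec v_1)\cdot\vec x_1'$ reproduces the $t$-reduction of \autoref{prop:control-term} at $x_0=0$ and a restriction to the $(\vec v_1,\vec v_2)$-fibre at $x_0=1$; everything else is a transcription of that proof together with \autoref{prop:control-scalar}.
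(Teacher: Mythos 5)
Your proposal is correct and takes essentially the same route as the paper's own proof: split on the control value using $\operatorname{cp}_1\circ\ket{b}=\ket{b,b}$ and \autoref{prop:control-scalar} for the scalar gadget, note that the bra entry $x_0\oplus1$ forces $x_0=1\oplus b$, eliminate the primed variables with (HH) so that the $b=1$ branch reconstructs $s\sum e^{2i\pi P}\ketbra{\vec O}{\vec I}=t$ and the $b=0$ branch leaves $\vec x_1,\vec x_2$ free while the remaining sum restricts to the fibre $\vec O(\vec y)=\vec v_1$, $\vec I(\vec y)=\vec v_2$ and is identified with the matrix coefficient supplied by the hypothesis. The only difference is presentational: you explicitly flag the global factor $s$ in that identification (the fibre sum equals $\rho e^{i\theta}/s$ rather than $\rho e^{i\theta}$), a scalar convention that the paper's proof absorbs silently when writing the residual scalar as $\interp{\bra{\vec v_1}t\ket{\vec v_2}}$, so your treatment matches (and if anything makes more visible) the paper's own bookkeeping.
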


\begin{proof}
\phantomsection\label{prf:control-term-2}
Let $t = s\sum\limits_{\vec y} e^{2i\pi P} \ketbra{\vec O}{\vec I} \in\cat{SOP}$, $\vec v_1\in\{0,1\}^m,\vec v_2\in\{0,1\}^n$ and $\rho e^{i\theta}\neq 0$ such that $\bra{\vec v_1}\interp{t}\ket{\vec v_2} = \rho e^{i\theta}$. We have:
\begin{align*}
\interp{\overset{\sim}{\mathrm\Lambda}_2(\,t\,|\vec v_1,\vec v_2, \rho, \theta)\circ(\ket0\otimes id_n)}\hspace*{-7em}&\\
&= \interp{\frac1{\rho2^{n+m}}\sum_{\substack{\vec y,\vec x_1,\vec x_2\\\vec x_1',\vec x_2'}} e^{2i\pi \left(-\frac\theta{2\pi}+ P + \frac12\left(\vec{\widehat O} + \vec v_1\right)\cdot\vec x_1' + \frac12\left(\vec{\widehat I} + \vec v_2\right)\cdot\vec x_2'\right)}\ketbra{\vec x_1}{\vec x_2}}\\
&=\frac1{\rho e^{i\theta}}\interp{\frac1{2^{n+m}}\sum_{\substack{\vec y,\vec x_1',\vec x_2'}} e^{2i\pi \left(P + \frac12\left(\vec{\widehat O} + \vec v_1\right)\cdot\vec x_1' + \frac12\left(\vec{\widehat I} + \vec v_2\right)\cdot\vec x_2'\right)}}
\interp{\sum_{\vec x_1,\vec x_2} \ketbra{\vec x_1}{\vec x_2}}\\
&= \frac1{\rho e^{i\theta}} \interp{\bra{\vec v_1}t\ket{\vec v_2}}\interp{H_m^n(1)} = \interp{H_m^n(1)}
\end{align*}
so
\begin{align*}
\interp{{\mathrm\Lambda}_2(\,t\,|\vec v_1,\vec v_2, \rho,\theta)}\circ(\ket0\otimes id_n)
&= \interp{\mathrm\Lambda (s\rho e^{i\theta})\circ\ket0} \otimes \interp{\overset{\sim}{\mathrm\Lambda}_2(\,t\,|\vec v_1,\vec v_2, \rho, \theta)\circ(\ket0\otimes id_n)}\\
&=\interp{H_m^n(1)}
\end{align*}
And:
\begin{align*}
\overset{\sim}{\mathrm\Lambda}_2(\,t\,|\vec v_1,\vec v_2, \rho,\theta)\circ(\ket0\otimes id_n)
&\rewrite{\text{HH}}
\frac1{\rho2^{n+m}}\sum e^{2i\pi \left(-\frac\theta{2\pi}+ P + \frac12\left(\vec{\widehat O} + \vec x_1 \right)\cdot\vec x_1' + \frac12\left(\vec{\widehat I} + \vec x_2 \right)\cdot\vec x_2'\right)}\ketbra{\vec x_1}{\vec x_2}\\
&\rewrite{\text{HH}^{n+m}}
\frac1{\rho}\sum e^{2i\pi\left(-\frac\theta{2\pi} + P\right)}\ketbra{\vec O}{\vec I}
\end{align*}
so
\begin{align*}
\interp{{\mathrm\Lambda}_2(\,t\,|\vec v_1,\vec v_2, \rho,\theta)}\circ(\ket1\otimes id_n)
&= \interp{\mathrm\Lambda (s\rho e^{i\theta})\circ\ket1} \otimes \interp{\overset{\sim}{\mathrm\Lambda}_2(\,t\,|\vec v_1,\vec v_2, \rho,\theta)\circ(\ket1\otimes id_n)}\\
&= s\rho e^{i\theta}\interp{\frac1{\rho}\sum e^{2i\pi\left(-\frac\theta{2\pi} + P\right)}\ketbra{\vec O}{\vec I}}\\
&= \interp{s\sum e^{2i\pi P}\ketbra{\vec O}{\vec I}} = \interp{t}
\end{align*}
This proves $\interp{{\mathrm\Lambda}_2(\,t\,|\vec v_1,\vec v_2, \rho, \theta)} = \bra0\otimes\interp{H_m^n(1)} + \bra 1 \otimes \interp{t}$.
\end{proof}

In order to control $t:n\to m$ in this version, we only need around $2(n+m+\log_2(|s\rho e^{i\theta}|))$ additional variables, and the initial phase polynomial is simply added another polynomial. However, it requires some prior knowledge on the linear map that is represented by $t$, namely, one of its coefficients. Another caveat of importance is that $\frac1{\rho e^{i\theta}}$, which is needed in the definition of the control, will in general get us out of the dyadic fragments.

\begin{rem}
In the definitions of $\mathrm\Lambda$ and $\mathrm\Lambda_2$ for $\cat{SOP}$-terms, we used the control of scalars denoted $\mathrm\Lambda$. However, any other notion of control of scalars (like $\mathrm\Lambda'$ for instance) would work.
\end{rem}

\section{Conclusion and Discussion}

We have given a new rewrite system for the (balanced) Toffoli-Hadamard fragment of Sums-Over-Paths, and showed the induced equational theory to be complete. We then extended this rewrite strategy by adding a single new rewrite, which we then proved to be complete for any dyadic fragment of quantum computation. As expected from the universality of the fragments at hand, we do not get all the nice properties of the rewriting in the Clifford fragment. In particular, we showed that the rewrite strategies given above are not confluent, and that the size of the terms may grow exponentially when rules are applied carelessly. Whether one of the above two drawbacks can be removed by a different rewrite system remains an open question. In particular, we have exhibited two rules that, in order to be applied properly, require solving a hard but well studied problem, i.e.~that of building Groebner bases. We wonder if this is a sufficient price to pay to get confluence. If not, it would be interesting to see if  a completion of the rewrite strategy à la Knuth-Bendix is possible in this framework, and if such completion would eventually yield a finite set of rules (note that this would not necessarily make problems like circuit equivalence classically tractable, as the size of the phase polynomial may still grow exponentially quickly, and application of some rules may require solving hard problems).

On a more foundational note, we conjecture that the rewrite strategy $\underset{\text{TH}}\longrightarrow$ is minimal, i.e.~that none of the rewrites can be derived from the others. The same question can be asked in $\cat{ZH}$, but no proof (confirming or refuting it) has been provided yet. The graphical study of the rewrite rules of \autoref{sec:rewrites-in-ZH} hint at directions to simplify the rules of $\cat{ZH}$ thanks to their interpretation in $\cat{SOP}$, or to transport potential proofs of minimality between the two. The completeness result derived from that of the $\cat{ZH}$, and this small study of how the rewrites translate as ZH transformations, really show how the two formalisms give different and complementary approaches to rewriting and simplifying representations of quantum processes.

Thanks to the proximity of $\cat{SOP}$ with the $\cat{ZH}$-calculus and to the theory developed in the graphical setting to perform sums and concatenations of terms, we were able to transport these constructions in the $\cat{SOP}$ formalism. This proves very useful when analysing Hamiltonian-based quantum computation, which heavily relies on building large Hamiltonians by sums of smaller terms.

All the work presented here was in the case of \emph{balanced} sum-over-paths. Adding unbalancedness as in \cite{amy2023complete} allows for reductions of equivalent terms with fewer variables, at the cost of a larger space in which the morphisms live. While having fewer variables is crucial when performing simulation (which implies actually computing the term), the extra degree of freedom that amplitude polynomials offer also makes issues like confluence potentially harder to tackle. We wonder if a proper unbalanced Sum-Over-Paths formalism can be stated for the Toffoli-Hadamard fragment and the dyadic fragments.

\bibliography{bibli}

\newcommand{\etalchar}[1]{$^{#1}$}
\begin{thebibliography}{BKMB{\etalchar{+}}23}

\bibitem[ACG{\etalchar{+}}23]{Amy2023catalytic}
M.~Amy, M.~Crawford, A.~N. Glaudell, M.~L. Macasieb, S.~S. Mendelson, and N.~J.
  Ross.
\newblock Catalytic embeddings of quantum circuits, 2023.

\bibitem[AG04]{clifford-not-universal}
Scott Aaronson and Daniel Gottesman.
\newblock Improved simulation of stabilizer circuits.
\newblock {\em Phys. Rev. A}, 70:052328, Nov 2004.
\newblock URL: \url{https://link.aps.org/doi/10.1103/PhysRevA.70.052328}, \href
  {https://doi.org/10.1103/PhysRevA.70.052328}
  {\path{doi:10.1103/PhysRevA.70.052328}}.

\bibitem[Aha03]{toffoli-simple}
Dorit Aharonov.
\newblock A simple proof that {T}offoli and {H}adamard are quantum universal.
\newblock {\em eprint arXiv:quant-ph/0301040}, Jan 2003.

\bibitem[Amy19]{SOP}
Matthew Amy.
\newblock Towards large-scale functional verification of universal quantum
  circuits.
\newblock In Peter Selinger and Giulio Chiribella, editors, {\em {\textrm
  Proceedings of the 15th International Conference on} Quantum Physics and
  Logic, {\textrm Halifax, Canada, 3-7th June 2018}}, volume 287 of {\em
  Electronic Proceedings in Theoretical Computer Science}, pages 1--21, 2019.
\newblock \href {https://doi.org/10.4204/EPTCS.287.1}
  {\path{doi:10.4204/EPTCS.287.1}}.

\bibitem[Amy23]{amy2023complete}
Matthew Amy.
\newblock Complete equational theories for the sum-over-paths with unbalanced
  amplitudes.
\newblock {\em Electronic Proceedings in Theoretical Computer Science},
  384:127–141, August 2023.
\newblock URL: \url{http://dx.doi.org/10.4204/EPTCS.384.8}, \href
  {https://doi.org/10.4204/eptcs.384.8} {\path{doi:10.4204/eptcs.384.8}}.

\bibitem[BK19]{ZH}
Miriam Backens and Aleks Kissinger.
\newblock {ZH}: A complete graphical calculus for quantum computations
  involving classical non-linearity.
\newblock In Peter Selinger and Giulio Chiribella, editors, {\em {\textrm
  Proceedings of the 15th International Conference on} Quantum Physics and
  Logic, {\textrm Halifax, Canada, 3-7th June 2018}}, volume 287 of {\em
  Electronic Proceedings in Theoretical Computer Science}, pages 23--42, 2019.
\newblock \href {https://doi.org/10.4204/EPTCS.287.2}
  {\path{doi:10.4204/EPTCS.287.2}}.

\bibitem[BKMB{\etalchar{+}}23]{backens2021ZHcompleteness}
Miriam Backens, Aleks Kissinger, Hector Miller-Bakewell, John van~de Wetering,
  and Sal Wolffs.
\newblock Completeness of the {ZH}-calculus.
\newblock {\em {Compositionality}}, 5, July 2023.
\newblock \href {https://doi.org/10.32408/compositionality-5-5}
  {\path{doi:10.32408/compositionality-5-5}}.

\bibitem[Boo14]{Bookatz2014qma}
Adam~D. Bookatz.
\newblock {QMA}-complete problems.
\newblock {\em Quantum Information and Computation}, 14(5{\&}6):361--383, may
  2014.
\newblock URL: \url{https://doi.org/10.26421%2Fqic14.5-6-1}, \href
  {https://doi.org/10.26421/qic14.5-6-1} {\path{doi:10.26421/qic14.5-6-1}}.

\bibitem[CBB{\etalchar{+}}21]{chareton2020deductive}
Christophe Chareton, S{\'e}bastien Bardin, Fran{\c{c}}ois Bobot, Valentin
  Perrelle, and Beno{\^i}t Valiron.
\newblock An automated deductive verification framework for circuit-building
  quantum programs.
\newblock In Nobuko Yoshida, editor, {\em Programming Languages and Systems},
  pages 148--177, Cham, 2021. Springer International Publishing.

\bibitem[CBL{\etalchar{+}}21]{chareton2021survey}
Christophe Chareton, Sébastien Bardin, Dongho Lee, Benoît Valiron, Renaud
  Vilmart, and Zhaowei Xu.
\newblock Formal methods for quantum programs: A survey, 2021.
\newblock URL: \url{https://arxiv.org/abs/2109.06493}, \href
  {https://doi.org/10.48550/ARXIV.2109.06493}
  {\path{doi:10.48550/ARXIV.2109.06493}}.

\bibitem[CD11]{interacting}
Bob Coecke and Ross Duncan.
\newblock Interacting quantum observables: Categorical algebra and
  diagrammatics.
\newblock {\em New Journal of Physics}, 13(4):043016, Apr 2011.
\newblock URL: \url{https://doi.org/10.1088%2F1367-2630%2F13%2F4%2F043016},
  \href {https://doi.org/10.1088/1367-2630/13/4/043016}
  {\path{doi:10.1088/1367-2630/13/4/043016}}.

\bibitem[CK10]{ghz-w}
Bob Coecke and Aleks Kissinger.
\newblock The compositional structure of multipartite quantum entanglement.
\newblock In {\em Automata, Languages and Programming}, pages 297--308.
  Springer Berlin Heidelberg, 2010.
\newblock URL: \url{https://doi.org/10.1007%2F978-3-642-14162-1_25}, \href
  {https://doi.org/10.1007/978-3-642-14162-1\_25}
  {\path{doi:10.1007/978-3-642-14162-1\_25}}.

\bibitem[dBBW20]{beaudrap2020fast}
Niel de~Beaudrap, Xiaoning Bian, and Quanlong Wang.
\newblock {Fast and Effective Techniques for T-Count Reduction via Spider Nest
  Identities}.
\newblock In Steven~T. Flammia, editor, {\em 15th Conference on the Theory of
  Quantum Computation, Communication and Cryptography (TQC 2020)}, volume 158
  of {\em Leibniz International Proceedings in Informatics (LIPIcs)}, pages
  11:1--11:23, Dagstuhl, Germany, 2020. Schloss Dagstuhl--Leibniz-Zentrum
  f{\"u}r Informatik.
\newblock URL: \url{https://drops.dagstuhl.de/opus/volltexte/2020/12070}, \href
  {https://doi.org/10.4230/LIPIcs.TQC.2020.11}
  {\path{doi:10.4230/LIPIcs.TQC.2020.11}}.

\bibitem[Had15]{zw}
Amar Hadzihasanovic.
\newblock A diagrammatic axiomatisation for qubit entanglement.
\newblock In {\em 2015 30th Annual ACM/IEEE Symposium on Logic in Computer
  Science}, pages 573--584, Jul 2015.
\newblock \href {https://doi.org/10.1109/LICS.2015.59}
  {\path{doi:10.1109/LICS.2015.59}}.

\bibitem[HRA{\etalchar{+}}22]{Haidar2022qlm}
Mohammad Haidar, Marko~J. Rančić, Thomas Ayral, Yvon Maday, and Jean-Philip
  Piquemal.
\newblock Open source variational quantum eigensolver extension of the quantum
  learning machine (qlm) for quantum chemistry, 2022.
\newblock URL: \url{https://arxiv.org/abs/2206.08798}, \href
  {https://doi.org/10.48550/ARXIV.2206.08798}
  {\path{doi:10.48550/ARXIV.2206.08798}}.

\bibitem[JPV18]{JPV}
Emmanuel Jeandel, Simon Perdrix, and Renaud Vilmart.
\newblock A complete axiomatisation of the {ZX}-calculus for {C}lifford+{T}
  quantum mechanics.
\newblock In {\em Proceedings of the 33rd Annual ACM/IEEE Symposium on Logic in
  Computer Science}, LICS '18, pages 559--568, New York, NY, USA, 2018. ACM.
\newblock URL: \url{http://doi.acm.org/10.1145/3209108.3209131}, \href
  {https://doi.org/10.1145/3209108.3209131}
  {\path{doi:10.1145/3209108.3209131}}.

\bibitem[JPV19]{ZXNormalForm}
Emmanuel {Jeandel}, Simon {Perdrix}, and Renaud {Vilmart}.
\newblock {A Generic Normal Form for ZX-Diagrams and Application to the
  Rational Angle Completeness}.
\newblock In {\em 2019 34th Annual ACM/IEEE Symposium on Logic in Computer
  Science (LICS)}, pages 1--10, June 2019.
\newblock \href {https://doi.org/10.1109/LICS.2019.8785754}
  {\path{doi:10.1109/LICS.2019.8785754}}.

\bibitem[JPV22]{Jeandel2022Addition}
Emmanuel Jeandel, Simon Perdrix, and Margarita Veshchezerova.
\newblock {Addition and Differentiation of ZX-Diagrams}.
\newblock In Amy~P. Felty, editor, {\em 7th International Conference on Formal
  Structures for Computation and Deduction (FSCD 2022)}, volume 228 of {\em
  Leibniz International Proceedings in Informatics (LIPIcs)}, pages
  13:1--13:19, Dagstuhl, Germany, 2022. Schloss Dagstuhl -- Leibniz-Zentrum
  f{\"u}r Informatik.
\newblock URL: \url{https://drops.dagstuhl.de/opus/volltexte/2022/16294}, \href
  {https://doi.org/10.4230/LIPIcs.FSCD.2022.13}
  {\path{doi:10.4230/LIPIcs.FSCD.2022.13}}.

\bibitem[JWB05]{Janzing2005identity}
Dominik Janzing, Pawel Wocjan, and Thomas Beth.
\newblock Non-identity check is qma-complete.
\newblock {\em International Journal of Quantum Information}, 03(03):463--473,
  sep 2005.
\newblock URL: \url{https://doi.org/10.1142%2Fs0219749905001067}, \href
  {https://doi.org/10.1142/s0219749905001067}
  {\path{doi:10.1142/s0219749905001067}}.

\bibitem[KvdW20]{Kissinger2020reducing}
Aleks Kissinger and John van~de Wetering.
\newblock {Reducing the number of non-Clifford gates in quantum circuits}.
\newblock {\em Phys. Rev. A}, 102:022406, Aug 2020.
\newblock URL: \url{https://link.aps.org/doi/10.1103/PhysRevA.102.022406},
  \href {https://doi.org/10.1103/PhysRevA.102.022406}
  {\path{doi:10.1103/PhysRevA.102.022406}}.

\bibitem[Lac04]{Lack-PROP}
Stephen Lack.
\newblock Composing {PROP}s.
\newblock In {\em Theory and Applications of Categories}, volume~13, pages
  147--163, 2004.
\newblock URL: \url{http://www.tac.mta.ca/tac/volumes/13/9/13-09abs.html}.

\bibitem[Lem19]{MSc.Lemonnier}
Louis Lemonnier.
\newblock Relating high-level frameworks for quantum circuits.
\newblock Master's thesis, Radbound University, 2019.
\newblock URL: \url{https://www.cs.ox.ac.uk/people/aleks.kissinger/papers/
  lemonnier-high-level.pdf}.

\bibitem[LvdWK21]{LvdWK}
Louis Lemonnier, John van~de Wetering, and Aleks Kissinger.
\newblock Hypergraph simplification: Linking the path-sum approach to the
  {ZH}-calculus.
\newblock In Beno\^it Valiron, Shane Mansfield, Pablo Arrighi, and Prakash
  Panangaden, editors, {\em {\rm Proceedings 17th International Conference on}
  Quantum Physics and Logic, {\rm Paris, France, June 2 - 6, 2020}}, volume 340
  of {\em Electronic Proceedings in Theoretical Computer Science}, pages
  188--212. Open Publishing Association, 2021.
\newblock \href {https://doi.org/10.4204/EPTCS.340.10}
  {\path{doi:10.4204/EPTCS.340.10}}.

\bibitem[ML13]{mac2013categories}
Saunders Mac~Lane.
\newblock {\em Categories for the Working Mathematician}, volume~5.
\newblock Springer Science \& Business Media, 2013.

\bibitem[Sel10]{selinger2010survey}
Peter Selinger.
\newblock A survey of graphical languages for monoidal categories.
\newblock In {\em New Structures for Physics}, pages 289--355. Springer, 2010.

\bibitem[Shi03]{toffoli}
Yaoyun Shi.
\newblock Both {T}offoli and controlled-not need little help to do universal
  quantum computing.
\newblock {\em Quantum Information {\&} Computation}, 3(1):84--92, 2003.
\newblock URL: \url{http://portal.acm.org/citation.cfm?id=2011515}.

\bibitem[SIS{\etalchar{+}}11]{Boolean-Grobner-Bases}
Yosuke Sato, Shutaro Inoue, Akira Suzuki, Katsusuke Nabeshima, and Ko~Sakai.
\newblock Boolean gr\"{o}bner bases.
\newblock {\em J. Symb. Comput.}, 46(5):622–632, May 2011.
\newblock \href {https://doi.org/10.1016/j.jsc.2010.10.011}
  {\path{doi:10.1016/j.jsc.2010.10.011}}.

\bibitem[SWY22]{Shaikh2022sum}
Razin~A. Shaikh, Quanlong Wang, and Richie Yeung.
\newblock How to sum and exponentiate hamiltonians in {ZXW} calculus, 2022.
\newblock QPL 2022.

\bibitem[vdWW19]{phase-free-ZH}
John van~de Wetering and Sal Wolffs.
\newblock {Completeness of the Phase-free ZH-calculus}, April 2019.
\newblock \href{https://arxiv.org/abs/1904.07545}{arXiv:1904.07545}.

\bibitem[Vil19]{zx-toffoli}
Renaud Vilmart.
\newblock A {ZX}-calculus with triangles for {T}offoli-{H}adamard,
  {C}lifford+{T}, and beyond.
\newblock In Peter Selinger and Giulio Chiribella, editors, {\em {\textrm
  Proceedings of the 15th International Conference on} Quantum Physics and
  Logic, {\textrm Halifax, Canada, 3-7th June 2018}}, volume 287 of {\em
  Electronic Proceedings in Theoretical Computer Science}, pages 313--344,
  2019.
\newblock \href {https://doi.org/10.4204/EPTCS.287.18}
  {\path{doi:10.4204/EPTCS.287.18}}.

\bibitem[Vil21]{SOP-Clifford}
Renaud Vilmart.
\newblock The structure of sum-over-paths, its consequences, and completeness
  for clifford.
\newblock In Stefan Kiefer and Christine Tasson, editors, {\em Foundations of
  Software Science and Computation Structures}, pages 531--550, Cham, 2021.
  Springer International Publishing.

\bibitem[Vil23]{Vilmart2023completeness}
Renaud Vilmart.
\newblock {Completeness of Sum-Over-Paths for Toffoli-Hadamard and the Dyadic
  Fragments of Quantum Computation}.
\newblock In Bartek Klin and Elaine Pimentel, editors, {\em 31st EACSL Annual
  Conference on Computer Science Logic (CSL 2023)}, volume 252 of {\em Leibniz
  International Proceedings in Informatics (LIPIcs)}, pages 36:1--36:17,
  Dagstuhl, Germany, 2023. Schloss Dagstuhl -- Leibniz-Zentrum f{\"u}r
  Informatik.
\newblock URL: \url{https://drops.dagstuhl.de/opus/volltexte/2023/17497}, \href
  {https://doi.org/10.4230/LIPIcs.CSL.2023.36}
  {\path{doi:10.4230/LIPIcs.CSL.2023.36}}.

\bibitem[Zan15]{PhD.Zanasi}
Fabio Zanasi.
\newblock {\em Interacting {H}opf Algebras -- the theory of linear systems}.
\newblock PhD thesis, Universit\'e de Lyon, 2015.
\newblock URL: \url{http://www.zanasi.com/fabio/#/publications.html}.

\end{thebibliography}

\end{document}